\newcommand\SLOCC{\mathtt{SLOCC}}
\newcommand\W{\mathtt{W}}
\newcommand\GHZ{\mathtt{GHZ}}
\newcommand\swap{\mathtt{SWAP}}
\newcommand\cnot{\mathtt{CNOT}}
\newcommand\cX{c-\mathtt{X}}
\newcommand\cZ{c-\mathtt{Z}}
\newcommand\PauliX{ Pauli-$\mathtt X$ }
\newcommand\PauliY{ Pauli-$\mathtt Y$ }
\newcommand\PauliZ{ Pauli-$\mathtt Z$ }
\newcommand\XX{ \mathtt{X} }
\newcommand\ZZ{ \mathtt{Z} }
\newcommand\HH{\mathcal{H}}
\newcommand\R{\mathbb{R}}
\newcommand\C{\mathbb{C}}
\newcommand\F{\mathbb{F}}
\newcommand\ee{\mathrm e}
\newcommand\ii{\mathrm i}
\newcommand\SYM[1][n]{\mathfrak{S}_{#1}}
\newcommand\XG[1][n]{\left<\cnot\right>_{#1}}
\newcommand\SL[1][n]{\mathrm{SL}_{#1}(\mathbb{F}_2)}
\newcommand\GL[1][n]{\mathrm{GL}_{#1}(\mathbb{F}_2)}
\newcommand\SG[1][n]{\mathcal{S}_{#1}}
\renewcommand\phi{\varphi}
\renewcommand\epsilon{\varepsilon}
\renewcommand\geq{\geqslant}
\renewcommand\leq{\leqslant}
\newtheorem{theo}{Theorem}
\newtheorem{prop}[theo]{Proposition}
\newtheorem{cor}[theo]{Corollary}
\newtheorem{lem}[theo]{Lemma}
\newtheorem{algo}[theo]{Algorithm}
\newtheorem{conj}[theo]{Conjecture}
\begin{document}
\floatstyle{boxed} 
\restylefloat{figure}
\title{Quantum circuits of $\cnot$ gates }

\author{Marc Bataille \\ marc.bataille1@univ-rouen.fr \\
  \\ LITIS laboratory, Universit\'e Rouen-Normandie \thanks{685 Avenue de l'Universit\'e, 76800 Saint-\'Etienne-du-Rouvray. France.}}

\date{}

  \maketitle
\begin{abstract}
  We study in detail the algebraic structures underlying quantum circuits generated by $\cnot$ gates. Our results allow us to propose polynomial-time heuristics to reduce the number of gates used in a given $\cnot$ circuit  and we also give algorithms to optimize this type of circuits in some particular cases. Finally we show how to create some usefull entangled states when a $\cnot$ circuit acts on a fully factorized state.
\end{abstract}

\section{Introduction}

The controlled \PauliX gate,  also called the $\cnot$ gate, is a very common and usefull gate in quantum circuits. This gate involves two qubits $i$ and $j$ in a $n$-qubit system.
One of the two qubits (say qubit $i$) is the target qubit whereas the other qubit  plays the role of control. When the control qubit $j$ is in the state
$\ket{1}$ then a \PauliX gate (\emph{i.e.} a $\mathtt{NOT}$ gate) is applied to the target qubit $i$ which  gets flipped. 
When qubit $j$ is in the state $\ket{0}$ nothing happens to qubit $i$.

Actually,  $\cnot$ gates are of crucial importance in the fields of Quantum Computation and Quantum Information. Indeed, it appears that single qubit unitary gates together with $\cnot$ gates constitute an universal set for quantum computation :  any arbitrary unitary operation on a $n$-qubit system can be implemented using only $\cnot$ gates and single qubit unitary gates (see \cite[Section 4.5.2]{NCI2011}  for a complete proof of this important result). As a consequence, many multiple-qubit gates are implemented in current experimental quantum machines using $\cnot$ gates plus other single-qubit gates. In Figure \ref{equi} we give a few classical examples of such an implementation : a $\swap$ gate can be simulated using 3 $\cnot$ gates, a controlled Pauli-Z gate can be implemented by means of 2 Hadamard single-qubit gates and one $\cnot$ gate. These implementations are used for instance in the IBM superconducting transmon device  (\url{www.ibm.com/quantum-computing/}).\medskip

From this universality result it is possible to show that any unitary operation can be approximated to arbitrary accuracy using $\cnot$ gates together with Hadamard, Phase, and  $\pi/8$  gates (see Figure \ref{univers} for a definition of these gates and \cite[Section 4.5.3]{NCI2011} for a proof of this result). This discrete set of gates is often called the standard set of universal gates. Using a discrete set of gates brings a great advantage in terms of reliability because it is possible to apply these gates in an error-resistant way
through the use of quantum error-correcting codes (again refer to \cite[Chapter 10]{NCI2011} for further information). Currently, important error rates in $\cnot$ gates implemented on experimental quantum computers are one of the main causes of their unreliability (see  \cite{2017Monroe,2019Wright} and Table \ref{errors}). So the ability to correct errors on quantum circuits is a key point for successfully building a functional and reliable quantum computer. A complementary approach to the Quantum Error-Correction is to improve reliability by minimizing the number of gates and particularly the number of two-qubit gates used in a given circuit. Our work takes place in this context : we show that a precise understanding of the algebraic structures underlying circuits of $\cnot$ gates makes it possible to reduce and, in some special cases, to minimize the number of gates in these circuits.\medskip

In this paper we also study the emergence of entanglement in  $\cnot$ circuits and we show that these circuits are a convenient tool for creating many types of entangled states. These particular states of a quantum system were first mentionned by Einstein, Podolsky and Rosen in their famous EPR article of 1935 \cite{EPR} and in Quantum Information Theory (QIT) they can be considered as a fundamental physical resource  (see e.g.  \cite{HHHH2009}). Entangled states turn out to be essential in many areas of QIT such as quantum error correcting codes \cite{2007GW}, quantum key distribution \cite{1991Ekert} or quantum secret sharing \cite{1999CGL}. Spectacular applications such as super-dense coding \cite{1992BWS} or quantum teleportation \cite{1994Vaidman,1998FSBFKP,2015PESWFB} are based on the use of classical entangled states. This significant role played by entanglement represents for us a good reason to understand how $\cnot$ circuits can be used to create entangled states.\medskip

The paper is structured  as follows. Section \ref{background} is mainly a background section where we recall some classical notions in order to guide non-specialist
readers. In Section \ref{algebra} we investigate the algebraic structure of the group generated by $\cnot$ gates.
Section \ref{general} will be dedicated to the optimization problem of $\cnot$ circuits in the general case while in Section \ref{subgroups} we deal with optimization in particular subgroups of the group generated by $\cnot$ gates.
Finally in Section \ref{entanglement} we study how to use $\cnot$ circuits to create certain usefull entangled states.

\begin{table}[h]
  \begin{center}
  $\begin{array}{|c|c|c|c|c|c|}\hline
    \cnot_{ij}&0&1&2&3&4\\\hline
    0&&1.035\times 10^{-2}&&& \\\hline
    1&1.035\times 10^{-2}&&9.658\times 10^{-3}&& \\\hline
    2&&9.658\times 10^{-3}&&9.054\times 10^{-3}& \\\hline
    3&&&9.054\times 10^{-3}&&8.838\times 10^{-3}\\\hline
    4&&&&8.838\times 10^{-3}&\\\hline\hline
     \mathtt{H}_i&2.656\times 10^{-4}&3.675\times 10^{-4}&2.581\times 10^{-4}&3.572\times 10^{-4}&2.831\times 10^{-4}\\\hline
   \end{array}$
 \end{center} 
   { \caption{ Error rates for $\cnot$ gates acting on qubits i and j and for Hadamard gates acting on qubit i, where $i,j\in\{0,1,2,3,4\}$.
     The data comes from ibmq\_rome 5-qubit quantum computer after a calibration on May 12, 2020 and is publicly avalaible at \url{www.ibm.com/quantum-computing/}.\label{errors}} }
\end{table}

\section{Quantum circuits, $\cnot$ and $\swap$ gates\label{background}}

We recall here some definitions and basic facts about quantum circuits and $\cnot$ gates. We also introduce some notations
used in this paper and we add some developments about $\swap$ gates circuits. For a comprehensive introduction to quantum circuits the reader may refer to \cite[Chapter 4]{NCI2011}
or, for a shorter but self-contained introduction), to our last article \cite{MBJGL2019}.

In QIT, a qubit is a quantum state that represents the basic information
storage unit. This state is described by a ket vector in the Dirac notation $\ket{\psi} = a_0 \ket{0} + a_1\ket{1}$ where $a_0$ and $a_1$ are complex numbers
such that $|a_0|^2 + |a_1|^2= 1$. The value of $|a_i|^2$ represents the probability that measurement produces
the value $i$. The states $\ket{0}$ and $\ket{1}$ form a basis of the Hilbert space $\HH\simeq \C^2$ where a one qubit quantum system evolves.

Operations on qubits must preserve the norm and are therefore described by unitary operators.
In Quantum Computation, these operations are represented by quantum gates and a quantum circuit is a conventional representation of the sequence  of quantum gates applied to the qubit register over time. In Figure \ref{single} we recall the definition of the Pauli gates : notice that the states $\ket{0}$
and $\ket{1}$ are eigenvectors of the \PauliZ operator respectively associated to the eigenvalues 1 and -1, \emph{i.e.} $\ZZ\ket{0}=\ket{0}$ and $\ZZ\ket{1}=-\ket{1}$.
Hence the computational standard basis $(\ket{0},\ket{1})$ is also called the Z-basis.
\begin{figure}[h]
	\begin{center}
	\begin{tikzpicture}[scale=1.500000,x=1pt,y=1pt]
\filldraw[color=white] (0.000000, -7.500000) rectangle (24.000000, 7.500000);
\draw[color=black] (0.000000,0.000000) -- (24.000000,0.000000);
\draw[color=black] (0.000000,0.000000) node[left] {\PauliX\ \ };
\draw (45.00, 0.00) node {$\left[\begin{array}{cc}0&1\\1&0\end{array}\right]$};
\begin{scope}
\draw[fill=white] (12.000000, -0.000000) +(-45.000000:8.485281pt and 8.485281pt) -- +(45.000000:8.485281pt and 8.485281pt) -- +(135.000000:8.485281pt and 8.485281pt) -- +(225.000000:8.485281pt and 8.485281pt) -- cycle;
\clip (12.000000, -0.000000) +(-45.000000:8.485281pt and 8.485281pt) -- +(45.000000:8.485281pt and 8.485281pt) -- +(135.000000:8.485281pt and 8.485281pt) -- +(225.000000:8.485281pt and 8.485281pt) -- cycle;
\draw (12.000000, -0.000000) node {$X$};
\end{scope}

\end{tikzpicture}

	\begin{tikzpicture}[scale=1.500000,x=1pt,y=1pt]
\filldraw[color=white] (0.000000, -7.500000) rectangle (24.000000, 7.500000);
\draw[color=black] (0.000000,0.000000) -- (24.000000,0.000000);
\draw[color=black] (0.000000,0.000000) node[left] {\PauliY\ \ };
\draw (45.00, 0.00) node {$\left[\begin{array}{cc}0&-i\\i&0\end{array}\right]$};
\begin{scope}
\draw[fill=white] (12.000000, -0.000000) +(-45.000000:8.485281pt and 8.485281pt) -- +(45.000000:8.485281pt and 8.485281pt) -- +(135.000000:8.485281pt and 8.485281pt) -- +(225.000000:8.485281pt and 8.485281pt) -- cycle;
\clip (12.000000, -0.000000) +(-45.000000:8.485281pt and 8.485281pt) -- +(45.000000:8.485281pt and 8.485281pt) -- +(135.000000:8.485281pt and 8.485281pt) -- +(225.000000:8.485281pt and 8.485281pt) -- cycle;
\draw (12.000000, -0.000000) node {$Y$};
\end{scope}

\end{tikzpicture}

	\begin{tikzpicture}[scale=1.500000,x=1pt,y=1pt]
\filldraw[color=white] (0.000000, -7.500000) rectangle (24.000000, 7.500000);
\draw[color=black] (0.000000,0.000000) -- (24.000000,0.000000);
\draw[color=black] (0.000000,0.000000) node[left] {\PauliZ\ \ };
\draw (45.00, 0.00) node {$\left[\begin{array}{cc}1&0\\0&-1\end{array}\right]$};
\begin{scope}
\draw[fill=white] (12.000000, -0.000000) +(-45.000000:8.485281pt and 8.485281pt) -- +(45.000000:8.485281pt and 8.485281pt) -- +(135.000000:8.485281pt and 8.485281pt) -- +(225.000000:8.485281pt and 8.485281pt) -- cycle;
\clip (12.000000, -0.000000) +(-45.000000:8.485281pt and 8.485281pt) -- +(45.000000:8.485281pt and 8.485281pt) -- +(135.000000:8.485281pt and 8.485281pt) -- +(225.000000:8.485281pt and 8.485281pt) -- cycle;
\draw (12.000000, -0.000000) node {$Z$};
\end{scope}

\end{tikzpicture}\vspace{-4mm}

{ \caption{ The Pauli gates \label{single}}}
\end{center}
\end{figure}

A quantum system of two qubits $A$ and $B$ (also called  a two-qubit register) lives in a 4-dimensional Hilbert space $\HH_A\otimes\HH_B$ and the computational basis of this space is :
$(\ket{00}=\ket{0}_A\otimes\ket{0}_B,\ket{01}=\ket{0}_A\otimes\ket{1}_B,\ket{10}=\ket{1}_A\otimes\ket{0}_B,\ket{11}=\ket{1}_A\otimes\ket{1}_B)$.
If $U$ is any unitary operator acting on one qubit, a controlled-$U$ operator acts on the Hilbert space $\HH_{A}\otimes\HH_{B}$ as follows.
One of the two qubits (say qubit $A$) is the control qubit whereas the other qubit is the target qubit. If the control qubit $A$ is in the state $\ket 1$ then $U$ is applied on the target qubit $B$ and if qubit $A$ is in the state $\ket{0}$ nothing is done on qubit $B$. If $U$ is the \PauliX operator then $\cnot$ is nothing more than the controlled-$\XX$ operator
(also denoted by $\cX$) with control on qubit $A$ and target on qubit $B$. So the action of $\cnot$ on the two-qubit register is described by :
$\cnot\ket{00}=\ket{00}, \cnot\ket{01}=\ket{01}, \cnot\ket{10}=\ket{11}, \cnot\ket{11}=\ket{10}$ (the corresponding matrix is given in Figure \ref{univers}).
Notice that the action of the $\cnot$ gate on the system can be sum up by the following simple formula :
\begin{equation}\label{cnotxy}
  \forall x,y\in\{0,1\}, \ \cnot\ket{xy}=\ket{x,x \oplus y}
\end{equation}
where $\oplus$ denotes the XOR operator between two bits which is also the addition in $\F_2$. To emphasize that qubit $A$ is the control and $B$ is the target, $\cnot$ is also denoted $\cnot_{AB}$. So interchanging the roles played by $A$ and $B$ yields another operator denoted by $\cnot_{BA}$.
\begin{equation}
\forall x,y\in\{0,1\}, \ \cnot_{BA}\ket{xy}=\ket{x\oplus y,y}\label{cnotBA}
\end{equation}

\begin{figure}[h]
\ \ \ CNOT\ :\ \raisebox{-7mm}{
 \begin{tikzpicture}[scale=1.500000,x=1pt,y=1pt]
\filldraw[color=white] (0.000000, -7.500000) rectangle (18.000000, 22.500000);
\draw[color=black] (0.000000,15.000000) -- (18.000000,15.000000);
\draw[color=black] (0.000000,15.000000) node[left] {$A$};
\draw[color=black] (0.000000,0.000000) -- (18.000000,0.000000);
\draw[color=black] (0.000000,0.000000) node[left] {$B$};
\draw (9.000000,15.000000) -- (9.000000,0.000000);
\begin{scope}
\draw[fill=white] (9.000000, 0.000000) circle(3.000000pt);
\clip (9.000000, 0.000000) circle(3.000000pt);
\draw (6.000000, 0.000000) -- (12.000000, 0.000000);
\draw (9.000000, -3.000000) -- (9.000000, 3.000000);
\end{scope}
\filldraw (9.000000, 15.000000) circle(1.500000pt);
\end{tikzpicture}
} 
\ $\cnot_{AB}=\begin{bmatrix}
  1&0&0&0\\
  0&1&0&0\\
  0&0&0&1\\
  0&0&1&0  
\end{bmatrix}$
\hspace{10mm} Phase :\quad \raisebox{-3mm}{
\begin{tikzpicture}[scale=1.500000,x=1pt,y=1pt]
\filldraw[color=white] (0.000000, -7.500000) rectangle (24.000000, 7.500000);
\draw[color=black] (0.000000,0.000000) -- (24.000000,0.000000);
\begin{scope}
\draw[fill=white] (12.000000, -0.000000) +(-45.000000:8.485281pt and 8.485281pt) -- +(45.000000:8.485281pt and 8.485281pt) -- +(135.000000:8.485281pt and 8.485281pt) -- +(225.000000:8.485281pt and 8.485281pt) -- cycle;
\clip (12.000000, -0.000000) +(-45.000000:8.485281pt and 8.485281pt) -- +(45.000000:8.485281pt and 8.485281pt) -- +(135.000000:8.485281pt and 8.485281pt) -- +(225.000000:8.485281pt and 8.485281pt) -- cycle;
\draw (12.000000, -0.000000) node {S};
\end{scope}
\end{tikzpicture}
}\quad 
$\mathtt{S}=\begin{bmatrix}1&0\\0&i\end{bmatrix}$
\vspace{3mm}

\raisebox{-3mm}{
\begin{tikzpicture}[scale=1.500000,x=1pt,y=1pt]
\filldraw[color=white] (0.000000, -7.500000) rectangle (24.000000, 7.500000);
\draw[color=black] (0.000000,0.000000) -- (24.000000,0.000000);
\draw[color=black] (0.000000,0.000000) node[left] {Hadamard\ :\ \ \ };
\begin{scope}
\draw[fill=white] (12.000000, -0.000000) +(-45.000000:8.485281pt and 8.485281pt) -- +(45.000000:8.485281pt and 8.485281pt) -- +(135.000000:8.485281pt and 8.485281pt) -- +(225.000000:8.485281pt and 8.485281pt) -- cycle;
\clip (12.000000, -0.000000) +(-45.000000:8.485281pt and 8.485281pt) -- +(45.000000:8.485281pt and 8.485281pt) -- +(135.000000:8.485281pt and 8.485281pt) -- +(225.000000:8.485281pt and 8.485281pt) -- cycle;
\draw (12.000000, -0.000000) node {$H$};
\end{scope}
\end{tikzpicture}
}\ \ 
$\mathtt{H}=\frac{1}{\sqrt2}\begin{bmatrix}1&1\\1&-1\end{bmatrix}$
\hspace{18mm}
$\pi/8 : \quad$\raisebox{-3mm}{
\begin{tikzpicture}[scale=1.500000,x=1pt,y=1pt]
\filldraw[color=white] (0.000000, -7.500000) rectangle (24.000000, 7.500000);
\draw[color=black] (0.000000,0.000000) -- (24.000000,0.000000);
\begin{scope}
\draw[fill=white] (12.000000, -0.000000) +(-45.000000:8.485281pt and 8.485281pt) -- +(45.000000:8.485281pt and 8.485281pt) -- +(135.000000:8.485281pt and 8.485281pt) -- +(225.000000:8.485281pt and 8.485281pt) -- cycle;
\clip (12.000000, -0.000000) +(-45.000000:8.485281pt and 8.485281pt) -- +(45.000000:8.485281pt and 8.485281pt) -- +(135.000000:8.485281pt and 8.485281pt) -- +(225.000000:8.485281pt and 8.485281pt) -- cycle;
\draw (12.000000, -0.000000) node {T};
\end{scope}
\end{tikzpicture}
}\quad
$\mathtt{T}=\begin{bmatrix}1&0\\0&\ee^{\frac{\ii\pi}{4}}\end{bmatrix}$

{ \caption{ The standard set of universal gates : names, circuit symbols and matrices. \label{univers}}}
\end{figure}

Another useful and common two-qubit quantum gate is the $\swap$ operator whose action on a basis state vector $\ket{xy}$ is given by : $\swap\ket{xy}=\ket{yx}$.
The interesting point is that a $\swap$ gate can be simulated using 3 $\cnot$ gates :
\begin{equation}\label{swap3cnot}
\swap=\cnot_{AB}\cnot_{BA}\cnot_{AB}=\cnot_{BA}\cnot_{AB}\cnot_{BA}
\end{equation}
This identity can be proved easily using equations \eqref{cnotxy} and \eqref{cnotBA}.
\medskip

The length of a quantum circuit is the number of gates composing this circuit and 
we say that two circuits are \emph{equivalent} if their action on the basis state vectors is the same \emph{in theory}, \textit{i.e.} the two circuits represents 
the same unitary operator.  So Identity \eqref{swap3cnot} is a proof of 
equivalence \eqref{swap} in Figure \ref{equi}. This equivalence can be used to implement a $\swap$ gate from  3 $\cnot$ gates on an actual quantum machine. 
Two equivalent circuits generaly do not have the same length. We use the word \emph{equivalent} instead of the word \emph{equal} to emphasize the fact that, due to the errors rate on gates in all current implementations (see Figure \ref{errors}), two equivalent circuits acting on two qubits registers in the same input state will probably not produce \emph{in practice} the same output state. This experimental fact is an important technical problem and a good motivation to work on quantum circuits optimization.

Equivalence \eqref{HHXHH} in Figure \ref{equi} is usefull in practice because the $\cnot_{BA}$ gate may not be native in some implementations, so we need  to simulate it : this can be done using the native gate $\cnot_{AB}$ plus 4 Hadamard gates. Finally another classical equivalence of quantum circuits involves the $\cnot$ gate with the controlled \PauliZ gate (also denoted by $\cZ$) and the Hadamard gate (equivalence \eqref{cZ} in Figure \ref{equi}). We already mentionned it in the introduction  to emphasize the ubiquity of the $\cnot$ gates in quantum circuits as well as the importance of the universality theorem.  Using the definitions of the \PauliZ gate (Figure \ref{single}) and of a controlled-U gate, it is easy to check that the action of the $\cZ$ gate on a basis state vector is defined by
\begin{equation}\label{cZdef}
\forall x,y\in\{0,1\},\   \cZ \ket{xy}=(-1)^{xy}\ket{xy}
\end{equation}
and that it is invariant by switching the control and the target qubits. Equivalence \eqref{cZ} can be proved using the definition of the Hadamard gate (Figure \ref{single}) and Relation \eqref{cZdef}. 

\begin{figure}
  \begin{equation}\label{swap}
    \swap : \ \raisebox{-5mm}{
      \begin{tikzpicture}[scale=1.200000,x=1pt,y=1pt]
\filldraw[color=white] (0.000000, -7.500000) rectangle (180.000000, 22.500000);
\draw[color=black] (0.000000,15.000000) -- (180.000000,15.000000);
\draw[color=black] (0.000000,15.000000) node[left] {$A$};
\draw[color=black] (0.000000,0.000000) -- (180.000000,0.000000);
\draw[color=black] (0.000000,0.000000) node[left] {$B$};
\draw (9.000000,15.000000) -- (9.000000,0.000000);
\begin{scope}
\draw (6.878680, 12.878680) -- (11.121320, 17.121320);
\draw (6.878680, 17.121320) -- (11.121320, 12.878680);
\end{scope}
\begin{scope}
\draw (6.878680, -2.121320) -- (11.121320, 2.121320);
\draw (6.878680, 2.121320) -- (11.121320, -2.121320);
\end{scope}
\draw[fill=white,color=white] (24.000000, -6.000000) rectangle (39.000000, 21.000000);
\draw (31.500000, 7.500000) node {$\sim$};
\draw (54.000000,15.000000) -- (54.000000,0.000000);
\begin{scope}
\draw[fill=white] (54.000000, 0.000000) circle(3.000000pt);
\clip (54.000000, 0.000000) circle(3.000000pt);
\draw (51.000000, 0.000000) -- (57.000000, 0.000000);
\draw (54.000000, -3.000000) -- (54.000000, 3.000000);
\end{scope}
\filldraw (54.000000, 15.000000) circle(1.500000pt);
\draw (72.000000,15.000000) -- (72.000000,0.000000);
\begin{scope}
\draw[fill=white] (72.000000, 15.000000) circle(3.000000pt);
\clip (72.000000, 15.000000) circle(3.000000pt);
\draw (69.000000, 15.000000) -- (75.000000, 15.000000);
\draw (72.000000, 12.000000) -- (72.000000, 18.000000);
\end{scope}
\filldraw (72.000000, 0.000000) circle(1.500000pt);
\draw (90.000000,15.000000) -- (90.000000,0.000000);
\begin{scope}
\draw[fill=white] (90.000000, 0.000000) circle(3.000000pt);
\clip (90.000000, 0.000000) circle(3.000000pt);
\draw (87.000000, 0.000000) -- (93.000000, 0.000000);
\draw (90.000000, -3.000000) -- (90.000000, 3.000000);
\end{scope}
\filldraw (90.000000, 15.000000) circle(1.500000pt);
\draw[fill=white,color=white] (105.000000, -6.000000) rectangle (120.000000, 21.000000);
\draw (112.500000, 7.500000) node {$\sim$};
\draw (135.000000,15.000000) -- (135.000000,0.000000);
\begin{scope}
\draw[fill=white] (135.000000, 15.000000) circle(3.000000pt);
\clip (135.000000, 15.000000) circle(3.000000pt);
\draw (132.000000, 15.000000) -- (138.000000, 15.000000);
\draw (135.000000, 12.000000) -- (135.000000, 18.000000);
\end{scope}
\filldraw (135.000000, 0.000000) circle(1.500000pt);
\draw (153.000000,15.000000) -- (153.000000,0.000000);
\begin{scope}
\draw[fill=white] (153.000000, 0.000000) circle(3.000000pt);
\clip (153.000000, 0.000000) circle(3.000000pt);
\draw (150.000000, 0.000000) -- (156.000000, 0.000000);
\draw (153.000000, -3.000000) -- (153.000000, 3.000000);
\end{scope}
\filldraw (153.000000, 15.000000) circle(1.500000pt);
\draw (171.000000,15.000000) -- (171.000000,0.000000);
\begin{scope}
\draw[fill=white] (171.000000, 15.000000) circle(3.000000pt);
\clip (171.000000, 15.000000) circle(3.000000pt);
\draw (168.000000, 15.000000) -- (174.000000, 15.000000);
\draw (171.000000, 12.000000) -- (171.000000, 18.000000);
\end{scope}
\filldraw (171.000000, 0.000000) circle(1.500000pt);
\end{tikzpicture}
}
\end{equation}
\begin{equation}\label{HHXHH}
    \cnot_{BA} : \ \raisebox{-5mm}{
\begin{tikzpicture}[scale=1.200000,x=1pt,y=1pt]
\filldraw[color=white] (0.000000, -7.500000) rectangle (111.000000, 22.500000);
\draw[color=black] (0.000000,15.000000) -- (111.000000,15.000000);
\draw[color=black] (0.000000,15.000000) node[left] {$A$};
\draw[color=black] (0.000000,0.000000) -- (111.000000,0.000000);
\draw[color=black] (0.000000,0.000000) node[left] {$B$};
\draw (9.000000,15.000000) -- (9.000000,0.000000);
\begin{scope}
\draw[fill=white] (9.000000, 15.000000) circle(3.000000pt);
\clip (9.000000, 15.000000) circle(3.000000pt);
\draw (6.000000, 15.000000) -- (12.000000, 15.000000);
\draw (9.000000, 12.000000) -- (9.000000, 18.000000);
\end{scope}
\filldraw (9.000000, 0.000000) circle(1.500000pt);
\draw[fill=white,color=white] (24.000000, -6.000000) rectangle (39.000000, 21.000000);
\draw (31.500000, 7.500000) node {$\sim$};
\begin{scope}
\draw[fill=white] (57.000000, 15.000000) +(-45.000000:8.485281pt and 8.485281pt) -- +(45.000000:8.485281pt and 8.485281pt) -- +(135.000000:8.485281pt and 8.485281pt) -- +(225.000000:8.485281pt and 8.485281pt) -- cycle;
\clip (57.000000, 15.000000) +(-45.000000:8.485281pt and 8.485281pt) -- +(45.000000:8.485281pt and 8.485281pt) -- +(135.000000:8.485281pt and 8.485281pt) -- +(225.000000:8.485281pt and 8.485281pt) -- cycle;
\draw (57.000000, 15.000000) node {$H$};
\end{scope}
\begin{scope}
\draw[fill=white] (57.000000, -0.000000) +(-45.000000:8.485281pt and 8.485281pt) -- +(45.000000:8.485281pt and 8.485281pt) -- +(135.000000:8.485281pt and 8.485281pt) -- +(225.000000:8.485281pt and 8.485281pt) -- cycle;
\clip (57.000000, -0.000000) +(-45.000000:8.485281pt and 8.485281pt) -- +(45.000000:8.485281pt and 8.485281pt) -- +(135.000000:8.485281pt and 8.485281pt) -- +(225.000000:8.485281pt and 8.485281pt) -- cycle;
\draw (57.000000, -0.000000) node {$H$};
\end{scope}
\draw (78.000000,15.000000) -- (78.000000,0.000000);
\begin{scope}
\draw[fill=white] (78.000000, 0.000000) circle(3.000000pt);
\clip (78.000000, 0.000000) circle(3.000000pt);
\draw (75.000000, 0.000000) -- (81.000000, 0.000000);
\draw (78.000000, -3.000000) -- (78.000000, 3.000000);
\end{scope}
\filldraw (78.000000, 15.000000) circle(1.500000pt);
\begin{scope}
\draw[fill=white] (99.000000, 15.000000) +(-45.000000:8.485281pt and 8.485281pt) -- +(45.000000:8.485281pt and 8.485281pt) -- +(135.000000:8.485281pt and 8.485281pt) -- +(225.000000:8.485281pt and 8.485281pt) -- cycle;
\clip (99.000000, 15.000000) +(-45.000000:8.485281pt and 8.485281pt) -- +(45.000000:8.485281pt and 8.485281pt) -- +(135.000000:8.485281pt and 8.485281pt) -- +(225.000000:8.485281pt and 8.485281pt) -- cycle;
\draw (99.000000, 15.000000) node {$H$};
\end{scope}
\begin{scope}
\draw[fill=white] (99.000000, -0.000000) +(-45.000000:8.485281pt and 8.485281pt) -- +(45.000000:8.485281pt and 8.485281pt) -- +(135.000000:8.485281pt and 8.485281pt) -- +(225.000000:8.485281pt and 8.485281pt) -- cycle;
\clip (99.000000, -0.000000) +(-45.000000:8.485281pt and 8.485281pt) -- +(45.000000:8.485281pt and 8.485281pt) -- +(135.000000:8.485281pt and 8.485281pt) -- +(225.000000:8.485281pt and 8.485281pt) -- cycle;
\draw (99.000000, -0.000000) node {$H$};
\end{scope}
\end{tikzpicture}
}
  \end{equation}
\begin{equation}\label{cZ}
  \cZ :\ \raisebox{-5mm}{
\begin{tikzpicture}[scale=1.200000,x=1pt,y=1pt]
\filldraw[color=white] (0.000000, -7.500000) rectangle (204.000000, 22.500000);
\draw[color=black] (0.000000,15.000000) -- (204.000000,15.000000);
\draw[color=black] (0.000000,15.000000) node[left] {$A$};
\draw[color=black] (0.000000,0.000000) -- (204.000000,0.000000);
\draw[color=black] (0.000000,0.000000) node[left] {$B$};
\draw (9.000000,15.000000) -- (9.000000,0.000000);
\filldraw (9.000000, 15.000000) circle(1.500000pt);
\filldraw (9.000000, 0.000000) circle(1.500000pt);
\draw[fill=white,color=white] (24.000000, -6.000000) rectangle (39.000000, 21.000000);
\draw (31.500000, 7.500000) node {$\sim$};
\begin{scope}
\draw[fill=white] (57.000000, 15.000000) +(-45.000000:8.485281pt and 8.485281pt) -- +(45.000000:8.485281pt and 8.485281pt) -- +(135.000000:8.485281pt and 8.485281pt) -- +(225.000000:8.485281pt and 8.485281pt) -- cycle;
\clip (57.000000, 15.000000) +(-45.000000:8.485281pt and 8.485281pt) -- +(45.000000:8.485281pt and 8.485281pt) -- +(135.000000:8.485281pt and 8.485281pt) -- +(225.000000:8.485281pt and 8.485281pt) -- cycle;
\draw (57.000000, 15.000000) node {$H$};
\end{scope}
\draw (78.000000,15.000000) -- (78.000000,0.000000);
\begin{scope}
\draw[fill=white] (78.000000, 15.000000) circle(3.000000pt);
\clip (78.000000, 15.000000) circle(3.000000pt);
\draw (75.000000, 15.000000) -- (81.000000, 15.000000);
\draw (78.000000, 12.000000) -- (78.000000, 18.000000);
\end{scope}
\filldraw (78.000000, 0.000000) circle(1.500000pt);
\begin{scope}
\draw[fill=white] (99.000000, 15.000000) +(-45.000000:8.485281pt and 8.485281pt) -- +(45.000000:8.485281pt and 8.485281pt) -- +(135.000000:8.485281pt and 8.485281pt) -- +(225.000000:8.485281pt and 8.485281pt) -- cycle;
\clip (99.000000, 15.000000) +(-45.000000:8.485281pt and 8.485281pt) -- +(45.000000:8.485281pt and 8.485281pt) -- +(135.000000:8.485281pt and 8.485281pt) -- +(225.000000:8.485281pt and 8.485281pt) -- cycle;
\draw (99.000000, 15.000000) node {$H$};
\end{scope}
\draw[fill=white,color=white] (117.000000, -6.000000) rectangle (132.000000, 21.000000);
\draw (124.500000, 7.500000) node {$\sim$};
\begin{scope}
\draw[fill=white] (150.000000, -0.000000) +(-45.000000:8.485281pt and 8.485281pt) -- +(45.000000:8.485281pt and 8.485281pt) -- +(135.000000:8.485281pt and 8.485281pt) -- +(225.000000:8.485281pt and 8.485281pt) -- cycle;
\clip (150.000000, -0.000000) +(-45.000000:8.485281pt and 8.485281pt) -- +(45.000000:8.485281pt and 8.485281pt) -- +(135.000000:8.485281pt and 8.485281pt) -- +(225.000000:8.485281pt and 8.485281pt) -- cycle;
\draw (150.000000, -0.000000) node {$H$};
\end{scope}
\draw (171.000000,15.000000) -- (171.000000,0.000000);
\begin{scope}
\draw[fill=white] (171.000000, 0.000000) circle(3.000000pt);
\clip (171.000000, 0.000000) circle(3.000000pt);
\draw (168.000000, 0.000000) -- (174.000000, 0.000000);
\draw (171.000000, -3.000000) -- (171.000000, 3.000000);
\end{scope}
\filldraw (171.000000, 15.000000) circle(1.500000pt);
\begin{scope}
\draw[fill=white] (192.000000, -0.000000) +(-45.000000:8.485281pt and 8.485281pt) -- +(45.000000:8.485281pt and 8.485281pt) -- +(135.000000:8.485281pt and 8.485281pt) -- +(225.000000:8.485281pt and 8.485281pt) -- cycle;
\clip (192.000000, -0.000000) +(-45.000000:8.485281pt and 8.485281pt) -- +(45.000000:8.485281pt and 8.485281pt) -- +(135.000000:8.485281pt and 8.485281pt) -- +(225.000000:8.485281pt and 8.485281pt) -- cycle;
\draw (192.000000, -0.000000) node {$H$};
\end{scope}
\end{tikzpicture}
    }
\end{equation}

{ \caption{ Some classical equivalences of circuits involving $\cnot$ gates.\label{equi}}}
\end{figure}
\medskip

On a system of $n$ qubits, we label each qubit from 0 to $n-1$ thus following the usual convention. For coherence we also number
the lines and columns of a matrix of dimension $n$ from 0 to $n-1$ and we consider that a permutation of the symmetric group $\SYM$ is a bijection of $\{0,\dots,n-1\}$. The $n$-qubit system evolves over time in the Hilbert space $\HH_0\otimes \HH_1\otimes\cdots\otimes \HH_{n-1}$ where $\HH_i$ is the Hilbert space of
qubit $i$. Hence the Hilbert space of an $n$-qubit system is isomorphic to $\C^{2^n}$. In this space, a state vector of the standard computational basis is denoted by $\ket{b_0b_1\cdots b_{n-1}}$ with $b_i\in\{0,1\}$. A $\cnot$ gate with target on qubit $i$ and control on qubit $j$ will be denoted $X_{ij}$. The reader will pay attention to the fact that our convention is the opposite of the one generaly used in the literature (and in the beginning of this section) where $\cnot_{ij}$ denotes a $\cnot$ gate with control on qubit $i$ and target on qubit $j$. The reason for this convention is explained  in the proof of Theorem \ref{iso}, Section \ref{algebra}. So, if $i<j$, the action of $X_{ij}$ and $X_{ji}$ on a basis state vector is given by :
\begin{align}
  X_{ij}\ket{b_0\cdots b_i\cdots b_j\cdots b_{n-1}}=\ket{b_0\cdots b_i\oplus b_j\cdots b_j\cdots b_{n-1}},\label{cnotij}\\
  X_{ji}\ket{b_0\cdots b_i\cdots b_j\cdots b_{n-1}}=\ket{b_0\cdots b_i\cdots b_j\oplus b_i\cdots b_{n-1}}.\label{cnotji}
\end{align}
Notice that the $X_{i,j}$ gates are (represented by) matrices of size $2^n\times 2^n$ in the orthogonal group $\mathcal{O}_{2^n}(\R)$  and that they are permutation matrices.

\begin{figure}[h]
  \vspace{2mm}
  
  \begin{center}
    \raisebox{9mm}{$X_{02}: \qquad$}
    \begin{tikzpicture}[scale=1.200000,x=1pt,y=1pt]
      \filldraw[color=white] (0.000000, -7.500000) rectangle (18.000000, 37.500000);
      \draw[color=black] (0.000000,30.000000) -- (18.000000,30.000000);
      \draw[color=black] (0.000000,30.000000) node[left] {$q_0$};
      \draw[color=black] (0.000000,15.000000) -- (18.000000,15.000000);
      \draw[color=black] (0.000000,15.000000) node[left] {$q_1$};
      \draw[color=black] (0.000000,0.000000) -- (18.000000,0.000000);
      \draw[color=black] (0.000000,0.000000) node[left] {$q_2$};
      \draw (9.000000,30.000000) -- (9.000000,0.000000);
      \begin{scope}
        \draw[fill=white] (9.000000, 30.000000) circle(3.000000pt);
        \clip (9.000000, 30.000000) circle(3.000000pt);
        \draw (6.000000, 30.000000) -- (12.000000, 30.000000);
        \draw (9.000000, 27.000000) -- (9.000000, 33.000000);
      \end{scope}
      \filldraw (9.000000, 0.000000) circle(1.500000pt);
    \end{tikzpicture}
    \raisebox{9mm}{$\qquad
      \begin{bmatrix}
        1&0&0&0&0&0&0&0\\
        0&0&0&0&0&1&0&0\\
        0&0&1&0&0&0&0&0\\
        0&0&0&0&0&0&0&1\\
        0&0&0&0&1&0&0&0\\
        0&1&0&0&0&0&0&0\\
        0&0&0&0&0&0&1&0\\
        0&0&0&1&0&0&0&0\\
      \end{bmatrix}$}
    { \caption{ The gate $X_{02}$ in a 3-qubit circuit and its matrix in the standard basis.\label{X02}}}
  \end{center}\vspace{-3mm}
  
\end{figure}

To represent a permutation of $\SYM$, we use the  2-line notation as well as the cycle notation and the $n\times n$ permutation matrix whose entry $(i,j)$ is 1
if $i=\sigma(j)$ and 0 otherwise. 
As an example, if $\sigma\in\SYM[6]$ is the permutation defined by $\sigma(0)=4,\sigma(1)=5,\sigma(2)=0,\sigma(3)=3,\sigma(4)=2,
\sigma(5)=1$ then the 2-line notation is $\sigma=\begin{pmatrix}0&1&2&3&4&5\\4&5&0&3&2&1\end{pmatrix}$, a cycle notation can be
$\sigma=(204)(15)=(51)(420)$ or $\sigma=(204)(15)(3)$ if we want to
write the one-cycle and the permutation matrix is
$M_{\sigma}=
\begin{bmatrix}
  0&0&1&0&0&0\\
  0&0&0&0&0&1\\
  0&0&0&0&1&0\\
  0&0&0&1&0&0\\
  1&0&0&0&0&0\\
  0&1&0&0&0&0
\end{bmatrix}$.
The \emph{cycle type} of a permutation $\sigma\in\SYM$ is the tuple $\lambda=(n_1,n_2,\dots,n_p)$ of positive integers such that $n_1\geq n_2\geq \dots\geq n_p$,
$n=\sum_{i=1}^pn_i$ and
$\sigma$ is the commutative product of cycles of length $n_i$ (including the cycles of length 1). This kind of tuple is called a decreasing \emph{partition} of $n$.
For instance $\sigma=(204)(15)(3)=(204)(15)$ has cycle type $\lambda=(3,2,1)$. Notice that the cycle type of the identity is $\lambda=(1,\dots,1)$.
Two permutations have the same cycle type iff they are in the same conjugacy class. For instance $\sigma'=(350)(24)$ has the same cycle type as $\sigma$ and $\sigma'=\gamma\sigma\gamma^{-1}$ where $\gamma=\begin{pmatrix}0&1&2&3&4&5\\5&2&3&1&0&4\end{pmatrix}$.

A $\swap$ gate between qubit $i$ and qubit $j$ is denoted by $S_{(ij)}$ where $(ij)$ is the cyclic notation for the transposition that exchange $i$ and $j$.
Since the symmetric group $\SYM$ is generated by the transpositions, it is straightforward to prove that the group generated by the $S_{(ij)}$ is isomorphic to the symmetric group  : each $\swap$ gate $S_{(ij)}$ in a $n$-qubit circuit corresponds to the transposition $(ij)$ in $\SYM$. Notice that the cyclic notation of a transposition (\textit{i.e.} $(ij)=(ji)$) implies $S_{(ij)}=S_{(ji)}$ which is coherent with the symmetry of a $\swap$ gate. More generaly any circuit  of $\swap$ gates maps to a permutation $\sigma$ (examples in Figure \ref{swapcircuit} and Figure \ref{cyclic}). If we denote by $S_{\sigma}$ the unitary operator corresponding to that circuit one has :
\begin{equation}\label{swapsigma}
  S_{\sigma}\ket{b_0b_1\cdots b_{n-1}}=\ket{b_{\sigma^{-1}(0)}b_{\sigma^{-1}(1)}\cdots b_{\sigma^{-1}(n-1)}}
\end{equation}

\begin{figure}[h]
  \begin{center}
  $\sigma=(024)(13)\qquad S_{\sigma}=S_{(02)}S_{(24)}S_{(13)}\qquad$\raisebox{-15mm}{
\begin{tikzpicture}[scale=1.200000,x=1pt,y=1pt]
\filldraw[color=white] (0.000000, -7.500000) rectangle (42.000000, 67.500000);
\draw[color=black] (0.000000,60.000000) -- (42.000000,60.000000);
\draw[color=black] (0.000000,60.000000) node[left] {$q_0$};
\draw[color=black] (0.000000,45.000000) -- (42.000000,45.000000);
\draw[color=black] (0.000000,45.000000) node[left] {$q_1$};
\draw[color=black] (0.000000,30.000000) -- (42.000000,30.000000);
\draw[color=black] (0.000000,30.000000) node[left] {$q_2$};
\draw[color=black] (0.000000,15.000000) -- (42.000000,15.000000);
\draw[color=black] (0.000000,15.000000) node[left] {$q_3$};
\draw[color=black] (0.000000,0.000000) -- (42.000000,0.000000);
\draw[color=black] (0.000000,0.000000) node[left] {$q_4$};
\draw (9.000000,45.000000) -- (9.000000,15.000000);
\begin{scope}
\draw (6.878680, 42.878680) -- (11.121320, 47.121320);
\draw (6.878680, 47.121320) -- (11.121320, 42.878680);
\end{scope}
\begin{scope}
\draw (6.878680, 12.878680) -- (11.121320, 17.121320);
\draw (6.878680, 17.121320) -- (11.121320, 12.878680);
\end{scope}
\draw (15.000000,30.000000) -- (15.000000,0.000000);
\begin{scope}
\draw (12.878680, 27.878680) -- (17.121320, 32.121320);
\draw (12.878680, 32.121320) -- (17.121320, 27.878680);
\end{scope}
\begin{scope}
\draw (12.878680, -2.121320) -- (17.121320, 2.121320);
\draw (12.878680, 2.121320) -- (17.121320, -2.121320);
\end{scope}
\draw (33.000000,60.000000) -- (33.000000,30.000000);
\begin{scope}
\draw (30.878680, 57.878680) -- (35.121320, 62.121320);
\draw (30.878680, 62.121320) -- (35.121320, 57.878680);
\end{scope}
\begin{scope}
\draw (30.878680, 27.878680) -- (35.121320, 32.121320);
\draw (30.878680, 32.121320) -- (35.121320, 27.878680);
\end{scope}
\end{tikzpicture}
}
\end{center} 
{ \caption{ A 5-qubit quantum circuit of $\swap$ gates and the corresponding permutation\label{swapcircuit}}}
  \end{figure}

\section{The group generated by the $\cnot$ gates}\label{algebra}

We denote by $\XG$ ($n\geq  2$) the group generated by the $n(n-1)$ gates (\textit{i.e.} matrices) $X_{ij}$.
\vspace{-5mm}

\begin{equation}
\XG :=\left<X_{ij}\mid 0\leq i,j < n, i\neq j\right>
\end{equation}

\begin{figure}[h]
  \begin{center}
    \raisebox{12mm}{$X_{12}X_{10}X_ {31}X_{03}:\quad\quad$}
    \begin{tikzpicture}[scale=1.200000,x=1pt,y=1pt]
\filldraw[color=white] (0.000000, -7.500000) rectangle (72.000000, 52.500000);
\draw[color=black] (0.000000,45.000000) -- (72.000000,45.000000);
\draw[color=black] (0.000000,45.000000) node[left] {$q_0$};
\draw[color=black] (0.000000,30.000000) -- (72.000000,30.000000);
\draw[color=black] (0.000000,30.000000) node[left] {$q_1$};
\draw[color=black] (0.000000,15.000000) -- (72.000000,15.000000);
\draw[color=black] (0.000000,15.000000) node[left] {$q_2$};
\draw[color=black] (0.000000,0.000000) -- (72.000000,0.000000);
\draw[color=black] (0.000000,0.000000) node[left] {$q_3$};
\draw (9.000000,45.000000) -- (9.000000,0.000000);
\begin{scope}
\draw[fill=white] (9.000000, 45.000000) circle(3.000000pt);
\clip (9.000000, 45.000000) circle(3.000000pt);
\draw (6.000000, 45.000000) -- (12.000000, 45.000000);
\draw (9.000000, 42.000000) -- (9.000000, 48.000000);
\end{scope}
\filldraw (9.000000, 0.000000) circle(1.500000pt);
\draw (27.000000,30.000000) -- (27.000000,0.000000);
\begin{scope}
\draw[fill=white] (27.000000, 0.000000) circle(3.000000pt);
\clip (27.000000, 0.000000) circle(3.000000pt);
\draw (24.000000, 0.000000) -- (30.000000, 0.000000);
\draw (27.000000, -3.000000) -- (27.000000, 3.000000);
\end{scope}
\filldraw (27.000000, 30.000000) circle(1.500000pt);
\draw (45.000000,45.000000) -- (45.000000,30.000000);
\begin{scope}
\draw[fill=white] (45.000000, 30.000000) circle(3.000000pt);
\clip (45.000000, 30.000000) circle(3.000000pt);
\draw (42.000000, 30.000000) -- (48.000000, 30.000000);
\draw (45.000000, 27.000000) -- (45.000000, 33.000000);
\end{scope}
\filldraw (45.000000, 45.000000) circle(1.500000pt);
\draw (63.000000,30.000000) -- (63.000000,15.000000);
\begin{scope}
\draw[fill=white] (63.000000, 30.000000) circle(3.000000pt);
\clip (63.000000, 30.000000) circle(3.000000pt);
\draw (60.000000, 30.000000) -- (66.000000, 30.000000);
\draw (63.000000, 27.000000) -- (63.000000, 33.000000);
\end{scope}
\filldraw (63.000000, 15.000000) circle(1.500000pt);
\end{tikzpicture}
\end{center}
{ \caption{  A circuit of $\XG[4]$ and the corresponding operator.\label{excx4}}}
\end{figure}

\begin{prop}\label{idx}The following identities are satisfied by the generators of $\XG$. 
  \begin{align}
    &\text{Involution : } X_{ij}^2=I\label{inv}\\
    &\text{Braid relation : }   X_{ij}X_{ji}X_{ij}=X_{ji}X_{ij}X_{ji}=S_{(ij)}\label{braid}\\
    &\text{Commutation : }  (X_{ij}X_{k\ell})^2=I \quad \text{where }i\neq \ell,j\neq k\label{commute}\\ 
     &\text{Non-commutation : } (X_{ij}X_{jk})^2=(X_{jk}X_{ij})^2=X_{ik}\label{noncom}
  \end{align}
\end{prop}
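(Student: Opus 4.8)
The plan is to exploit the fact, immediate from \eqref{cnotij} and \eqref{cnotji}, that every generator $X_{ij}$ is a permutation matrix whose action on a computational basis vector obeys a single uniform rule valid for all $i\neq j$ (not only for $i<j$): $X_{ij}$ replaces the bit $b_i$ by $b_i\oplus b_j$ and leaves every other bit unchanged. Consequently two words in the $X_{ij}$ represent the same operator if and only if they induce the same map on $\{0,1\}^n$, so to test an identity it suffices to follow the handful of bits actually touched by the gates involved. Equivalently, and more compactly, this rule says that $X_{ij}$ acts on the index $b\in\F_2^n$ by the $\F_2$-linear transvection $b\mapsto(I+E_{ij})b$, where $E_{ij}$ is the elementary matrix; all four identities then become identities between the matrices $I+E_{ij}$ in $\GL$, which I would verify using $E_{ab}E_{cd}=\delta_{bc}E_{ad}$ together with the fact that we work in characteristic $2$.

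For the involution \eqref{inv} I apply $X_{ij}$ twice to $(b_i,b_j)$: the first application sends $b_i\mapsto b_i\oplus b_j$ without altering $b_j$, and the second returns $(b_i\oplus b_j)\oplus b_j=b_i$; equivalently $(I+E_{ij})^2=I+E_{ij}^2=I$ since $E_{ij}^2=0$ for $i\neq j$. For the braid relation \eqref{braid} I track the pair $(b_i,b_j)$ through $X_{ij}$, $X_{ji}$, $X_{ij}$ and read off $(b_i,b_j)\mapsto(b_j,b_i)$; comparing with the action of $S_{(ij)}$ deduced from \eqref{swapsigma} (which swaps the contents of qubits $i$ and $j$) gives the first equality, and interchanging the roles of $i$ and $j$ gives the second.

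For the commutation relation \eqref{commute} I observe that $X_{ij}$ writes to qubit $i$ while reading qubit $j$, and $X_{k\ell}$ writes to qubit $k$ while reading qubit $\ell$. The hypotheses $i\neq\ell$ and $j\neq k$ say precisely that neither gate's control is the other's target, so the two gates have no read--after--write dependency in either order and therefore commute; being involutions, $(X_{ij}X_{k\ell})^2=X_{ij}^2X_{k\ell}^2=I$. In the transvection picture this is $(I+E_{ij})(I+E_{k\ell})=(I+E_{k\ell})(I+E_{ij})$, which follows because both cross terms $E_{ij}E_{k\ell}=\delta_{jk}E_{i\ell}$ and $E_{k\ell}E_{ij}=\delta_{\ell i}E_{kj}$ vanish under the hypotheses. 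For the non-commutation relation \eqref{noncom} I track the triple $(b_i,b_j,b_k)$: one checks that $X_{ij}X_{jk}$ sends $(b_i,b_j,b_k)$ to $(b_i\oplus b_j\oplus b_k,\,b_j\oplus b_k,\,b_k)$, and applying it once more returns $(b_i\oplus b_k,\,b_j,\,b_k)$, exactly the action of $X_{ik}$; the computation for $(X_{jk}X_{ij})^2$ is entirely analogous and yields the same result. In matrix terms this is $\bigl((I+E_{ij})(I+E_{jk})\bigr)^2=I+E_{ik}$, obtained from $E_{ij}E_{jk}=E_{ik}$ and the vanishing of all other products.

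The computations themselves are routine bit-bookkeeping, so I expect no deep obstacle; the only points requiring care are stating the uniform action rule in a form valid regardless of the ordering of the indices (the displayed formulas being written only for $i<j$), correctly identifying the swap operator $S_{(ij)}$ with the coordinate transposition via \eqref{swapsigma} in the braid step, and, in the commutation step, distinguishing the index coincidences actually forbidden by $i\neq\ell$ and $j\neq k$ from the harmless ones, such as $i=k$, which do not break commutativity.
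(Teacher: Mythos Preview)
Your proof is correct and follows essentially the same approach as the paper: verify each identity by checking the action of both sides on an arbitrary basis vector $\ket{b_0\cdots b_{n-1}}$ via \eqref{cnotij}--\eqref{cnotji}. The paper leaves the computations to the reader, whereas you carry them out explicitly and additionally recast them as identities among the transvection matrices $I+E_{ij}$ over $\F_2$; this second viewpoint is exactly the isomorphism the paper develops shortly afterwards in Theorem~\ref{iso}, so nothing is lost or gained beyond a bit of foreshadowing.
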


\begin{proof}
  One checks each identity $A=B$ by showing that the actions of gates $A$ and $B$ on a basis state vector $\ket{b_0\cdots b_{n-1}}$ are the same.
  Using identities  \ref{cnotij} and \ref{cnotji}, the results follow from direct computation let to the readers.
\end{proof}

\begin{figure}[h]
\begin{center}
  \begin{tikzpicture}[scale=1.200000,x=1pt,y=1pt]
\filldraw[color=white] (0.000000, -7.500000) rectangle (216.000000, 52.500000);
\draw[color=black] (0.000000,45.000000) -- (216.000000,45.000000);
\draw[color=black] (0.000000,45.000000) node[left] {$q_0$};
\draw[color=black] (0.000000,30.000000) -- (216.000000,30.000000);
\draw[color=black] (0.000000,30.000000) node[left] {$q_1$};
\draw[color=black] (0.000000,15.000000) -- (216.000000,15.000000);
\draw[color=black] (0.000000,15.000000) node[left] {$q_2$};
\draw[color=black] (0.000000,0.000000) -- (216.000000,0.000000);
\draw[color=black] (0.000000,0.000000) node[left] {$q_3$};
\draw (9.000000,45.000000) -- (9.000000,0.000000);
\begin{scope}
\draw[fill=white] (9.000000, 45.000000) circle(3.000000pt);
\clip (9.000000, 45.000000) circle(3.000000pt);
\draw (6.000000, 45.000000) -- (12.000000, 45.000000);
\draw (9.000000, 42.000000) -- (9.000000, 48.000000);
\end{scope}
\filldraw (9.000000, 0.000000) circle(1.500000pt);
\draw[fill=white,color=white] (24.000000, -6.000000) rectangle (39.000000, 51.000000);
\draw (31.500000, 22.500000) node {$\sim$};
\draw (54.000000,15.000000) -- (54.000000,0.000000);
\begin{scope}
\draw[fill=white] (54.000000, 15.000000) circle(3.000000pt);
\clip (54.000000, 15.000000) circle(3.000000pt);
\draw (51.000000, 15.000000) -- (57.000000, 15.000000);
\draw (54.000000, 12.000000) -- (54.000000, 18.000000);
\end{scope}
\filldraw (54.000000, 0.000000) circle(1.500000pt);
\draw (72.000000,45.000000) -- (72.000000,15.000000);
\begin{scope}
\draw[fill=white] (72.000000, 45.000000) circle(3.000000pt);
\clip (72.000000, 45.000000) circle(3.000000pt);
\draw (69.000000, 45.000000) -- (75.000000, 45.000000);
\draw (72.000000, 42.000000) -- (72.000000, 48.000000);
\end{scope}
\filldraw (72.000000, 15.000000) circle(1.500000pt);
\draw (90.000000,15.000000) -- (90.000000,0.000000);
\begin{scope}
\draw[fill=white] (90.000000, 15.000000) circle(3.000000pt);
\clip (90.000000, 15.000000) circle(3.000000pt);
\draw (87.000000, 15.000000) -- (93.000000, 15.000000);
\draw (90.000000, 12.000000) -- (90.000000, 18.000000);
\end{scope}
\filldraw (90.000000, 0.000000) circle(1.500000pt);
\draw (108.000000,45.000000) -- (108.000000,15.000000);
\begin{scope}
\draw[fill=white] (108.000000, 45.000000) circle(3.000000pt);
\clip (108.000000, 45.000000) circle(3.000000pt);
\draw (105.000000, 45.000000) -- (111.000000, 45.000000);
\draw (108.000000, 42.000000) -- (108.000000, 48.000000);
\end{scope}
\filldraw (108.000000, 15.000000) circle(1.500000pt);
\draw[fill=white,color=white] (123.000000, -6.000000) rectangle (138.000000, 51.000000);
\draw (130.500000, 22.500000) node {$\sim$};
\draw (153.000000,45.000000) -- (153.000000,15.000000);
\begin{scope}
\draw[fill=white] (153.000000, 45.000000) circle(3.000000pt);
\clip (153.000000, 45.000000) circle(3.000000pt);
\draw (150.000000, 45.000000) -- (156.000000, 45.000000);
\draw (153.000000, 42.000000) -- (153.000000, 48.000000);
\end{scope}
\filldraw (153.000000, 15.000000) circle(1.500000pt);
\draw (171.000000,15.000000) -- (171.000000,0.000000);
\begin{scope}
\draw[fill=white] (171.000000, 15.000000) circle(3.000000pt);
\clip (171.000000, 15.000000) circle(3.000000pt);
\draw (168.000000, 15.000000) -- (174.000000, 15.000000);
\draw (171.000000, 12.000000) -- (171.000000, 18.000000);
\end{scope}
\filldraw (171.000000, 0.000000) circle(1.500000pt);
\draw (189.000000,45.000000) -- (189.000000,15.000000);
\begin{scope}
\draw[fill=white] (189.000000, 45.000000) circle(3.000000pt);
\clip (189.000000, 45.000000) circle(3.000000pt);
\draw (186.000000, 45.000000) -- (192.000000, 45.000000);
\draw (189.000000, 42.000000) -- (189.000000, 48.000000);
\end{scope}
\filldraw (189.000000, 15.000000) circle(1.500000pt);
\draw (207.000000,15.000000) -- (207.000000,0.000000);
\begin{scope}
\draw[fill=white] (207.000000, 15.000000) circle(3.000000pt);
\clip (207.000000, 15.000000) circle(3.000000pt);
\draw (204.000000, 15.000000) -- (210.000000, 15.000000);
\draw (207.000000, 12.000000) -- (207.000000, 18.000000);
\end{scope}
\filldraw (207.000000, 0.000000) circle(1.500000pt);
\end{tikzpicture}

 $X_{03}=X_{02}X_{23}X_{02}X_{23}=X_{23}X_{02}X_{23}X_{02}$
{ \caption{ Example of Identity \eqref{noncom} in a $\XG[4]$ circuit.\label{equiX03}}}
\end{center}
\end{figure}

Using the identities from Proposition \ref{idx}, one gets easily the conjugacy relations in $\XG$ :
\vspace{-5mm}

\begin{equation}
  X_{ij}X_{jk}X_{ij}=X_{jk}X_{ik} \text{\quad and \quad  } X_{ij}X_{ki}X_{ij}=X_{ki}X_{kj}. \label{conjx}
\end{equation}

Let us denote by $\SG$ the group generated by the $S_{ij}$,
\begin{equation}\SG := \left<S_{(ij)}\mid  0\leq i,j < n\right>\end{equation}
As explained at the end of Section \ref{background}, $\SG$ is isomorphic to the symmetric group $\SYM$ and Identity \eqref{braid} implies that $\SG$ is a subgroup of $\XG$.
The action of $\SG$ on $\XG$ by conjugation is given by the following proposition :

\begin{prop} For any permutation $\sigma$ in $\SYM$ and any $i\neq j$, one has :
\begin{equation}\label{actionxsym} S_{\sigma}X_{ij}S_{\sigma^{-1}}=X_{\sigma(i)\sigma(j)}.\end{equation}
\end{prop}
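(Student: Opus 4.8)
The plan is to prove the operator identity by checking that both sides act identically on every vector $\ket{b_0 b_1 \cdots b_{n-1}}$ of the computational basis; since these vectors span the whole Hilbert space, agreement on them forces equality of the two unitary operators. Throughout I would keep careful track of which qubit sits at which position, because the two possible sources of an ``off-by-inverse'' error are the inverse permutation appearing in \eqref{swapsigma} and the convention (target on $i$, control on $j$) fixed in \eqref{cnotij}--\eqref{cnotji}.

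First I would compute $S_{\sigma^{-1}}\ket{b_0\cdots b_{n-1}}$. By \eqref{swapsigma} applied to the permutation $\sigma^{-1}$, this equals $\ket{c_0\cdots c_{n-1}}$ with $c_k=b_{\sigma(k)}$. Next I apply $X_{ij}$: by \eqref{cnotij}--\eqref{cnotji} this leaves every position untouched except position $i$, whose bit becomes $c_i\oplus c_j=b_{\sigma(i)}\oplus b_{\sigma(j)}$. Finally I apply $S_\sigma$, again via \eqref{swapsigma}, which moves the bit currently at position $k$ to position $\sigma(k)$. Tracking the single modified position, the bit $b_{\sigma(i)}\oplus b_{\sigma(j)}$ lands at position $\sigma(i)$, while every other position $m$ recovers its original value $b_m$. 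Hence $S_\sigma X_{ij} S_{\sigma^{-1}}\ket{b_0\cdots b_{n-1}}$ is the basis vector obtained from $\ket{b_0\cdots b_{n-1}}$ by replacing the bit at position $\sigma(i)$ with $b_{\sigma(i)}\oplus b_{\sigma(j)}$ and leaving the rest unchanged. Comparing with \eqref{cnotij}--\eqref{cnotji}, this is precisely the action of $X_{\sigma(i)\sigma(j)}$, which completes the proof.

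An alternative I would keep in reserve reduces the statement to the generators of $\SYM$. A short computation on basis vectors shows that $\sigma\mapsto S_\sigma$ is a group homomorphism, i.e.\ $S_\sigma S_\tau=S_{\sigma\tau}$; granting \eqref{actionxsym} for two permutations $\sigma$ and $\tau$ then yields it for $\sigma\tau$ by the telescoping $S_{\sigma\tau}X_{ij}S_{(\sigma\tau)^{-1}}=S_\sigma\bigl(S_\tau X_{ij}S_{\tau^{-1}}\bigr)S_{\sigma^{-1}}=S_\sigma X_{\tau(i)\tau(j)}S_{\sigma^{-1}}=X_{\sigma\tau(i)\,\sigma\tau(j)}$. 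Since $\SYM$ is generated by transpositions and $S_{(k\ell)}=S_{(k\ell)}^{-1}$, it would then suffice to verify \eqref{actionxsym} only for $\sigma=(k\ell)$, a single $\swap$ gate.

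The main obstacle is purely bookkeeping: one must resist conflating $\sigma$ with $\sigma^{-1}$ when reading \eqref{swapsigma}, and remember that in our convention $X_{ij}$ writes onto the target qubit $i$ using the control qubit $j$. Once the indices are handled consistently, each of the three steps is a one-line substitution. I would favour the direct basis-vector computation precisely because it needs no genuine case analysis, whereas the transposition-based argument, if carried out through the algebraic identities of Proposition \ref{idx}, would force one to split into the sub-cases according to whether $k$ or $\ell$ coincides with $i$ or $j$.
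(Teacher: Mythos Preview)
Your proof is correct. Your primary argument---checking the identity directly on an arbitrary basis vector via \eqref{swapsigma} and \eqref{cnotij}--\eqref{cnotji}---is a genuinely different route from the paper's. The paper instead adopts precisely the alternative you mention in reserve: it reduces to the case $\sigma=(k\ell)$ using that transpositions generate $\SYM$, writes $S_{(k\ell)}=X_{k\ell}X_{\ell k}X_{k\ell}$ via the braid relation~\eqref{braid}, and then appeals to the algebraic identities \eqref{inv}, \eqref{commute} and the conjugacy relations~\eqref{conjx}. As you anticipated, that route implicitly requires a small case split according to how $\{k,\ell\}$ meets $\{i,j\}$. Your direct computation avoids this entirely and handles all $\sigma$ at once; the price is having to manage the $\sigma$ versus $\sigma^{-1}$ bookkeeping in \eqref{swapsigma}, which you do correctly. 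Both arguments are short, but yours is arguably cleaner for this particular statement, while the paper's has the advantage of staying purely within the relations of Proposition~\ref{idx} that were just established.
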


\begin{proof}
  Since the transpositions generate the symmetric group, it suffices to 
prove the result when $\sigma$ is a transposition, \textit{i.e.} 
$S_\sigma=X_{k\ell}X_{\ell k}X_{k\ell}$ for some $\ell\neq k$. Hence, 
the result comes straightforwardly from equalities \eqref{inv}, \eqref{braid}, \eqref{commute} and \eqref{conjx}.
\end{proof}

\begin{figure}[h]
  \begin{center}
\begin{tikzpicture}[scale=1.200000,x=1pt,y=1pt]
\filldraw[color=white] (0.000000, -7.500000) rectangle (180.000000, 52.500000);
\draw[color=black] (0.000000,45.000000) -- (180.000000,45.000000);
\draw[color=black] (0.000000,45.000000) node[left] {$q_0$};
\draw[color=black] (0.000000,30.000000) -- (180.000000,30.000000);
\draw[color=black] (0.000000,30.000000) node[left] {$q_1$};
\draw[color=black] (0.000000,15.000000) -- (180.000000,15.000000);
\draw[color=black] (0.000000,15.000000) node[left] {$q_2$};
\draw[color=black] (0.000000,0.000000) -- (180.000000,0.000000);
\draw[color=black] (0.000000,0.000000) node[left] {$q_3$};
\draw (9.000000,45.000000) -- (9.000000,0.000000);
\begin{scope}
\draw[fill=white] (9.000000, 45.000000) circle(3.000000pt);
\clip (9.000000, 45.000000) circle(3.000000pt);
\draw (6.000000, 45.000000) -- (12.000000, 45.000000);
\draw (9.000000, 42.000000) -- (9.000000, 48.000000);
\end{scope}
\filldraw (9.000000, 0.000000) circle(1.500000pt);
\draw[fill=white,color=white] (24.000000, -6.000000) rectangle (39.000000, 51.000000);
\draw (31.500000, 22.500000) node {$\sim$};
\draw (54.000000,45.000000) -- (54.000000,15.000000);
\begin{scope}
\draw (51.878680, 42.878680) -- (56.121320, 47.121320);
\draw (51.878680, 47.121320) -- (56.121320, 42.878680);
\end{scope}
\begin{scope}
\draw (51.878680, 12.878680) -- (56.121320, 17.121320);
\draw (51.878680, 17.121320) -- (56.121320, 12.878680);
\end{scope}
\draw (72.000000,15.000000) -- (72.000000,0.000000);
\begin{scope}
\draw[fill=white] (72.000000, 15.000000) circle(3.000000pt);
\clip (72.000000, 15.000000) circle(3.000000pt);
\draw (69.000000, 15.000000) -- (75.000000, 15.000000);
\draw (72.000000, 12.000000) -- (72.000000, 18.000000);
\end{scope}
\filldraw (72.000000, 0.000000) circle(1.500000pt);
\draw (90.000000,45.000000) -- (90.000000,15.000000);
\begin{scope}
\draw (87.878680, 42.878680) -- (92.121320, 47.121320);
\draw (87.878680, 47.121320) -- (92.121320, 42.878680);
\end{scope}
\begin{scope}
\draw (87.878680, 12.878680) -- (92.121320, 17.121320);
\draw (87.878680, 17.121320) -- (92.121320, 12.878680);
\end{scope}
\draw[fill=white,color=white] (105.000000, -6.000000) rectangle (120.000000, 51.000000);
\draw (112.500000, 22.500000) node {$\sim$};
\draw (135.000000,15.000000) -- (135.000000,0.000000);
\begin{scope}
\draw (132.878680, -2.121320) -- (137.121320, 2.121320);
\draw (132.878680, 2.121320) -- (137.121320, -2.121320);
\end{scope}
\begin{scope}
\draw (132.878680, 12.878680) -- (137.121320, 17.121320);
\draw (132.878680, 17.121320) -- (137.121320, 12.878680);
\end{scope}
\draw (153.000000,45.000000) -- (153.000000,15.000000);
\begin{scope}
\draw[fill=white] (153.000000, 45.000000) circle(3.000000pt);
\clip (153.000000, 45.000000) circle(3.000000pt);
\draw (150.000000, 45.000000) -- (156.000000, 45.000000);
\draw (153.000000, 42.000000) -- (153.000000, 48.000000);
\end{scope}
\filldraw (153.000000, 15.000000) circle(1.500000pt);
\draw (171.000000,15.000000) -- (171.000000,0.000000);
\begin{scope}
\draw (168.878680, 12.878680) -- (173.121320, 17.121320);
\draw (168.878680, 17.121320) -- (173.121320, 12.878680);
\end{scope}
\begin{scope}
\draw (168.878680, -2.121320) -- (173.121320, 2.121320);
\draw (168.878680, 2.121320) -- (173.121320, -2.121320);
\end{scope}
\end{tikzpicture}

$X_{03}=S_{(02)}X_{23}S_{(02)}=S_{(23)}X_{02}S_{(23)}$
{ \caption{ Example of Identity \eqref{actionxsym} in a $\XG[4]$ circuit.\label{x03swap}}}
\end{center}
\end{figure}

We denote by $\mathrm{SL}_n(K)$ the special linear group on a field $K$ and by $T_{ij}(n)$ the matrix $I_n+n E_{ij}$  where $n\in K\setminus\{0\}$, $i\neq j$
and $E_{ij}$ is the matrix with 0 on all entries but the entry $(i,j)$ which is equal to 1.
Let $(e_k)_{0\leq k < n}$ be the canonical basis of the vector space $K^{n}$ and $(e_k^*)_{0\leq k < n}$ its  dual basis. Then the matrix $T_{ij}(n)$
represents in the canonical basis the automorphism $t_{ij}(n): u\rightarrow u + n e_j^*(u)e_i$ of $K^{n}$ which is a \emph{transvection} fixing the hyperplane $\left<e_k\mid k\neq j\right>$ and directed by the line $\left<e_{i}\right>$. So $T_{ij}(n)$ is a \emph{transvection matrix} and this is a well known fact in linear algebra that the transvection matrices generate $\mathrm{SL}_n(K)$. A simple way to find a decomposition in transvection matrices of any matrix $M$ in $\mathrm{SL}_n(K)$ is to use the Gauss-Jordan algorithm with $M$ as input.  This algorithm is generaly used to compute an inverse of a matrix but it also yields a decomposition of $M$ as a product of transvections (see Figure \ref{Gauss} in Section \ref{general} for an example).

If $K=\F_2$ then $n=1$ and the set $\{T_{ij}(n)|n \in K \}$ is reduced to the $n(n-1)$ transvection matrices $T_{ij}:=I_n+E_{ij}$.
Moreover, since any invertible matrix of $\GL$ has determinant 1, one has :
\begin{equation}
  \GL = \left<T_{ij}\mid 0\leq i,j<n, i\neq j \right>. 
\end{equation}
We recall that the Gauss-Jordan algorithm is based on the following observation :

\begin{prop}\label{GJmult}
Multiplying to the left (resp. the right) any $\F_2$-matrix $M$ by a 
transvection matrix $T_{ij}$ is equivalent to add the $j$th line (resp. 
$i$th column) to the $i$th line (resp. $j$ column) in $M$.
\end{prop}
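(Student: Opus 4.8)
The plan is to compute the two products directly from the definition $T_{ij}=I_n+E_{ij}$ and read off their effect on $M$ one entry at a time; everything reduces to understanding the elementary products $E_{ij}M$ and $ME_{ij}$. First I would expand the left product as
\[
T_{ij}M=(I_n+E_{ij})M=M+E_{ij}M,
\]
so the whole question becomes: what is $E_{ij}M$? Starting from $(E_{ij})_{k\ell}=\delta_{ki}\delta_{\ell j}$ and applying the definition of matrix multiplication, I would obtain $(E_{ij}M)_{k\ell}=\delta_{ki}\,M_{j\ell}$. In words, $E_{ij}M$ is the matrix whose $i$th row coincides with the $j$th row of $M$ and whose remaining rows vanish. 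Adding it to $M$ therefore adds the $j$th line of $M$ to its $i$th line and leaves every other line unchanged, which is exactly the claimed left-multiplication rule.

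For the right product I would argue symmetrically, writing $MT_{ij}=M+ME_{ij}$ and computing $(ME_{ij})_{k\ell}=M_{ki}\,\delta_{\ell j}$, so that $ME_{ij}$ carries the $i$th column of $M$ into its $j$th column and has zeros elsewhere; adding it to $M$ then adds the $i$th column to the $j$th column. Since we work over $\F_2$, "adding a line (resp. column)" is to be understood as the componentwise XOR, consistent with the field addition $\oplus$ used throughout the paper.

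There is no genuine obstacle here: the argument is essentially a one-line computation once the products $E_{ij}M$ and $ME_{ij}$ are identified. The only point demanding care is the index bookkeeping — one must not interchange the source and target indices, and one must keep in mind that left multiplication acts on lines while right multiplication acts on columns, with the roles of $i$ and $j$ swapped between the two cases. I would guard against a slip by verifying the rule on a small explicit example before recording the general statement.
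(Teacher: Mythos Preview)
Your proof is correct: expanding $T_{ij}M=(I_n+E_{ij})M=M+E_{ij}M$ and computing $(E_{ij}M)_{k\ell}=\delta_{ki}M_{j\ell}$ (and symmetrically for the right action) is exactly the right way to establish the claim, and your index bookkeeping is accurate. Note, however, that the paper does not actually prove this proposition; it is stated as a well-known observation underlying the Gauss--Jordan algorithm, so there is no paper proof to compare against --- your argument simply supplies the standard verification that the author omitted.
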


In particular, one has
\begin{equation}\label{tiju}
T_{ij}\begin{bmatrix}b_0\\\vdots\\b_i\\\vdots\\b_j\\\vdots\\b_{n-1}\end{bmatrix}=\begin{bmatrix}b_0\\\vdots\\b_i\oplus b_j\\\vdots\\b_j\\\vdots\\b_{n-1}\end{bmatrix}.
\end{equation}

The notation $\ket{b_0b_1\cdots b_{n-1}}$ used in QIT is a shorthand for the tensor product $\ket{b_0}\otimes\ket{b_1}\otimes\dots\otimes\ket{b_{n-1}}$
and it is convenient to identify the binary  label $ b_0b_1\cdots b_{n-1}$ with the column vector $u=[b_0,b_1,\cdots,b_{n-1}]^{\mathrm{T}}$ of $\F_2^n$ since the $\oplus$ (XOR) operation between two bits corresponds to the addition in $\F_2$.  
So the computational basis of the Hilbert space $\HH_0\otimes \HH_1\otimes\cdots\otimes \HH_{n-1}$ is from now on denoted by $\left(\ket{u}\right)_{u\in\F_2^n}$
and using Relation \eqref{tiju} we can rewrite Relation \eqref{cnotij} in a much cleaner way as
\begin{equation}\label{xtrans} 
  X_{ij}\ket{u}=\ket{T_{ij}u}.
\end{equation}

The above considerations lead quiet naturally to the following theorem :

\begin{theo} \label{iso}The group $\XG$ generated by the $\cnot$ gates acting on $n$ qubits is isomorphic to $\GL$.
  The morphism $\Phi$ sending each $X_{ij}$ to $T_{ij}$ is an explicit isomorphism.
\end{theo}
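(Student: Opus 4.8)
The plan is to avoid defining $\Phi$ directly on the generators of $\XG$ — which would force us to verify that every relation satisfied by the $X_{ij}$ (such as those collected in Proposition \ref{idx}) is respected by the corresponding $T_{ij}$ — and instead to construct the isomorphism in the reverse direction as a faithful permutation representation of $\GL$, then invert it. The whole point of Relation \eqref{xtrans} is that it makes this construction essentially free.

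First I would introduce the representation $\rho$ of $\GL$ on the Hilbert space with basis $(\ket{u})_{u\in\F_2^n}$, defined by linearly extending $\rho(M)\ket{u}:=\ket{Mu}$ for $M\in\GL$ and $u\in\F_2^n$. Since each $M$ permutes $\F_2^n$, every $\rho(M)$ is a $2^n\times 2^n$ permutation matrix, so $\rho$ lands in the same ambient group of matrices as the $X_{ij}$. Evaluating on a basis vector gives $\rho(MN)\ket{u}=\ket{MNu}=\rho(M)\ket{Nu}=\rho(M)\rho(N)\ket{u}$, using associativity of matrix–vector multiplication over $\F_2$; as this holds for every $\ket{u}$ we obtain $\rho(MN)=\rho(M)\rho(N)$, so $\rho$ is a group homomorphism.

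Next comes the bridge. Relation \eqref{xtrans} reads precisely $\rho(T_{ij})=X_{ij}$. Because the transvections $T_{ij}$ generate $\GL$ and $\rho$ is a homomorphism, the image $\rho(\GL)$ is the subgroup generated by the $\rho(T_{ij})=X_{ij}$, which is exactly $\XG$; hence $\rho$ surjects onto $\XG$. For injectivity I would argue that if $\rho(M)=I$ then $\ket{Mu}=\ket{u}$, i.e. $Mu=u$, for all $u\in\F_2^n$; applying this to the canonical basis vectors $e_0,\dots,e_{n-1}$ forces $M=I_n$, so $\ker\rho$ is trivial. Therefore $\rho$ restricts to an isomorphism $\GL\xrightarrow{\sim}\XG$, and its inverse $\Phi=\rho^{-1}$ sends each $X_{ij}=\rho(T_{ij})$ back to $T_{ij}$, which is the claimed explicit isomorphism.

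The only genuinely delicate point is the well-definedness of the correspondence $X_{ij}\mapsto T_{ij}$, and the strategy is designed so that this never has to be checked by hand: by realizing $\Phi$ as the inverse of the faithful, surjective representation $\rho$, compatibility with every relation among the generators is automatic. Everything else — the homomorphism property, faithfulness, and the identification of the image — follows immediately from the fact that $M\mapsto(u\mapsto Mu)$ is a faithful linear action of $\GL$ on $\F_2^n$, combined with the single identity \eqref{xtrans}. I expect no real obstacle beyond choosing this direction of construction; attempting instead to define $\Phi$ on generators and grind through the relations would be the error to avoid.
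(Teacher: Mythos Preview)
Your proof is correct and follows essentially the same approach as the paper: both arguments rest entirely on Relation \eqref{xtrans} and the observation that an element of $\XG$ is determined by its action $\ket{u}\mapsto\ket{Mu}$ on basis vectors. The paper phrases this as uniqueness of preimages of $\Phi$ (``there is only one matrix $N$ satisfying $N\ket{u}=\ket{Mu}$''), whereas you package the same fact as injectivity of the representation $\rho$ and then invert it --- a cleaner presentation that makes the well-definedness of $\Phi$ automatic, but not a genuinely different idea.
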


\begin{proof}
  The surjectivity of $\Phi$ is due to the fact that $\GL$ is generated by the transvections $T_{ij}$.
  Furthermore, due to Relation \eqref{xtrans}, a preimage $N$ by $\Phi$ of a matrix $M$ in $\GL$ must satisfy
  the relation $N\ket{u}=\ket{Mu}$ for any basis vector $\ket{u}$ and there is only one matrix $N$ satisfying this relation.
  So $\Phi$ is also injective and the result is proved.
\end{proof}

Notice that the image by $\Phi$ of a  $\swap$ matrix $S_{(ij)}$ is a transposition matrix $P_{(ij)}:=T_{ij}T_{ji}T_{ij}=T_{ji}T_{ij}T_{ji}$ in
$\GL$ and more generaly $\Phi(S_{\sigma})$ is the permutation matrix $P_{\sigma}$ for any permutation $\sigma$ in $\SYM$.
From now on we will denote by $M^{\sigma}=P_{\sigma}MP_{\sigma}^{-1}$ the conjugate of a matrix $M$ in $\GL$ by a permutation matrix $P_{\sigma}$. Using the isomorphism $\Phi$, Relation \eqref{actionxsym}
leads to 
\begin{equation} T_{ij}^{\sigma}=P_{\sigma}T_{ij}P_{\sigma}^{-1}=T_{\sigma(i) \sigma(j)}.\label{actiontsym}\end{equation}

\medskip

The order $\GL$ is classicaly obtained by computing the number of different basis of $\F_2^{n}$ and Theorem \ref{iso} implies
\begin{cor}
\begin{equation}
  \left|\XG\right|=2^{\frac{n(n-1)}{2}}\prod_{i=1}^n(2^i-1).
\end{equation}
\end{cor}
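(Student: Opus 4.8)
The plan is to invoke Theorem~\ref{iso}, which provides a group isomorphism $\XG\cong\GL$, so that it suffices to compute the order of the finite group $\GL=\mathrm{GL}_n(\F_2)$. An element of $\GL$ is an invertible $n\times n$ matrix over $\F_2$, and such a matrix is precisely one whose columns form an ordered basis of the vector space $\F_2^n$. Hence $|\XG|=|\GL|$ equals the number of ordered bases of $\F_2^n$, and this is the quantity I would count.

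To count the ordered bases I would construct them one column at a time, recording the number of admissible choices at each step. The first column may be any nonzero vector, giving $2^n-1$ possibilities. Once the first $k$ columns have been fixed (for $0\le k<n$), they span a $k$-dimensional subspace of $\F_2^n$ containing exactly $2^k$ vectors; the $(k+1)$-th column must lie outside this subspace in order to keep the family linearly independent, which leaves $2^n-2^k$ admissible vectors. Multiplying these counts over all columns yields
\begin{equation}
|\XG|=\prod_{k=0}^{n-1}\left(2^n-2^k\right).
\end{equation}

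It then remains only to rewrite this product in the stated closed form, a routine algebraic rearrangement. Factoring $2^k$ out of the $k$-th factor gives $2^n-2^k=2^k\left(2^{n-k}-1\right)$, so the product splits as $\left(\prod_{k=0}^{n-1}2^k\right)\prod_{k=0}^{n-1}\left(2^{n-k}-1\right)$; the first factor equals $2^{0+1+\cdots+(n-1)}=2^{n(n-1)/2}$, and reindexing the second by $i=n-k$ turns it into $\prod_{i=1}^n\left(2^i-1\right)$, whence the claimed formula. There is no genuine obstacle in this argument; the only points requiring mild care are the bookkeeping in the counting step (verifying that the span of $k$ independent vectors over $\F_2$ contains exactly $2^k$ elements, so that the remaining freedom is $2^n-2^k$) and the clean extraction of the power of $2$ from each term during the rearrangement.
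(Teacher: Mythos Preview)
Your proposal is correct and follows essentially the same approach as the paper: the paper simply remarks that the order of $\GL$ is classically obtained by counting ordered bases of $\F_2^n$ and that Theorem~\ref{iso} then yields the corollary. You have supplied precisely this argument in full detail, including the standard column-by-column count and the algebraic rearrangement into the stated closed form.
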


In his "Lectures on Chevalley Group" \cite[Chapter 6]{Steinberg1967}, Steinberg gives a presentation of the special linear group on a finite field $K$ in dimension $n\geq  3$. The notation $(a,b)$ stands here for the commutator of two elements $a$ and $b$ of a group  (\textit{i.e.} $a^{-1}b^{-1}ab$) which is usualy denoted by $[a,b]$.

\begin{theo}(Steinberg)
  
  If $n\geq  3$ and $K$ is a finite field, the symbols $x_{ij} (t)$, $(1 \leq i, j \leq n,\  i\neq j,\  t\in K)$
  subject to the relations :
  \begin{description}
  \item[(A)] $\quad x_{ij} (t)x_{ij} (u) = x_{ij} (t + u)$
  \item[(B)] $\quad (x_{ij}(t), x_{jk}(u)) = x_{ik} (tu)$ if  $i,j,k$ are distinct, $(x_{ij}(t), x_{k\ell}(u)) = 1$

    if $j \neq  k,\  i \neq  \ell$
  \end{description}
  define the group $\mathrm{SL}_n(K)$.
\end{theo}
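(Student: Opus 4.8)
The plan is to show that the canonical homomorphism $\phi$ from the abstractly presented group $G$ (generators $x_{ij}(t)$, relations (A) and (B)) to $\mathrm{SL}_n(K)$, defined on generators by $x_{ij}(t)\mapsto I_n+tE_{ij}$, is an isomorphism. First I would check that $\phi$ is well defined and surjective. Well-definedness amounts to verifying that the transvection matrices $I_n+tE_{ij}$ satisfy (A) and (B): this is the direct computation $(I_n+tE_{ij})(I_n+uE_{ij})=I_n+(t+u)E_{ij}$ (using $E_{ij}^2=0$ for $i\neq j$) together with $E_{ij}E_{k\ell}=\delta_{jk}E_{i\ell}$, which makes the commutator $(I_n+tE_{ij},I_n+uE_{jk})$ equal to $I_n+tuE_{ik}$ when $i,j,k$ are distinct, and equal to $I_n$ when $j\neq k$ and $i\neq\ell$. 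Surjectivity is exactly the statement, already invoked in the proof of Theorem \ref{iso}, that the transvections generate $\mathrm{SL}_n(K)$ over a field (Gauss--Jordan reduction).

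The substance of the proof is the injectivity of $\phi$, \emph{i.e.} that (A) and (B) impose no further collapse. I would attack this through a Bruhat-type decomposition of $G$ built purely from the relations. Introduce the monomial elements $w_{ij}(t):=x_{ij}(t)\,x_{ji}(-t^{-1})\,x_{ij}(t)$ and the torus elements $h_{ij}(t):=w_{ij}(t)\,w_{ij}(1)^{-1}$, and set $U^{+}:=\langle x_{ij}(t)\mid i<j\rangle$, $U^{-}:=\langle x_{ij}(t)\mid i>j\rangle$ and $H:=\langle h_{ij}(t)\rangle$. Using (A) and (B) alone one establishes a \emph{collection process}: every element of $U^{+}$ can be written as an ordered product $\prod_{i<j}x_{ij}(t_{ij})$ for a fixed total order on the pairs, and this expression is unique because $\phi$ restricted to $U^{+}$ lands injectively in the strictly upper unitriangular matrices, whose entries are precisely the parameters $t_{ij}$. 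Thus $\phi$ is already an isomorphism on $U^{+}$, and symmetrically on $U^{-}$ and on $H$.

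Next I would verify, again from (A) and (B), the interaction rules between the $w_{ij}$, the torus, and the root subgroups: the $w_{ij}$ conjugate $x_{k\ell}(t)$ to another root element with permuted indices and rescaled parameter (realizing the Weyl group $W\cong\SYM$), while the $h_{ij}$ act diagonally. These rules let one show that the union of the double cosets $B\,\dot w\,B$, where $B:=H\,U^{+}$ and $\dot w$ ranges over a chosen system of monomial lifts of $W$, is stable under left multiplication by every generator $x_{ij}(t)$, hence equals $G$; this is the Bruhat decomposition of $G$. Since $\phi$ is injective on $B$ and sends $B\dot wB$ onto the corresponding Bruhat cell of $\mathrm{SL}_n(K)$, counting cells (here $K$ finite enters) gives $|G|=\sum_{w\in W}|B\dot wB|=|\mathrm{SL}_n(K)|$, which together with surjectivity forces $\phi$ to be a bijection.

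The main obstacle is precisely the closure/uniqueness step of this decomposition: showing from the bare relations that multiplying a cell $B\dot wB$ on the left by a generator $x_{ij}(t)$ stays within the union of cells requires a careful case analysis according to whether the associated simple reflection lengthens or shortens $w$, and it is here that the hypothesis $n\geq 3$ is indispensable---one needs relation (B) with three distinct indices to produce $x_{ik}(tu)$ as a commutator and to push root elements past the monomial elements. For $n=2$ no third index exists, the commutator relations degenerate to triviality between $x_{12}$ and $x_{21}$, and the presented group is far larger than $\mathrm{SL}_2(K)$, so the theorem genuinely fails. I note for completeness the equivalent K-theoretic reading, in which $\ker\phi$ is the unstable group $K_2(n,K)$, coinciding by stability with $K_2(K)$ for $n\geq 3$, and $K_2$ of a finite field vanishes; but I would present the self-contained Bruhat argument above as the primary route.
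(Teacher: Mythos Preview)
The paper does not give its own proof of this theorem: it is quoted verbatim as a result of Steinberg, with a citation to \cite[Chapter~6]{Steinberg1967}, and is used only as a black box to derive Corollary~\ref{cXpres}. So there is nothing in the paper to compare your argument against line by line.

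That said, your sketch is essentially the proof Steinberg gives in those lectures, specialized to type $A_{n-1}$: build the monomial and diagonal elements $w_{ij}(t)$, $h_{ij}(t)$ from the generators, show that $U^{+}$, $U^{-}$, $H$ map isomorphically to their matrix counterparts via the collection lemma, establish the Bruhat decomposition $G=\bigcup_{w}B\dot wB$ from the relations, and finish by a cardinality count over the finite field. Your remark that $n\geq 3$ is needed precisely so that (B) is nonvacuous, and your aside that the kernel is the unstable $K_{2}(n,K)$ (vanishing for finite $K$ by Matsumoto/Quillen), are both to the point. One place where a reader might ask for more detail is the passage from ``$\phi$ is injective on $B$'' and ``the cells cover $G$'' to the order count: you also need that within each double coset every element has a \emph{unique} expression $u\,\dot w\,b$ with $u$ in the appropriate $U_{w}$ and $b\in B$, so that $\phi$ is injective cell by cell and the cells in $G$ are pairwise disjoint (their images in $\mathrm{SL}_{n}(K)$ being disjoint). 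This is standard but is the step where the Steinberg relations really earn their keep.
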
\medskip    

It is straightforward to adapt this presentation to the case $K=\F_2$ and we get so a presentation for the group $\XG$ for any $n\geq  3$.

\begin{cor}\label{cXpres}
  If $n\geq  3$, a presentation of the  group $\mathrm{cX}_n$ is $<\mathcal{S}\mid \mathcal{R}>$ where $\mathcal{S}$ is the set of the $n(n-1)$ symbols $x_{ij}$  $(0\leq i,j\leq n-1,\ i\neq j)$  and $\mathcal{R}$ is the set of the relations :                        
  \vspace{-5mm}
  
  \begin{align}
    &x_{ij}^2=1, \label{invxij}\\
    &(x_{ij}x_{jk})^2=x_{ik},\\
    &(x_{ij}x_{k\ell})^2=1\quad \text{ if } i\neq \ell,j\neq k.
  \end{align}
    
\end{cor}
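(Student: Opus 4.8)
The plan is to specialize the theorem of Steinberg quoted above to the field $K=\F_2$ and to transport the resulting presentation of $\mathrm{SL}_n(\F_2)$ across the isomorphism of Theorem \ref{iso}. First I would observe that over $\F_2$ every invertible matrix has determinant $1$, so $\GL=\mathrm{SL}_n(\F_2)$, and hence by Theorem \ref{iso} the group $\XG$ is isomorphic to the group presented by Steinberg. It therefore suffices to rewrite Steinberg's generators and relations in the case $K=\{0,1\}$ and to match the surviving generator $x_{ij}(1)$ with the symbol $x_{ij}$ (equivalently with the gate $X_{ij}$, whose image under $\Phi$ is the transvection $T_{ij}$).

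The second step reduces the data on both sides. Relation \textbf{(A)} with $t=u=0$ reads $x_{ij}(0)x_{ij}(0)=x_{ij}(0)$, which forces $x_{ij}(0)=1$ in the group; a Tietze transformation then eliminates each of these trivial generators, and since $\F_2$ has a single nonzero element the only generators that remain are the $n(n-1)$ symbols $x_{ij}:=x_{ij}(1)$, exactly the set $\mathcal S$. Now I would translate the relations. Relation \textbf{(A)} with $t=u=1$ becomes $x_{ij}^2=x_{ij}(0)=1$, which is \eqref{invxij}, while every other instance of \textbf{(A)} carries a zero argument and is trivial after the elimination. The involution \eqref{invxij} gives $x_{ij}^{-1}=x_{ij}$, so for any two generators $a,b$ the commutator $(a,b)=a^{-1}b^{-1}ab$ collapses to $(ab)^2$. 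Feeding this into \textbf{(B)} with $t=u=1$ produces $(x_{ij}x_{jk})^2=x_{ik}$ when $i,j,k$ are distinct and $(x_{ij}x_{k\ell})^2=1$ when $j\neq k,\ i\neq \ell$, i.e. the two remaining relations of $\mathcal R$; any instance of \textbf{(B)} with a zero argument holds automatically since a commutator with the identity is the identity and $x_{ik}(tu)=x_{ik}(0)=1$.

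Finally I would verify the converse direction so that no defining relation is lost: running the same case analysis over $t,u\in\{0,1\}$ shows that every instance of \textbf{(A)} and \textbf{(B)} is a consequence of \eqref{invxij} together with $(x_{ij}x_{jk})^2=x_{ik}$ and $(x_{ij}x_{k\ell})^2=1$. Thus $\mathcal R$ is equivalent to Steinberg's relation set and $\langle\mathcal S\mid\mathcal R\rangle$ presents $\mathrm{SL}_n(\F_2)\cong\XG$. I expect the only genuinely delicate point to be the bookkeeping around the commutator convention, namely the simplification $(a,b)=(ab)^2$ that is valid precisely once both generators are involutions; note also that the hypothesis $n\geq 3$ is exactly what is needed for the first half of \textbf{(B)} to admit three distinct indices $i,j,k$.
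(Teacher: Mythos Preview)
Your proof is correct and follows exactly the route the paper indicates: the paper simply states that it is ``straightforward to adapt this presentation to the case $K=\F_2$'' and does not spell out the details, whereas you carry out precisely that specialization (eliminating the trivial generators $x_{ij}(0)$, deriving \eqref{invxij} from relation \textbf{(A)}, and using the involution identity $(a,b)=(ab)^2$ to translate \textbf{(B)}). Your treatment is a faithful, more explicit version of the paper's one-line justification.
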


We remark that all the identities given by Proposition \ref{idx} appear in this presentation but Identity \eqref{braid}. Of course the braid relation \eqref{braid} can be deduced from
the presentation relations but the calculation is tricky. First of all, using the above relations we obtain  the conjugacy relation 
$x_{ij}x_{jk}x_{ij}=x_{jk}x_{ik}=x_{ik}x_{jk}$ (as we did to obtain Relation \eqref{conjx}). Then we use many times this  conjugacy  relation to obtain the relation
$x_{ij}x_{ji}x_{ij}x_{ji}x_{ij}=x_{ji}$ and finally we get the braid relation $x_{ij}x_{ji}x_{ij}=x_{ji}x_{ij}x_{ji}$ using Relation \eqref{invxij}. The calculation is detailled in Figure \ref{tricky}.
 
\begin{figure}[h]
\begin{minipage}[t]{.4\linewidth}
 \begin{equation*}
  \begin{split}
    x_{ji}x_{ij}x_{ji}& = x_{ji}(x_{ik}x_{kj})^2x_{ji}\\
    & =(x_{ji}x_{ik}x_{kj}x_{ji})^2\\
    & = (x_{ji}x_{ik}x_{ji}\ x_{ji}x_{kj}x_{ji})^2\\
    & = (x_{ik}x_{jk}x_{ki}x_{kj})^2
  \end{split}
\end{equation*}
\end{minipage}
\begin{minipage}[t]{.6\linewidth}
\begin{equation*}
  \begin{split}
    x_{ij}x_{ji}x_{ij}x_{ji}x_{ij}&=x_{ij}(x_{ik}x_{jk}x_{ki}x_{kj})^2x_{ij}\\
    &=(x_{ij}x_{ik}x_{jk}x_{ki}x_{kj}x_{ij})^2\\
    &=(x_{ik}x_{ij}x_{jk}x_{ki}x_{ij}x_{kj})^2\\
    &=(x_{ik}x_{ik}x_{jk}x_{ki}x_{kj}x_{kj})^2\\
    &=(x_{jk}x_{ki})^2\\
    &=x_{ji}
  \end{split}
\end{equation*}
\end{minipage}
{ \caption{ Getting the braid relation $x_{ij}x_{ji}x_{ij}=x_{ji}x_{ij}x_{ji}$ from Corollary \ref{cXpres}.\label{tricky}}}
\end{figure}
\medskip

A classical result in Group Theory is that the projective special linear group $\mathrm{PSL}_n(K)$ is simple when $n\geq  3$ (see \textit{e.g.} \cite[Chapter 3]{Wilson2009} for a proof). This group  is defined as $\mathrm{SL}_n(K)/\mathrm{Z}$ where $\mathrm{Z}$ is the subgroup of all scalar matrices with determinant 1,
which appears to be the center of $\mathrm{SL}_n(K)$ . If $K=\F_2$ the identity matrix is is the only scalar matrix that has determinant 1 and the center of $\SL$ is reduced to the trivial group. Hence $\SL=\mathrm{PSL}_n(\F_2)$ and the group $\XG$ is simple when $n\geq  3$.

If $n=2$, $\XG[2]$ has order 6 and is isomorphic to $\SYM[3]$, one possible isomorphism being
$X_{01}\simeq (01), X_{10}\simeq (12), X_{01}X_{10}\simeq (012), X_{10}X_{01}\simeq (021)$ and $X_{01}X_{10}X_{01}\simeq (02)$. So $\XG[2]$ is not a simple group since  the alternating group is always a normal subgroup of $\SYM$.

\medskip

Identity \eqref{noncom} involves 3 distinct integers $i,j,k$ and can be generalized by induction to an arbitrary number of distinct integers in the following way :

\begin{prop}\label{chaslesprop}
  Let $i_1,i_2,\dots,i_p$ be distinct integers $(p\geq  3)$ of $\{0,1, \dots, n-1\}$, then :
  \begin{equation}\label{chaslesform}
    X_{i_1,i_p}=\left(X_{i_1,i_2}X_{i_2,i_3}\dots X_{i_{p-2},i_{p-1}}X_{i_{p-1},i_{p}} X_{i_{p-2}, i_{p-1}}\dots X_{i_2, i_3}\right)^2
  \end{equation}
\end{prop}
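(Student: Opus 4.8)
The plan is to induct on $p$. The base case $p=3$ is exactly the non\nobreakdash-commutation identity \eqref{noncom}: when $p=3$ the descending part of the word in \eqref{chaslesform} is empty, so the right-hand side is $(X_{i_1 i_2}X_{i_2 i_3})^2 = X_{i_1 i_3}$. Throughout I abbreviate by $W_p$ the word inside the square in \eqref{chaslesform}, so the claim to prove is $W_p^2 = X_{i_1 i_p}$.

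For the inductive step I would exploit that $X_{i_{p-1} i_p}$ is the \emph{only} generator in $W_p$ involving the index $i_p$, and that it sits at the centre of the word flanked on both sides by $X_{i_{p-2} i_{p-1}}$. First I rewrite this central block $X_{i_{p-2} i_{p-1}}\,X_{i_{p-1} i_p}\,X_{i_{p-2} i_{p-1}}$ as $X_{i_{p-1} i_p}\,X_{i_{p-2} i_p}$ using the conjugacy relation \eqref{conjx}. Every remaining generator lying to the left of the freed factor $X_{i_{p-1} i_p}$ carries indices among $i_1,\dots,i_{p-2}$, hence commutes with it by \eqref{commute}; sliding it to the front yields $W_p = X_{i_{p-1} i_p}\,V$, where $V$ is precisely the word of \eqref{chaslesform} attached to the $(p-1)$-tuple $(i_1,\dots,i_{p-2},i_p)$ obtained by deleting $i_{p-1}$. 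By the induction hypothesis $V^2 = X_{i_1 i_p}$.

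The crux is then to check that $X_{i_{p-1} i_p}$ commutes with the \emph{whole} word $V$. This works because $i_{p-1}$ does not occur in $V$ at all, while $i_p$ occurs in $V$ only through its single turning factor $X_{i_{p-2} i_p}$; the gates $X_{i_{p-1} i_p}$ and $X_{i_{p-2} i_p}$ share only their control $i_p$ and therefore commute (again a special case of \eqref{commute}), and all other factors of $V$ avoid both $i_{p-1}$ and $i_p$. Granting this commutation together with the involutivity \eqref{inv}, the argument closes at once: $W_p^2 = X_{i_{p-1} i_p}\,V\,X_{i_{p-1} i_p}\,V = V^2 = X_{i_1 i_p}$. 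I expect the only delicate point to be the bookkeeping that identifies the reduced tail exactly with the word on the deleted tuple $(i_1,\dots,i_{p-2},i_p)$ and the attendant degenerate case $p=3$ where the descending part vanishes. As an independent check one can instead pass through the isomorphism $\Phi$ of Theorem \ref{iso} and compute on basis vectors: the ascending factor sends $e_{i_k}\mapsto e_{i_1}+\dots+e_{i_k}$, and a short $\F_2$ telescoping gives $W_p^2 e_{i_p} = e_{i_1}+e_{i_p}$ with all other $e_{i_k}$ fixed, i.e. $W_p^2 = T_{i_1 i_p}$.
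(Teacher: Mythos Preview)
Your proof is correct and follows exactly the inductive route the paper indicates (the paper states the proposition as a generalization of \eqref{noncom} ``by induction'' without supplying any details). Your rewriting of the central block via \eqref{conjx}, the commutation bookkeeping for sliding $X_{i_{p-1}i_p}$ past the remaining factors, and the identification of $V$ with the word on the shortened tuple $(i_1,\dots,i_{p-2},i_p)$ are all sound; this is precisely the induction the paper had in mind.
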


Proposition \ref{chaslesprop} provides a simple way to adapt any $\cnot$ gate to the topological constraints of a given quantum computer. These constraints can be
represented by a directed graph whose vertices are labelled by the qubits. There is an arrow from qubit $i$ to qubit $j$ when these two qubits can interact to perform an  $X_{ij}$  operation on the system. In a device with the complete graph topology, the full connectivity is achieved (see \textit{e.g.} \cite{2017Monroe,2019Wright}) and any $X_{ij}$ operation can be performed directly
(\textit{i.e.} without involving other two-qubit operations on other qubits than $i$ and $j$). In contrast, the constraints in the LNN (Linear Nearest Neighbour) topology are strong as direct interaction is allowed only on
consecutive qubits, \textit{i.e.} only $X_{i,i+1}$ or $X_{i+1,i}$ \cite{2016Zajac}. Of course there are also many intermediate graph configurations as in the IBM superconducting transmon device  (\url{www.ibm.com/quantum-computing/}).
To implement a $X_{ij}$ gate when there is no arrow between $i$ and $j$, the usual method is to use $\swap$ gates together with allowed $\cnot$ gates as we did in Figure \ref{x03swap}. But, if $\swap$ gates are implemented on the device using 3 $\cnot$ gates (circuit equivalence \eqref{swap}), this implementation can be done in a more reliable fashion by using less gates : just find a shortest path $(i_1=i, \dots, i_p=j)$ between $i$ and $j$ in the undirected graph, then apply formula \eqref{chaslesform}.
Notice that it can be important to take into account the error rate associated to each arrow $(i,j)$ in the shortest path computation since it may vary from one gate to another (see Figure \ref{errors}). Case of $(i_k,i_{k+1})$ is not an arrow, use the classical equivalence \ref{HHXHH}  in Figure \ref{equi} to invert target and control. Of course this adds two single-qubit gates to the circuit but the impact on reliability is small since on current experimental quantum devices the fidelity of single qubit gates as the Hadamard gate is much higher than any two-qubit gate fidelity (\textit{e.g.} Figure \ref{errors} again). In the example below, we implement a $\cnot$ gate considering the topology of the 15-qubit ibmq\_16\_melbourne device :

  \begin{center}
    \begin{tikzpicture}[scale=1.5]
      \tikzstyle{sommet}=[circle,draw,thick,minimum width=2em]
      \tikzstyle{arrow}=[thick,<->,>=latex]
      \node[sommet] (q0) at (0,1) {0};
      \node[sommet] (q1) at (1,1) {1};
      \node[sommet] (q2) at (2,1) {2};
      \node[sommet] (q3) at (3,1) {3};
      \node[sommet,fill=gray!30] (q4) at (4,1) {4};
      \node[sommet] (q5) at (5,1) {5};
      \node[sommet] (q6) at (6,1) {6};
      \node[sommet,fill=gray!30] (q7) at (7,0) {7};
      \node[sommet,fill=gray!30] (q8) at (6,0) {8};
      \node[sommet,fill=gray!30] (q9) at (5,0) {9};
      \node[sommet,fill=gray!30] (q10) at (4,0) {10};
      \node[sommet] (q11) at (3,0) {11};
      \node[sommet] (q12) at (2,0) {12};
      \node[sommet] (q13) at (1,0) {13};
      \node[sommet] (q14) at (0,0) {14};
      \draw[arrow] (q0)--(q1);
      \draw[arrow] (q1)--(q2);
      \draw[arrow] (q2)--(q3);
      \draw[arrow] (q3)--(q4);
      \draw[arrow] (q4)--(q5);
      \draw[arrow] (q5)--(q6);
      \draw[arrow] (q6)--(q8);
      \draw[arrow] (q8)--(q9);
      \draw[arrow] (q9)--(q10);
      \draw[arrow] (q10)--(q11);
      \draw[arrow] (q11)--(q12);
      \draw[arrow] (q12)--(q13);
      \draw[arrow] (q13)--(q14);
      \draw[arrow] (q7)--(q8);
      \draw[arrow] (q0)--(q14);
      \draw[arrow] (q1)--(q13);
      \draw[arrow] (q2)--(q12);
      \draw[arrow] (q3)--(q11);
      \draw[arrow] (q4)--(q10);
      \draw[arrow] (q5)--(q9);
    \end{tikzpicture}\medskip

    Choosen path to implement $X_{4,7}$ : $(4,10,9,8,7)$\medskip
    
    $X_{4,7}=\left(X_{4,10}X_{10,9}X_{9,8}X_{8,7}X_{9,8}X_{10,9}\right)^2$    
  \end{center}

We end this section by remarking that $\GL[p]$ can be regarded as a subgroup of $\GL[n]$ if $p<n$. Indeed, let $\phi$ be the injective morphism from $\GL[p]$ into $\GL[n]$ defined by $\phi(M)=\begin{bmatrix}M&0\\0&I_{n-p}\end{bmatrix}$, one has $\GL[p]\simeq\phi(\GL[p])$, so we will consider that $\GL[p]\subset\GL[n]$ and no distinction will be made between matrices $M$ and $\phi(M)$.

\section{Optimization of $\cnot$ gates circuits}\label{general}

A $\cnot$  circuit is \emph{optimal} if there is not another equivalent $\cnot$ circuit with less gates. In the same way a decomposition of a matrix $M \in \GL$ in product of transvections $T_{i,j}$ is \emph{optimal} if $M$ cannot be written with less transvections.
If the length of a  circuit $C'$ is less than the length of an equivalent circuit $C$  we say that $C'$ is a \emph{reduction} of $C$ (or that $C$ has been \emph{reduced} to $C'$) and if $C'$ is \emph{optimal} we say that $C'$ is an \emph{optimization} of $C$ (or that $C$ has been \emph{optimized} to $C'$).

For a better readability we denote from now on the transvection matrix $T_{ij}$ by $[ij]$ and the permutation matrix $P_{\sigma}$
by $\sigma$. As a consequence,  $(ij)$ denotes the transposition which exchanges $i$ and $j$ as well as the transposition matrix $P_{(ij)}$.
Notice that $(ij)=(ji)$ but $[ij]=[ji]^{\mathrm{T}}$ ($[ij]$ is the transpose matrix of $[ji])$. With these notations the braid relation $\eqref{braid}$ becomes
$(ij)=(ji)=[ij][ji][ij]=[ji][ij][ji]$ and Identity \eqref{actiontsym} becomes $\sigma [ij]\sigma^{-1}=[ij]^{\sigma}=[\sigma(i)\sigma(j)]$. In particular, $(ij)[ij](ij)=[ij]^{(ij)}=[ji]$.
\medskip

From Section \ref{algebra}, any $\cnot$ circuit can be optimized using the following rewriting rules.
\begin{prop} \label{reduction}Let $0\leq i,j,k,l\leq n-1$ be distinct integers : 
  \begin{align}    
    &[ij]^2=1\label{invT}\\
    &[ij][jk]\mathbf{[ik]}=\mathbf{[ik]}[ij][jk]=[jk][ij] \ ;\  [ij][ki]\mathbf{[kj]}=\mathbf{[kj]}[ij][ki]=[ki][ij] \label{redcom}\\
    &\mathbf{[ij]}[jk]\mathbf{[ij]}=[jk][ik]=[ik][jk] \ ;\ \mathbf{[ij]}[ki]\mathbf{[ij]}=[ki][kj]=[kj][ki]\label{conjT}\\
    &[ij][kl]=[kl][ij]\ ;\ [ij][ik]=[ik][ij] \ ;\  [ij][kj]=[kj][ij] \label{comT}\\
    &\sigma[ij]=[ij]^{\sigma}\sigma=[\sigma(i)\sigma(j)]\sigma \ ;\ [ij]\sigma=\sigma[ij]^{\sigma^{-1}}=\sigma[\sigma^{-1}(i)\sigma^{-1}(j)]\label{permcom}
  \end{align}
\end{prop}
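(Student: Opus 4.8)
The plan is to read off every line of the proposition as the matrix-level image, under the isomorphism $\Phi$ of Theorem \ref{iso} (which sends $X_{ij}$ to $[ij]$ and $S_\sigma$ to $\sigma$), of a relation already established in Section \ref{algebra}; since $\Phi$ is a group isomorphism, any relation holding among the $X_{ij}$ and $S_\sigma$ holds verbatim among the $[ij]$ and $\sigma$. With this dictionary, \eqref{invT} is simply the image of the involution \eqref{inv}. The three commutation relations \eqref{comT} are instances of \eqref{commute}: the general case $[ij][kl]$, the shared-target case $[ij][ik]$, and the shared-control case $[ij][kj]$ all satisfy the hypotheses $i\neq\ell$, $j\neq k$ of \eqref{commute} once the indices are matched up, so the corresponding products carry no cross term and commute. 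The first equality of \eqref{conjT}, namely $[ij][jk][ij]=[jk][ik]$, is exactly the image of the conjugacy relation \eqref{conjx}, and its companion $[ij][ki][ij]=[ki][kj]$ is the image of the second form of \eqref{conjx}; the remaining equalities in \eqref{conjT}, such as $[jk][ik]=[ik][jk]$, are just instances of the shared-index commutation \eqref{comT}.

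For \eqref{redcom} I would first record, from the non-commutation identity \eqref{noncom}, that $\bigl([ij][jk]\bigr)^2=[ik]$. Combined with \eqref{invT} this shows $[ij][jk]$ has order $4$, so its inverse equals its cube; reading the inverse off directly via \eqref{invT} gives $\bigl([ij][jk]\bigr)^{-1}=[jk][ij]$, whence $[ij][jk][ik]=[ik][ij][jk]=\bigl([ij][jk]\bigr)^3=[jk][ij]$, which is precisely the first chain of \eqref{redcom}. The dual chain involving $[ki]$ follows by the identical argument applied to the companion identity $\bigl([ij][ki]\bigr)^2=[kj]$.

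The permutation relations \eqref{permcom} come straight from \eqref{actiontsym}. Writing $\sigma=P_\sigma$, the statement $[ij]^\sigma=[\sigma(i)\sigma(j)]$ unfolds to $P_\sigma T_{ij}P_\sigma^{-1}=T_{\sigma(i)\sigma(j)}$; right-multiplying by $P_\sigma$ yields $\sigma[ij]=[ij]^\sigma\sigma=[\sigma(i)\sigma(j)]\sigma$. Running the same computation with $\sigma^{-1}$ in place of $\sigma$ and then left-multiplying by $\sigma$ gives the mirror form $[ij]\sigma=\sigma[ij]^{\sigma^{-1}}=\sigma[\sigma^{-1}(i)\sigma^{-1}(j)]$.

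As a uniform alternative to transport through $\Phi$, every line can be checked by direct matrix algebra over $\F_2$: writing $[ij]=I+E_{ij}$ and using $E_{ij}E_{kl}=\delta_{jk}E_{il}$ together with $2=0$, each product collapses in one line, the distinctness hypotheses being exactly what prevents an unwanted cross term $E_{il}$ from surviving. I expect no real obstacle; the only places demanding care are the bookkeeping in \eqref{redcom}, where three expressions must be shown equal and one must invoke the order-$4$ fact rather than a single rewrite, and the tracking of $\sigma$ versus $\sigma^{-1}$ in \eqref{permcom}.
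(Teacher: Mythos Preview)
Your proposal is correct and matches the paper's approach: the paper does not give an explicit proof of this proposition but simply introduces it with ``From Section~\ref{algebra}, any $\cnot$ circuit can be optimized using the following rewriting rules'', i.e.\ it regards every line as the image under $\Phi$ of an identity already established for the $X_{ij}$ (namely \eqref{inv}, \eqref{commute}, \eqref{noncom}, \eqref{conjx} and \eqref{actiontsym}). Your derivation of \eqref{redcom} from the order-$4$ consequence of \eqref{noncom} and your careful unpacking of \eqref{permcom} from \eqref{actiontsym} fill in exactly the details the paper leaves implicit.
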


In practice one tries to find an \emph{ad hoc} sequence of the above rules to reduce or to optimize  a given circuit (see Figure \ref{adhoc} for an example) but the computation can be tricky even with a few qubits. So it is difficult to build a general reduction/optimization algorithm from Proposition \ref{reduction}.

\begin{figure}[h]
Circuit $C$ to reduce : \ \raisebox{-9mm}{
\begin{tikzpicture}[scale=1.200000,x=1pt,y=1pt]
\filldraw[color=white] (0.000000, -7.500000) rectangle (108.000000, 37.500000);
\draw[color=black] (0.000000,30.000000) -- (108.000000,30.000000);
\draw[color=black] (0.000000,30.000000) node[left] {$q_0$};
\draw[color=black] (0.000000,15.000000) -- (108.000000,15.000000);
\draw[color=black] (0.000000,15.000000) node[left] {$q_1$};
\draw[color=black] (0.000000,0.000000) -- (108.000000,0.000000);
\draw[color=black] (0.000000,0.000000) node[left] {$q_2$};
\draw (9.000000,30.000000) -- (9.000000,15.000000);
\begin{scope}
\draw[fill=white] (9.000000, 30.000000) circle(3.000000pt);
\clip (9.000000, 30.000000) circle(3.000000pt);
\draw (6.000000, 30.000000) -- (12.000000, 30.000000);
\draw (9.000000, 27.000000) -- (9.000000, 33.000000);
\end{scope}
\filldraw (9.000000, 15.000000) circle(1.500000pt);
\draw (27.000000,30.000000) -- (27.000000,0.000000);
\begin{scope}
\draw[fill=white] (27.000000, 0.000000) circle(3.000000pt);
\clip (27.000000, 0.000000) circle(3.000000pt);
\draw (24.000000, 0.000000) -- (30.000000, 0.000000);
\draw (27.000000, -3.000000) -- (27.000000, 3.000000);
\end{scope}
\filldraw (27.000000, 30.000000) circle(1.500000pt);
\draw (45.000000,15.000000) -- (45.000000,0.000000);
\begin{scope}
\draw[fill=white] (45.000000, 15.000000) circle(3.000000pt);
\clip (45.000000, 15.000000) circle(3.000000pt);
\draw (42.000000, 15.000000) -- (48.000000, 15.000000);
\draw (45.000000, 12.000000) -- (45.000000, 18.000000);
\end{scope}
\filldraw (45.000000, 0.000000) circle(1.500000pt);
\draw (63.000000,30.000000) -- (63.000000,15.000000);
\begin{scope}
\draw[fill=white] (63.000000, 30.000000) circle(3.000000pt);
\clip (63.000000, 30.000000) circle(3.000000pt);
\draw (60.000000, 30.000000) -- (66.000000, 30.000000);
\draw (63.000000, 27.000000) -- (63.000000, 33.000000);
\end{scope}
\filldraw (63.000000, 15.000000) circle(1.500000pt);
\draw (81.000000,30.000000) -- (81.000000,0.000000);
\begin{scope}
\draw (78.878680, 27.878680) -- (83.121320, 32.121320);
\draw (78.878680, 32.121320) -- (83.121320, 27.878680);
\end{scope}
\begin{scope}
\draw (78.878680, -2.121320) -- (83.121320, 2.121320);
\draw (78.878680, 2.121320) -- (83.121320, -2.121320);
\end{scope}
\draw (99.000000,30.000000) -- (99.000000,15.000000);
\begin{scope}
\draw[fill=white] (99.000000, 15.000000) circle(3.000000pt);
\clip (99.000000, 15.000000) circle(3.000000pt);
\draw (96.000000, 15.000000) -- (102.000000, 15.000000);
\draw (99.000000, 12.000000) -- (99.000000, 18.000000);
\end{scope}
\filldraw (99.000000, 30.000000) circle(1.500000pt);
\end{tikzpicture}
}

  \begin{align*}
    M&=[10](02)[01][12][20][01]\\
     &\stackrel{\eqref{permcom}}{=}(02)[12][01][12][20][01]\\
     &\stackrel{\eqref{conjT}}{=}(02)[01][02][20][01]\\
     &\stackrel{\eqref{permcom}}{=}[21](02)[02][20][01]\\
     &=[21][02][20][02][02][20][01]\\
     &\stackrel{\eqref{invT}}{=}[21][02][01]\\
     &\stackrel{\eqref{redcom}}{=}[02][21]\\
  \end{align*}
  \vspace{-13mm}
  
  Optimized circuit $C'\sim C$ : \ \raisebox{-9mm}{
\begin{tikzpicture}[scale=1.200000,x=1pt,y=1pt]
\filldraw[color=white] (0.000000, -7.500000) rectangle (36.000000, 37.500000);
\draw[color=black] (0.000000,30.000000) -- (36.000000,30.000000);
\draw[color=black] (0.000000,30.000000) node[left] {$q_0$};
\draw[color=black] (0.000000,15.000000) -- (36.000000,15.000000);
\draw[color=black] (0.000000,15.000000) node[left] {$q_1$};
\draw[color=black] (0.000000,0.000000) -- (36.000000,0.000000);
\draw[color=black] (0.000000,0.000000) node[left] {$q_2$};
\draw (9.000000,15.000000) -- (9.000000,0.000000);
\begin{scope}
\draw[fill=white] (9.000000, 0.000000) circle(3.000000pt);
\clip (9.000000, 0.000000) circle(3.000000pt);
\draw (6.000000, 0.000000) -- (12.000000, 0.000000);
\draw (9.000000, -3.000000) -- (9.000000, 3.000000);
\end{scope}
\filldraw (9.000000, 15.000000) circle(1.500000pt);
\draw (27.000000,30.000000) -- (27.000000,0.000000);
\begin{scope}
\draw[fill=white] (27.000000, 30.000000) circle(3.000000pt);
\clip (27.000000, 30.000000) circle(3.000000pt);
\draw (24.000000, 30.000000) -- (30.000000, 30.000000);
\draw (27.000000, 27.000000) -- (27.000000, 33.000000);
\end{scope}
\filldraw (27.000000, 0.000000) circle(1.500000pt);
\end{tikzpicture}
}
{ \caption{ Optimizing a $\cnot$ circuit using an \emph{adhoc} sequence of reduction rules.\label{adhoc}}}
\end{figure}

  \medskip
  
  To optimize a $\cnot$ circuit of a few qubits we use a C ANSI program that builds the Cayley Graph of the group $\GL$ by Breadth-first search
  and then find in the graph the group element corresponding to that circuit.  This program optimizes in a few seconds any $\cnot$ circuit up to $5$ qubits and the source code can be downloaded  at
  \href{https://github.com/marcbataille/cnot-circuits/blob/master/optimization/cnot_opt.c}{\texttt{https://github.com/marcbataille/cnot-circuits}}. In the rest of this article we refer to it as "the computer optimization program". However, due to the exponential growth of the group order (for instance $\mathrm{Card}(\XG[6])=\np{20 158 709 760}$), the computer method fails  from 6 qubits with a basic PC. 
In this context we remark that the Gauss Jordan algorithm mentionned in Section \ref{algebra} can be used
as a heuristic method to reduce a given $\cnot$ gates circuit (see Figure \ref{Gauss} for a detailled example). It works as follows.

\begin{itemize}
  
\item Each $X_{ij}$ gate of a given circuit $C$ corresponds to a transvection matrix $[ij]$.
  Multiplying these transvection matrices yields a matrix $M$ in $\GL$ that we call \emph{the matrix of the circuit in dimension $n$}.

\item The Gauss-Jordan algorithm applied to the matrix $M$ gives a decomposition of $M$ under the form $M=KU$ where  $U$ is an upper triangular matrix
  and $K$ is a matrix that is a lower triangular matrix if and only if the element of $\F_2$ that appears on the diagonal at each step of the algorithm is 1, so we can choose it as pivot.
  In this case the matrix $M$ as a $LU$ (\emph{Lower Upper}) decomposition and this decomposition is unique. If at a given step the bit on the diagonal is 0 then we have to choose another pivot and the matrix $K$ is not triangular anymore.
    Morevoer the decomposition $M=KU$ is not unique as the matrix $K$ depends on the choice of the pivot at each step, so the algorithm is not deterministic in this case.

\item A decomposition of $K$ in product of transvections is directly obtained while executing the algorithm and a simple and direct decomposition of an upper triangular matrix $U=(u_{ij})$ in $\GL$ is
  $U=\prod_{j=n-1}^{1}\prod_{i=j-1}^{0}[ij]^{u_{ij}}$. We call this decomposition the \emph{canonical} decomposition of $U$ and symmetrically the canonical
  decomposition of a lower triangular matrix is  $L=\prod_{j=0}^{n-2}\prod_{i=j+1}^{n-1}[ij]^{l_{ij}}$.
  
\item Replacing each transvection $[ij]$ in the decomposition of $M$  by the corresponding $\cnot$ gate $X_{ij}$,
  we obtain a circuit $C'$ equivalent to $C$.  If the length of $C'$ is less than the length of $C$ then we have successfully reduced the circuit $C$. If not, the heuristic failed.
  
\end{itemize}

We note that the circuit $C'$ obtained by this process is generally  not optimal even for small values of $n$.
However the Gauss-Jordan algorithm has the great
advantage of providing  an upper bound on the optimal number of $\cnot$ gates in a circuit.

\begin{prop}\label{boundGauss}
Any $n$-qubit circuit of $\cnot$ gates is equivalent to a circuit composed of less than $n^2$ gates.
\end{prop}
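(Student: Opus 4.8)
The plan is to transport the statement through the isomorphism $\Phi$ of Theorem \ref{iso} and then bound the number of transvections needed to build an arbitrary matrix of $\GL$. By Theorem \ref{iso} the given circuit corresponds to a matrix $M\in\GL$, and producing an equivalent circuit with few gates amounts, via $\Phi^{-1}$, to writing $M$ as a short product of transvection matrices $[ij]$. Since each $[ij]$ is an involution, it suffices to reduce $M$ to $I_n$ by left multiplications by transvections: if $[i_rj_r]\cdots[i_1j_1]\,M=I_n$ then $M=[i_1j_1]\cdots[i_rj_r]$, so the number of factors equals the number of row operations performed. By Proposition \ref{GJmult} each such left multiplication is precisely the operation ``add row $j$ to row $i$'', so I would simply run Gauss--Jordan elimination over $\F_2$ and count the steps.

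First I would perform a forward pass over the columns $j=0,\dots,n-2$. For each $j$, if the current diagonal entry is $0$ I add to row $j$ some lower row having a $1$ in column $j$ (one transvection, a \emph{pivot fix}), and then I clear the at most $n-1-j$ remaining $1$'s lying below the diagonal in column $j$. After this pass $M$ is upper triangular; since an invertible triangular matrix over $\F_2$ necessarily carries $1$'s on its diagonal, no further pivot fix will ever be required. I would then perform a backward pass over $j=n-1,\dots,1$, clearing the at most $j$ entries above the diagonal in column $j$, which completes the reduction to $I_n$.

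Counting then yields the bound. The clearings below the diagonal contribute $\sum_{j=0}^{n-2}(n-1-j)=\binom n2$ transvections, the clearings above the diagonal contribute $\sum_{j=1}^{n-1}j=\binom n2$, and the pivot fixes contribute at most one per forward column, hence at most $n-1$ in total. Therefore $M$ is a product of at most $2\binom n2+(n-1)=n(n-1)+(n-1)=n^2-1$ transvections, and applying $\Phi^{-1}$ produces an equivalent $\cnot$ circuit of at most $n^2-1<n^2$ gates, as claimed.

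The delicate point is securing the strict inequality rather than merely $\leq n^2$: a crude accounting that permits a pivot fix in each of the $n$ columns, or that charges a full column of up to $n$ operations $n$ times, only gives $n^2$. The saving comes from two observations I would make explicit, namely that the backward pass needs no pivot fixes at all (the unit diagonal is automatic over $\F_2$) and that the two triangular clearing phases contribute exactly $\binom n2$ each, so that the pivot fixes---bounded by $n-1$, not $n$---are exactly what keeps the total at $n^2-1$.
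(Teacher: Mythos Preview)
Your proof is correct and follows essentially the same Gauss--Jordan argument as the paper: reduce $M$ to upper triangular form by at most $\binom n2$ clearings plus at most $n-1$ pivot fixes, then decompose the triangular part with at most $\binom n2$ further transvections, for a total of $n^2-1$. Your justification for why only $n-1$ pivot fixes are ever needed (the last diagonal entry is automatically $1$ once the matrix is upper triangular and invertible over $\F_2$) is in fact more explicit than the paper's.
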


\begin{proof}
  We apply the Gauss-Jordan elimination algorithm to get a decomposition of $M\in\GL$ in transvections. The first part of the algorithm consists in
  multiplying $M$ to the left by
  a sequence of transvections in order to obtain an upper triangular matrix $U$. For $k=0,1,\dots,n-1$, we consider the column $k$ and the entry of the matrix on the diagonal. If this entry is 1 we choose it as a pivot but if it is 0 we first need to put a 1 on the diagonal. This can be done by swaping row $k$ with a row $\ell$ whose entry $(\ell,k)$ is 1. However, this will cost 3 transvections and it is more economical to add row $\ell$ to row $k$, \emph{i.e.} multiplying to the left by $[k,\ell]$.
  The number of left multiplications by transvections necessary to have a pivot equal to 1 on the diagonal at each step is bounded by $n-1$ and the number of left multiplications necessary to eliminate the 1's below the pivot is bounded by $n-1+\dots+1=\frac{n(n-1)}{2}$. Hence the number of transvections that appears in the first part of the algorithm is bounded
  by $\frac{n(n+1)}{2}-1$. In the second part of the algorithm we just write the canonical decomposition of $U$ which contains at most $n-1+n-2+\cdots 1=\frac{n(n-1)}{2}$ transvections. Finally we see that the number of transvections in the resulting decomposition of $M$ is at most $n^2-1$.
  \end{proof}

  \begin{figure}[h]
    $\mathtt{INPUT} :  C=\ $ \raisebox{-12mm}{
\begin{tikzpicture}[scale=1.200000,x=1pt,y=1pt]
\filldraw[color=white] (0.000000, -7.500000) rectangle (192.000000, 52.500000);
\draw[color=black] (0.000000,45.000000) -- (192.000000,45.000000);
\draw[color=black] (0.000000,45.000000) node[left] {$q_0$};
\draw[color=black] (0.000000,30.000000) -- (192.000000,30.000000);
\draw[color=black] (0.000000,30.000000) node[left] {$q_1$};
\draw[color=black] (0.000000,15.000000) -- (192.000000,15.000000);
\draw[color=black] (0.000000,15.000000) node[left] {$q_2$};
\draw[color=black] (0.000000,0.000000) -- (192.000000,0.000000);
\draw[color=black] (0.000000,0.000000) node[left] {$q_3$};
\draw (9.000000,45.000000) -- (9.000000,30.000000);
\begin{scope}
\draw[fill=white] (9.000000, 45.000000) circle(3.000000pt);
\clip (9.000000, 45.000000) circle(3.000000pt);
\draw (6.000000, 45.000000) -- (12.000000, 45.000000);
\draw (9.000000, 42.000000) -- (9.000000, 48.000000);
\end{scope}
\filldraw (9.000000, 30.000000) circle(1.500000pt);
\draw (27.000000,30.000000) -- (27.000000,15.000000);
\begin{scope}
\draw[fill=white] (27.000000, 30.000000) circle(3.000000pt);
\clip (27.000000, 30.000000) circle(3.000000pt);
\draw (24.000000, 30.000000) -- (30.000000, 30.000000);
\draw (27.000000, 27.000000) -- (27.000000, 33.000000);
\end{scope}
\filldraw (27.000000, 15.000000) circle(1.500000pt);
\draw (33.000000,45.000000) -- (33.000000,0.000000);
\begin{scope}
\draw[fill=white] (33.000000, 45.000000) circle(3.000000pt);
\clip (33.000000, 45.000000) circle(3.000000pt);
\draw (30.000000, 45.000000) -- (36.000000, 45.000000);
\draw (33.000000, 42.000000) -- (33.000000, 48.000000);
\end{scope}
\filldraw (33.000000, 0.000000) circle(1.500000pt);
\draw (51.000000,45.000000) -- (51.000000,15.000000);
\begin{scope}
\draw[fill=white] (51.000000, 15.000000) circle(3.000000pt);
\clip (51.000000, 15.000000) circle(3.000000pt);
\draw (48.000000, 15.000000) -- (54.000000, 15.000000);
\draw (51.000000, 12.000000) -- (51.000000, 18.000000);
\end{scope}
\filldraw (51.000000, 45.000000) circle(1.500000pt);
\draw (69.000000,45.000000) -- (69.000000,0.000000);
\begin{scope}
\draw[fill=white] (69.000000, 0.000000) circle(3.000000pt);
\clip (69.000000, 0.000000) circle(3.000000pt);
\draw (66.000000, 0.000000) -- (72.000000, 0.000000);
\draw (69.000000, -3.000000) -- (69.000000, 3.000000);
\end{scope}
\filldraw (69.000000, 45.000000) circle(1.500000pt);
\draw (87.000000,15.000000) -- (87.000000,0.000000);
\begin{scope}
\draw[fill=white] (87.000000, 15.000000) circle(3.000000pt);
\clip (87.000000, 15.000000) circle(3.000000pt);
\draw (84.000000, 15.000000) -- (90.000000, 15.000000);
\draw (87.000000, 12.000000) -- (87.000000, 18.000000);
\end{scope}
\filldraw (87.000000, 0.000000) circle(1.500000pt);
\draw (105.000000,45.000000) -- (105.000000,15.000000);
\begin{scope}
\draw[fill=white] (105.000000, 15.000000) circle(3.000000pt);
\clip (105.000000, 15.000000) circle(3.000000pt);
\draw (102.000000, 15.000000) -- (108.000000, 15.000000);
\draw (105.000000, 12.000000) -- (105.000000, 18.000000);
\end{scope}
\filldraw (105.000000, 45.000000) circle(1.500000pt);
\draw (123.000000,15.000000) -- (123.000000,0.000000);
\begin{scope}
\draw[fill=white] (123.000000, 0.000000) circle(3.000000pt);
\clip (123.000000, 0.000000) circle(3.000000pt);
\draw (120.000000, 0.000000) -- (126.000000, 0.000000);
\draw (123.000000, -3.000000) -- (123.000000, 3.000000);
\end{scope}
\filldraw (123.000000, 15.000000) circle(1.500000pt);
\draw (141.000000,15.000000) -- (141.000000,0.000000);
\begin{scope}
\draw[fill=white] (141.000000, 15.000000) circle(3.000000pt);
\clip (141.000000, 15.000000) circle(3.000000pt);
\draw (138.000000, 15.000000) -- (144.000000, 15.000000);
\draw (141.000000, 12.000000) -- (141.000000, 18.000000);
\end{scope}
\filldraw (141.000000, 0.000000) circle(1.500000pt);
\draw (159.000000,30.000000) -- (159.000000,15.000000);
\begin{scope}
\draw[fill=white] (159.000000, 15.000000) circle(3.000000pt);
\clip (159.000000, 15.000000) circle(3.000000pt);
\draw (156.000000, 15.000000) -- (162.000000, 15.000000);
\draw (159.000000, 12.000000) -- (159.000000, 18.000000);
\end{scope}
\filldraw (159.000000, 30.000000) circle(1.500000pt);
\draw (165.000000,45.000000) -- (165.000000,0.000000);
\begin{scope}
\draw[fill=white] (165.000000, 0.000000) circle(3.000000pt);
\clip (165.000000, 0.000000) circle(3.000000pt);
\draw (162.000000, 0.000000) -- (168.000000, 0.000000);
\draw (165.000000, -3.000000) -- (165.000000, 3.000000);
\end{scope}
\filldraw (165.000000, 45.000000) circle(1.500000pt);
\draw (183.000000,45.000000) -- (183.000000,15.000000);
\begin{scope}
\draw[fill=white] (183.000000, 45.000000) circle(3.000000pt);
\clip (183.000000, 45.000000) circle(3.000000pt);
\draw (180.000000, 45.000000) -- (186.000000, 45.000000);
\draw (183.000000, 42.000000) -- (183.000000, 48.000000);
\end{scope}
\filldraw (183.000000, 15.000000) circle(1.500000pt);
\end{tikzpicture}
}
\medskip

$M=[02][30][21][23][32][20][23][30][20][03][12][01]$\medskip

    $M=\begin{bmatrix}
      0&1&1&1\\
      0&1&1&0\\
      1&0&1&0\\
      1&1&1&1\\
    \end{bmatrix};\
    [03]M=\begin{bmatrix}
      1&0&0&0\\
      0&1&1&0\\
      1&0&1&0\\
      1&1&1&1\\
    \end{bmatrix};\
    [31][20][30][03]M=\begin{bmatrix}
      1&0&0&0\\
      0&1&1&0\\
      0&0&1&0\\
      0&0&0&1\\
    \end{bmatrix}$\medskip
    
 $K=[03][30][20][31] \quad ;\quad U=[12] \quad ; \quad    M=KU=[03][30][20][31][12]$

    $\mathtt{OUTPUT}  : C'=\ $ \raisebox{-12mm}{
\begin{tikzpicture}[scale=1.200000,x=1pt,y=1pt]
\filldraw[color=white] (0.000000, -7.500000) rectangle (78.000000, 52.500000);
\draw[color=black] (0.000000,45.000000) -- (78.000000,45.000000);
\draw[color=black] (0.000000,45.000000) node[left] {$q_0$};
\draw[color=black] (0.000000,30.000000) -- (78.000000,30.000000);
\draw[color=black] (0.000000,30.000000) node[left] {$q_1$};
\draw[color=black] (0.000000,15.000000) -- (78.000000,15.000000);
\draw[color=black] (0.000000,15.000000) node[left] {$q_2$};
\draw[color=black] (0.000000,0.000000) -- (78.000000,0.000000);
\draw[color=black] (0.000000,0.000000) node[left] {$q_3$};
\draw (9.000000,30.000000) -- (9.000000,15.000000);
\begin{scope}
\draw[fill=white] (9.000000, 30.000000) circle(3.000000pt);
\clip (9.000000, 30.000000) circle(3.000000pt);
\draw (6.000000, 30.000000) -- (12.000000, 30.000000);
\draw (9.000000, 27.000000) -- (9.000000, 33.000000);
\end{scope}
\filldraw (9.000000, 15.000000) circle(1.500000pt);
\draw (27.000000,30.000000) -- (27.000000,0.000000);
\begin{scope}
\draw[fill=white] (27.000000, 0.000000) circle(3.000000pt);
\clip (27.000000, 0.000000) circle(3.000000pt);
\draw (24.000000, 0.000000) -- (30.000000, 0.000000);
\draw (27.000000, -3.000000) -- (27.000000, 3.000000);
\end{scope}
\filldraw (27.000000, 30.000000) circle(1.500000pt);
\draw (33.000000,45.000000) -- (33.000000,15.000000);
\begin{scope}
\draw[fill=white] (33.000000, 15.000000) circle(3.000000pt);
\clip (33.000000, 15.000000) circle(3.000000pt);
\draw (30.000000, 15.000000) -- (36.000000, 15.000000);
\draw (33.000000, 12.000000) -- (33.000000, 18.000000);
\end{scope}
\filldraw (33.000000, 45.000000) circle(1.500000pt);
\draw (51.000000,45.000000) -- (51.000000,0.000000);
\begin{scope}
\draw[fill=white] (51.000000, 0.000000) circle(3.000000pt);
\clip (51.000000, 0.000000) circle(3.000000pt);
\draw (48.000000, 0.000000) -- (54.000000, 0.000000);
\draw (51.000000, -3.000000) -- (51.000000, 3.000000);
\end{scope}
\filldraw (51.000000, 45.000000) circle(1.500000pt);
\draw (69.000000,45.000000) -- (69.000000,0.000000);
\begin{scope}
\draw[fill=white] (69.000000, 45.000000) circle(3.000000pt);
\clip (69.000000, 45.000000) circle(3.000000pt);
\draw (66.000000, 45.000000) -- (72.000000, 45.000000);
\draw (69.000000, 42.000000) -- (69.000000, 48.000000);
\end{scope}
\filldraw (69.000000, 0.000000) circle(1.500000pt);
\end{tikzpicture}
}
 \caption{ The Gauss-Jordan algorithm applied to a $\cnot$ gates circuit.\label{Gauss}}
    \end{figure}

    The canonical decomposition of an upper or lower triangular matrix is generally not optimal, so we propose in what follows another algorithm to decompose a triangular matrix in transvections. Applying this algorithm to the matrix $U$ when the decomposition of $M$ is $KU$ or to both matrices $L$ and $U$ when the decomposition of $M$ is $LU$ can often improve the decomposition given by the Gauss-Jordan heuristic, especially when the triangular matrix has a high density of 1's (See Figure \ref{LU} for an example). We describe below this algorithm for an upper triangular matrix (UTD algorithm) but it can be easily adapted to the case of a lower triangular matrix (LTD algorithm). 
    We denote by $|L_i|$ is the number of 1's in line $i$ of a matrix.

    \begin{algo} Upper Triangular Decomposition (UTD algorithm)
      
      $\mathtt{INPUT}$ : An upper triangular matrix $U$ in $\GL$.
      
      $\mathtt{OUTPUT}$ : A sequence  $S=(t_0,t_1,\dots,t_{p-1})$ of transvections such that  $U=\prod_{j=0}^{p-1}t_j$. 

      $\mathtt{1}\quad\ $ $S=()$;\  $j=0$;

      $\mathtt{2}\quad\ $ $\mathtt{while}\ U\neq I$ : 

      $\mathtt{3}\quad\ $\quad $\mathtt{for}\  i=0\  \mathtt{to}\ n-2$ :
      
      $\mathtt{4}\quad\ $\quad\quad $\mathtt{if}\ |L_i|>1$ :

      $\mathtt{5}\quad\ $\quad\quad\quad Let $E_i=\{ k \mid i<k\leq n-1 \text{ and } |L_i\oplus L_k|<|L_i|\}$;
      
      $\mathtt{6}\quad\ $\quad\quad\quad $\mathtt{if}\ E_i\neq \emptyset$ :

      $\mathtt{7}\quad\ $\quad\quad\quad\quad Choose the smallest $k\in E_i$ such that $|L_i\oplus L_k|$ is minimal;

      $\mathtt{8}\quad\ $\quad\quad\quad\quad $U=[ik]U$;

      $\mathtt{9}\quad\ $\quad\quad\quad\quad $t_j=[ik]$\ ;\  $j=j+1$;

      $\mathtt{10}\quad$ $\mathtt{return}\  S$;

    \end{algo}

    The algorithm ends because for each loop ($\mathtt{for}\  i=0\  \mathtt{to}\ n-2$) the number of 1's in the matrix decreases of at least one. Indeed, there is always at least one couple $(i,k)$ choosen during this loop, namely : $i=\mathrm{max}\{j\mid |L_j|>1\}$ and $k=\mathrm{max}\{j\mid U_{ij}\neq 0\}$.
    Furthermore, we notice that the number of transvections in the decomposition $S$ is always less than or equal to the number of transvections in the canonical decomposition of the matrix $U$ since each transvection in the canonical decomposition of $U$ contributes to a decrease of exactly one in the initial number of 1's in the matrix $U$ whereas each transvection in $S$ contributes to a decrease of at least one in this initial number of 1's.\medskip

    We observe that the decomposition in transvections obtained by applying the Gauss-Jordan algorithm followed by the UTD or LTD algorithm is generally not optimal.
To our knowledge, the best general algorithm to decompose a $\GL$-matrix in transvections is that proposed in  2004 by Patel \textit{et.al.} \cite{2004PMH}. It gives a decomposition in $O(n^2/\log n)$ transvections and shows experimentaly an improvement over Gaussian
elimination for $n$ as small as 8 according to the authors. However, the decomposition obtained using this algorithm is, again, not optimal.
     In order to better understand the optimization problem, it would be interesting to find out, for a given $n\geq 2$, what is the maximal number of transvections  required to decompose any matrix in $\GL$ (or equivalently the maximal length of an optimal circuit of $\XG$). Let us denote by $\mathrm{MaxT}(n)$ this number.
Thanks to a C ANSI program we checked that $\mathrm{MaxT}(n)=3(n-1)$ for any $n\leq 5$ (see the results summarized in Table \ref{stat}). The source code of the program can be downloaded at \href{https://github.com/marcbataille/cnot-circuits/blob/master/optimization/cnot_conj.c}{\texttt{https://github.com/marcbataille/cnot-circuits}}.
Unfortunately, this simple formula cannot be generalized to any $n$. This is a straightforward consequence of a result established by Patel \textit{et.al.} \cite[lemma 1]{2004PMH}. Indeed, the authors proved that
\begin{equation}\label{MaxT}
  \mathrm{MaxT}(n)>\dfrac{n^2-n}{\log_2(n^2-n+1)}.
\end{equation}
Their argument is as follows : there are $n(n-1)$ different $\cnot$ gates and by adding the identity we obtain a set of $n(n-1)+1$ gates. So the number of $\cnot$ circuits of length less than $B(n)$ is less than $(n(n-1)+1)^{\mathrm{MaxT}(n)}$ circuit. As this number of circuits is greater than the order of $\GL$, one has
$(n(n-1)+1)^{\mathrm{MaxT}(n)}>2^{\frac{n(n-1)}{2}}\prod_{i=1}^n(2^i-1)>2^{n(n-1)}$ and \ref{MaxT} follows.

We check that $\dfrac{n^2-n}{\log_2(n^2-n+1)}>3(n-1)$ if $n>29$. So, from $30$ qubits, there are certainly some $\cnot$ circuits that cannot be written with less than $3(n-1)$ $\cnot$ gates. However, further investigations are necessary to find out
up to how many qubits the linear value $3(n-1)$ for $\mathrm{MaxT}(n)$ remains valid.

    \begin{table}[h]
      \begin{center}
    \begin{tabular}{|c|c|c|c|c|}\hline
      optimal length&$n=2$&n=3&$n=4$&$n=5$\\\hline\hline
      $l=0$&1&1&1&1\\
      $l=1$&2&6&12&20\\
      $l=2$&2&24&96&260\\
      $l=3$&\textbf{1}&51&542&\np{2570}\\
      $l=4$&&60&\np{2058}&\np{19680}\\
      $l=5$&&24&\np{5316}&\np{117860}\\
      $l=6$&&\textbf{2}&\np{7530}&\np{540470}\\
      $l=7$&&&\np{4058}&\np{1769710}\\
      $l=8$&&&541&\np{3571175}\\
      $l=9$&&&\textbf{6}&\np{3225310}\\
      $l=10$&&&&\np{736540}\\
      $l=11$&&&&\np{15740}\\
      $l=12$&&&&\textbf{24}\\\hline\hline
      Order of $\XG$&6&168&\np{20160}&\np{9999360}\\\hline
    \end{tabular}
    \end{center}
      { \caption{ Number of elements of $\XG$ having an optimal decomposition of length $l$.
        The numbers in bold correspond to the $(n-1)!$ cyclic permutations of length $n$.\label{stat}}}
    \end{table}
    
    \begin{figure}[h]
      $\mathtt{INPUT} :   C=\ $ \raisebox{-15mm}{
        \begin{tikzpicture}[scale=1.200000,x=1pt,y=1pt]
          \filldraw[color=white] (0.000000, -7.500000) rectangle (204.000000, 67.500000);
          \draw[color=black] (0.000000,60.000000) -- (204.000000,60.000000);
          \draw[color=black] (0.000000,60.000000) node[left] {$q_0$};
          \draw[color=black] (0.000000,45.000000) -- (204.000000,45.000000);
          \draw[color=black] (0.000000,45.000000) node[left] {$q_1$};
          \draw[color=black] (0.000000,30.000000) -- (204.000000,30.000000);
          \draw[color=black] (0.000000,30.000000) node[left] {$q_2$};
          \draw[color=black] (0.000000,15.000000) -- (204.000000,15.000000);
          \draw[color=black] (0.000000,15.000000) node[left] {$q_3$};
          \draw[color=black] (0.000000,0.000000) -- (204.000000,0.000000);
          \draw[color=black] (0.000000,0.000000) node[left] {$q_4$};
          \draw (9.000000,45.000000) -- (9.000000,30.000000);
          \begin{scope}
            \draw[fill=white] (9.000000, 45.000000) circle(3.000000pt);
            \clip (9.000000, 45.000000) circle(3.000000pt);
            \draw (6.000000, 45.000000) -- (12.000000, 45.000000);
            \draw (9.000000, 42.000000) -- (9.000000, 48.000000);
          \end{scope}
          \filldraw (9.000000, 30.000000) circle(1.500000pt);
          \draw (9.000000,15.000000) -- (9.000000,0.000000);
          \begin{scope}
            \draw[fill=white] (9.000000, 15.000000) circle(3.000000pt);
            \clip (9.000000, 15.000000) circle(3.000000pt);
            \draw (6.000000, 15.000000) -- (12.000000, 15.000000);
            \draw (9.000000, 12.000000) -- (9.000000, 18.000000);
          \end{scope}
          \filldraw (9.000000, 0.000000) circle(1.500000pt);
          \draw (27.000000,15.000000) -- (27.000000,0.000000);
          \begin{scope}
            \draw[fill=white] (27.000000, 0.000000) circle(3.000000pt);
            \clip (27.000000, 0.000000) circle(3.000000pt);
            \draw (24.000000, 0.000000) -- (30.000000, 0.000000);
            \draw (27.000000, -3.000000) -- (27.000000, 3.000000);
          \end{scope}
          \filldraw (27.000000, 15.000000) circle(1.500000pt);
          \draw (45.000000,30.000000) -- (45.000000,0.000000);
          \begin{scope}
            \draw[fill=white] (45.000000, 0.000000) circle(3.000000pt);
            \clip (45.000000, 0.000000) circle(3.000000pt);
            \draw (42.000000, 0.000000) -- (48.000000, 0.000000);
            \draw (45.000000, -3.000000) -- (45.000000, 3.000000);
          \end{scope}
          \filldraw (45.000000, 30.000000) circle(1.500000pt);
          \draw (63.000000,45.000000) -- (63.000000,30.000000);
          \begin{scope}
            \draw[fill=white] (63.000000, 30.000000) circle(3.000000pt);
            \clip (63.000000, 30.000000) circle(3.000000pt);
            \draw (60.000000, 30.000000) -- (66.000000, 30.000000);
            \draw (63.000000, 27.000000) -- (63.000000, 33.000000);
          \end{scope}
          \filldraw (63.000000, 45.000000) circle(1.500000pt);
          \draw (81.000000,45.000000) -- (81.000000,15.000000);
          \begin{scope}
            \draw[fill=white] (81.000000, 15.000000) circle(3.000000pt);
            \clip (81.000000, 15.000000) circle(3.000000pt);
            \draw (78.000000, 15.000000) -- (84.000000, 15.000000);
            \draw (81.000000, 12.000000) -- (81.000000, 18.000000);
          \end{scope}
          \filldraw (81.000000, 45.000000) circle(1.500000pt);
          \draw (99.000000,60.000000) -- (99.000000,45.000000);
          \begin{scope}
            \draw[fill=white] (99.000000, 60.000000) circle(3.000000pt);
            \clip (99.000000, 60.000000) circle(3.000000pt);
            \draw (96.000000, 60.000000) -- (102.000000, 60.000000);
            \draw (99.000000, 57.000000) -- (99.000000, 63.000000);
          \end{scope}
          \filldraw (99.000000, 45.000000) circle(1.500000pt);
          \draw (117.000000,60.000000) -- (117.000000,0.000000);
          \begin{scope}
            \draw[fill=white] (117.000000, 0.000000) circle(3.000000pt);
            \clip (117.000000, 0.000000) circle(3.000000pt);
            \draw (114.000000, 0.000000) -- (120.000000, 0.000000);
            \draw (117.000000, -3.000000) -- (117.000000, 3.000000);
          \end{scope}
          \filldraw (117.000000, 60.000000) circle(1.500000pt);
          \draw (135.000000,60.000000) -- (135.000000,30.000000);
          \begin{scope}
            \draw[fill=white] (135.000000, 30.000000) circle(3.000000pt);
            \clip (135.000000, 30.000000) circle(3.000000pt);
            \draw (132.000000, 30.000000) -- (138.000000, 30.000000);
            \draw (135.000000, 27.000000) -- (135.000000, 33.000000);
          \end{scope}
          \filldraw (135.000000, 60.000000) circle(1.500000pt);
          \draw (141.000000,45.000000) -- (141.000000,0.000000);
          \begin{scope}
            \draw[fill=white] (141.000000, 0.000000) circle(3.000000pt);
            \clip (141.000000, 0.000000) circle(3.000000pt);
            \draw (138.000000, 0.000000) -- (144.000000, 0.000000);
            \draw (141.000000, -3.000000) -- (141.000000, 3.000000);
          \end{scope}
          \filldraw (141.000000, 45.000000) circle(1.500000pt);
          \draw (159.000000,30.000000) -- (159.000000,15.000000);
          \begin{scope}
            \draw[fill=white] (159.000000, 30.000000) circle(3.000000pt);
            \clip (159.000000, 30.000000) circle(3.000000pt);
            \draw (156.000000, 30.000000) -- (162.000000, 30.000000);
            \draw (159.000000, 27.000000) -- (159.000000, 33.000000);
          \end{scope}
          \filldraw (159.000000, 15.000000) circle(1.500000pt);
          \draw (177.000000,30.000000) -- (177.000000,15.000000);
          \begin{scope}
            \draw[fill=white] (177.000000, 15.000000) circle(3.000000pt);
            \clip (177.000000, 15.000000) circle(3.000000pt);
            \draw (174.000000, 15.000000) -- (180.000000, 15.000000);
            \draw (177.000000, 12.000000) -- (177.000000, 18.000000);
          \end{scope}
          \filldraw (177.000000, 30.000000) circle(1.500000pt);
          \draw (195.000000,45.000000) -- (195.000000,30.000000);
          \begin{scope}
            \draw[fill=white] (195.000000, 30.000000) circle(3.000000pt);
            \clip (195.000000, 30.000000) circle(3.000000pt);
            \draw (192.000000, 30.000000) -- (198.000000, 30.000000);
            \draw (195.000000, 27.000000) -- (195.000000, 33.000000);
          \end{scope}
          \filldraw (195.000000, 45.000000) circle(1.500000pt);
        \end{tikzpicture}
      }$\quad(13\ \cnot)$ \medskip

      $M=[21][32][23][41][20][40][01][31][21][42][43][12][34]$\medskip

      $M=\begin{bmatrix}
        1&1&1&0&0\\
        0&1&1&0&0\\
        1&0&1&1&1\\
        1&0&1&0&0\\
        1&0&1&1&0
      \end{bmatrix}=
      \underbrace{\begin{bmatrix}
        1&0&0&0&0\\
        0&1&0&0&0\\
        1&1&1&0&0\\
        1&1&1&1&0\\
        1&1&1&0&1
      \end{bmatrix}}_{L}
      \underbrace{\begin{bmatrix}
        1&1&1&0&0\\
        0&1&1&0&0\\
        0&0&1&1&1\\
        0&0&0&1&1\\
        0&0&0&0&1
      \end{bmatrix}}_{U}$\medskip

    Canonical decomposition of $L$ : $L=[20][30][40][21][31][41][32][42]$

    Canonical decomposition of $U$ : $U=[24][34][23][02][12][01]$

    LTD algorithm $\longrightarrow L=[42][32][21][20]$

    UTD algorithm $\longrightarrow U=[01][23][34][12]$

    Hence : $M=[42][32][21][20][01][23][34][12]$\medskip
      
      $\mathtt{OUTPUT} : C'=\ $ \raisebox{-15mm}{
\begin{tikzpicture}[scale=1.200000,x=1pt,y=1pt]
\filldraw[color=white] (0.000000, -7.500000) rectangle (108.000000, 67.500000);
\draw[color=black] (0.000000,60.000000) -- (108.000000,60.000000);
\draw[color=black] (0.000000,60.000000) node[left] {$q_0$};
\draw[color=black] (0.000000,45.000000) -- (108.000000,45.000000);
\draw[color=black] (0.000000,45.000000) node[left] {$q_1$};
\draw[color=black] (0.000000,30.000000) -- (108.000000,30.000000);
\draw[color=black] (0.000000,30.000000) node[left] {$q_2$};
\draw[color=black] (0.000000,15.000000) -- (108.000000,15.000000);
\draw[color=black] (0.000000,15.000000) node[left] {$q_3$};
\draw[color=black] (0.000000,0.000000) -- (108.000000,0.000000);
\draw[color=black] (0.000000,0.000000) node[left] {$q_4$};
\draw (9.000000,45.000000) -- (9.000000,30.000000);
\begin{scope}
\draw[fill=white] (9.000000, 45.000000) circle(3.000000pt);
\clip (9.000000, 45.000000) circle(3.000000pt);
\draw (6.000000, 45.000000) -- (12.000000, 45.000000);
\draw (9.000000, 42.000000) -- (9.000000, 48.000000);
\end{scope}
\filldraw (9.000000, 30.000000) circle(1.500000pt);
\draw (9.000000,15.000000) -- (9.000000,0.000000);
\begin{scope}
\draw[fill=white] (9.000000, 15.000000) circle(3.000000pt);
\clip (9.000000, 15.000000) circle(3.000000pt);
\draw (6.000000, 15.000000) -- (12.000000, 15.000000);
\draw (9.000000, 12.000000) -- (9.000000, 18.000000);
\end{scope}
\filldraw (9.000000, 0.000000) circle(1.500000pt);
\draw (27.000000,30.000000) -- (27.000000,15.000000);
\begin{scope}
\draw[fill=white] (27.000000, 30.000000) circle(3.000000pt);
\clip (27.000000, 30.000000) circle(3.000000pt);
\draw (24.000000, 30.000000) -- (30.000000, 30.000000);
\draw (27.000000, 27.000000) -- (27.000000, 33.000000);
\end{scope}
\filldraw (27.000000, 15.000000) circle(1.500000pt);
\draw (27.000000,60.000000) -- (27.000000,45.000000);
\begin{scope}
\draw[fill=white] (27.000000, 60.000000) circle(3.000000pt);
\clip (27.000000, 60.000000) circle(3.000000pt);
\draw (24.000000, 60.000000) -- (30.000000, 60.000000);
\draw (27.000000, 57.000000) -- (27.000000, 63.000000);
\end{scope}
\filldraw (27.000000, 45.000000) circle(1.500000pt);
\draw (45.000000,60.000000) -- (45.000000,30.000000);
\begin{scope}
\draw[fill=white] (45.000000, 30.000000) circle(3.000000pt);
\clip (45.000000, 30.000000) circle(3.000000pt);
\draw (42.000000, 30.000000) -- (48.000000, 30.000000);
\draw (45.000000, 27.000000) -- (45.000000, 33.000000);
\end{scope}
\filldraw (45.000000, 60.000000) circle(1.500000pt);
\draw (63.000000,45.000000) -- (63.000000,30.000000);
\begin{scope}
\draw[fill=white] (63.000000, 30.000000) circle(3.000000pt);
\clip (63.000000, 30.000000) circle(3.000000pt);
\draw (60.000000, 30.000000) -- (66.000000, 30.000000);
\draw (63.000000, 27.000000) -- (63.000000, 33.000000);
\end{scope}
\filldraw (63.000000, 45.000000) circle(1.500000pt);
\draw (81.000000,30.000000) -- (81.000000,15.000000);
\begin{scope}
\draw[fill=white] (81.000000, 15.000000) circle(3.000000pt);
\clip (81.000000, 15.000000) circle(3.000000pt);
\draw (78.000000, 15.000000) -- (84.000000, 15.000000);
\draw (81.000000, 12.000000) -- (81.000000, 18.000000);
\end{scope}
\filldraw (81.000000, 30.000000) circle(1.500000pt);
\draw (99.000000,30.000000) -- (99.000000,0.000000);
\begin{scope}
\draw[fill=white] (99.000000, 0.000000) circle(3.000000pt);
\clip (99.000000, 0.000000) circle(3.000000pt);
\draw (96.000000, 0.000000) -- (102.000000, 0.000000);
\draw (99.000000, -3.000000) -- (99.000000, 3.000000);
\end{scope}
\filldraw (99.000000, 30.000000) circle(1.500000pt);
\end{tikzpicture}
}$\quad(8\ \cnot)$
{ \caption{ The UTD/LTD algorithms applied to a $LU$ circuit of $\cnot$ gates \label{LU}}}
\end{figure}

      
\section{Optimization in subgroups of $\XG$}\label{subgroups}
In this section we describe the structure of some particular subgroups of $\XG\simeq \GL$ and we propose algorithms to optimize $\cnot$ circuits of these subgroups. Starting from an input circuit $C$, we first compute the matrix $M$ of the circuit in dimension $n$. In some specific cases the matrix $M$ has a particular shape so we can propose an optimal decomposition  in transvections for $M$ and output an optimization $C'$ of the circuit $C$. We describe  3 subgroups corresponding to 3 types of matrices with the objective of building step by step a kind of atlas of $\cnot$ circuits with specific optimization algorithms.
\subsection{Permutation matrices}
The first case considered is when the matrix $M\in\GL$ of the circuit is a permutation  matrix $P_{\sigma}$. In this case the circuit is equivalent to a $\swap$ gates circuit and we work in  $\SG$, the subgroup of $\XG$ generated by the $S_{(ij)}$ gates which is isomorphic to $\SYM$ (see example in Figure \ref{cyclic}).

\begin{prop}\label{sigmadecompo}
Any permutation matrix $P_{\sigma}$ can be decomposed as a product of $3(n-p)$ transvections where $p$ is the number of cycles of the permutation $\sigma$.
  \end{prop}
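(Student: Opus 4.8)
The plan is to combine two classical ingredients: the factorization of a permutation into transpositions and the braid relation \eqref{braid}, which already writes a single transposition matrix as a product of three transvections. Working directly in $\GL$ with the notation of Section \ref{general}, the braid relation reads $(ij)=[ij][ji][ij]$, so every transposition matrix $P_{(ij)}$ is a product of exactly three transvection matrices.

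First I would decompose $\sigma$ into its disjoint cycles $c_1,\dots,c_p$, where $p$ is the number of cycles counted \emph{with} the fixed points (the $1$-cycles); if $\lambda=(n_1,\dots,n_p)$ is the cycle type then $\sum_{k=1}^p n_k=n$ and $P_\sigma=P_{c_1}\cdots P_{c_p}$. A cycle of length $\ell$ factors into $\ell-1$ transpositions, for instance $(i_1 i_2\cdots i_\ell)=(i_1 i_2)(i_2 i_3)\cdots(i_{\ell-1}i_\ell)$, and in particular a $1$-cycle contributes no transposition at all. Hence each $P_{c_k}$ is a product of $n_k-1$ transposition matrices, so $P_\sigma$ is a product of $\sum_{k=1}^p (n_k-1)=n-p$ transposition matrices. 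Replacing each of these by its braid decomposition into three transvections turns this into a product of $3(n-p)$ transvection matrices, which is precisely the claimed bound; through the isomorphism $\Phi$ of Theorem \ref{iso} this also says that the $\swap$ circuit $S_\sigma$ can be realized with $3(n-p)$ $\cnot$ gates.

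There is no genuine obstacle here, since the statement is an existence (upper bound) claim and the construction above is completely explicit. The only points deserving care are the bookkeeping of fixed points—one must include the $1$-cycles in the count $p$ so that they correctly contribute zero transpositions—and the remark that the disjointness of the cycles makes the factors $P_{c_k}$ commute, so reordering them introduces no additional transvections. Whether $3(n-p)$ is in fact optimal is a separate question that the proposition does not address; here we only exhibit a decomposition of that length.
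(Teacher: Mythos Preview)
Your argument is correct and follows exactly the same route as the paper: decompose $\sigma$ into its cycles of lengths $n_1,\dots,n_p$, write each cycle as a product of $n_i-1$ transpositions, and then replace each transposition by three transvections via the braid relation, giving $3\sum_i(n_i-1)=3(n-p)$ transvections. Your additional remarks on fixed points, commutativity of disjoint cycles, and the separate question of optimality are apt but go slightly beyond what the paper spells out.
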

  \begin{proof}Let $\lambda=(n_1,\dots,n_p)$ be the cycle type of $\sigma$. 
     Any cycle of length $n_i$ can be decomposed in the product of $n_{i}-1$ transpositions and each transposition can be decomposed in a product of 3 transvections, so the total number of transvections used in the decomposition of $P_{\sigma}$ is $3\sum_{i=1}^{p}(n_i-1)=3(n-p)$. 
  \end{proof}

  The following conjecture has been checked up to $5$ qubits using the computer optimization program.

  \begin{conj}\label{sigmaconj}
The decomposition in transvections of a permutation matrix $P_{\sigma}$ given by Proposition \ref{sigmadecompo} is optimal.
    \end{conj}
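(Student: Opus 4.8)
The plan is to upgrade Proposition~\ref{sigmadecompo} from ``sufficient'' to ``necessary'': I must show that every decomposition of $P_\sigma$ into transvections uses at least $3(n-p)$ of them. Write $\ell(M)$ for the minimal number of transvections whose product is $M\in\GL$. By Theorem~\ref{iso} this is exactly the optimal $\cnot$-count of the corresponding circuit, and by Proposition~\ref{GJmult} it is also the least number of elementary row additions over $\F_2$ needed to turn $I$ into $M$. So the goal is the single inequality $\ell(P_\sigma)\ge 3(n-p)$.

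First I would record the clean, but insufficient, rank bound. For invertible $A,B$ one has $AB-I=(A-I)B+(B-I)$, whence $\mathrm{rank}(AB-I)\le\mathrm{rank}(A-I)+\mathrm{rank}(B-I)$; since $\mathrm{rank}([ij]-I)=\mathrm{rank}(E_{ij})=1$, any product of $k$ transvections $M$ satisfies $\mathrm{rank}(M-I)\le k$, i.e. $\ell(M)\ge\mathrm{rank}(M-I)$. Now $\ker(P_\sigma-I)$ is the space of vectors that are constant along the cycles of $\sigma$, so it has dimension $p$ and $\mathrm{rank}(P_\sigma-I)=n-p$. This already yields $\ell(P_\sigma)\ge n-p$ --- the correct quantity, but short by exactly the factor $3$ that carries the whole content of the statement.

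To locate where the missing factor lives, I would next attempt two reductions. First, additivity over cycles: if $\sigma$ splits into cycles on disjoint index sets then $P_\sigma$ is block-diagonal, and one hopes $\ell(P_\sigma)=\sum_i\ell(\text{block}_i)$, reducing everything to a single $m$-cycle (where $n-p=m-1$ and the target is $3(m-1)$). Second, the base case $m=2$, a single swap, for which $\ell=3$. The base case is already instructive: $\ell\ge1$ is trivial, $\ell=1$ fails since a transvection is never a permutation matrix, and a short case check shows that no product of exactly two transvections is a nontrivial permutation matrix (each candidate leaves some row with two $1$'s), so $\ell\ge3$. Observe that rank alone gave only $\ell\ge1$ here: even the atomic case already forces an argument strictly finer than rank.

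The hard part is the tripling, and I expect it to be the genuine obstacle. A bound of the shape $\ell(M)\ge\Phi(M)$ requires a potential $\Phi$ on $\GL$ with $\Phi(I)=0$, with $|\Phi([ij]M)-\Phi(M)|\le1$ for every transvection, and with $\Phi(P_\sigma)=3(n-p)$; the rank function is such a potential but certifies only $n-p$. Crucially, $\Phi$ cannot be a homomorphism: $\GL=\SL$ is perfect (indeed simple for $n\ge3$, as noted after Corollary~\ref{cXpres}), so there is no nonconstant additive invariant and the factor $3$ cannot arise from any ``signed count'' of transvection types. One is therefore forced to construct a genuinely nonlinear Lipschitz potential --- equivalently, to prove that along \emph{every} optimal path from $I$ to $P_\sigma$ the value $\mathrm{rank}(M_t-I)$ must first climb above $n-p$ and then descend, the rise-and-fall already visible in the swap (where the ranks go $0,1,2,1$). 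Building and verifying such a $\Phi$, together with the additivity step (ruling out that transvections mixing two blocks can ever lower the total count), is exactly what I do not expect to be routine, and is presumably why the result is stated only as a conjecture, checked computationally up to $n=5$ (Table~\ref{stat}).
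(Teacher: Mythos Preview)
Your proposal is not a proof, and neither is the paper's treatment: the statement is explicitly a \emph{conjecture} in the paper, with no argument beyond the sentence ``The following conjecture has been checked up to $5$ qubits using the computer optimization program.'' So there is no proof in the paper for you to be compared against.

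That said, your analysis of \emph{why} the statement resists proof is accurate and goes well beyond what the paper offers. The rank bound $\ell(M)\ge\mathrm{rank}(M-I)$ is correct and does give exactly $n-p$ for $P_\sigma$; your observation that this is off by the crucial factor $3$, and that no homomorphism can supply that factor because $\GL$ is perfect (simple for $n\ge3$), is the right diagnosis of the obstruction. Your sketch of the swap base case ($\ell=3$ forced by a direct exclusion of length $1$ and $2$) is fine, and your remark that one would need a nonlinear $1$-Lipschitz potential $\Phi$ with $\Phi(P_\sigma)=3(n-p)$ correctly names the missing ingredient. You also correctly flag that the block-additivity step $\ell(P_\sigma)=\sum_i\ell(\text{block}_i)$ is itself nontrivial in $\GL$, since optimal words for a block-diagonal matrix need not respect the block structure. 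In short: you have not proved the conjecture, but you have correctly identified it as open and located the genuine difficulty, which is more than the paper does.
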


  \begin{figure}
    \begin{tikzpicture}[scale=1.1500000,x=1pt,y=1pt]
\filldraw[color=white] (0.000000, -7.500000) rectangle (91.000000, 52.500000);
\draw[color=black,rounded corners=4.000000pt] (0.000000,45.000000) -- (38.000000,45.000000) -- (45.500000,22.500000);
\draw[color=black,rounded corners=4.000000pt] (45.500000,22.500000) -- (53.000000,0.000000) -- (91.000000,0.000000);
\draw[color=black] (0.000000,45.000000) node[left] {$q_0$};
\draw[color=black,rounded corners=4.000000pt] (0.000000,30.000000) -- (38.000000,30.000000) -- (45.500000,37.500000);
\draw[color=black,rounded corners=4.000000pt] (45.500000,37.500000) -- (53.000000,45.000000) -- (91.000000,45.000000);
\draw[color=black] (0.000000,30.000000) node[left] {$q_1$};
\draw[color=black,rounded corners=4.000000pt] (0.000000,15.000000) -- (38.000000,15.000000) -- (45.500000,22.500000);
\draw[color=black,rounded corners=4.000000pt] (45.500000,22.500000) -- (53.000000,30.000000) -- (91.000000,30.000000);
\draw[color=black] (0.000000,15.000000) node[left] {$q_2$};
\draw[color=black,rounded corners=4.000000pt] (0.000000,0.000000) -- (38.000000,0.000000) -- (45.500000,7.500000);
\draw[color=black,rounded corners=4.000000pt] (45.500000,7.500000) -- (53.000000,15.000000) -- (91.000000,15.000000);
\draw[color=black] (0.000000,0.000000) node[left] {$q_3$};
\draw (16.000000,45.000000) -- (16.000000,0.000000);
\begin{scope}
\draw[fill=white] (16.000000, 22.500000) +(-45.000000:14.142136pt and 40.305087pt) -- +(45.000000:14.142136pt and 40.305087pt) -- +(135.000000:14.142136pt and 40.305087pt) -- +(225.000000:14.142136pt and 40.305087pt) -- cycle;
\clip (16.000000, 22.500000) +(-45.000000:14.142136pt and 40.305087pt) -- +(45.000000:14.142136pt and 40.305087pt) -- +(135.000000:14.142136pt and 40.305087pt) -- +(225.000000:14.142136pt and 40.305087pt) -- cycle;
\draw (16.000000, 22.500000) node {{$C_1$}};
\end{scope}
\draw (75.000000,45.000000) -- (75.000000,0.000000);
\begin{scope}
\draw[fill=white] (75.000000, 22.500000) +(-45.000000:14.142136pt and 40.305087pt) -- +(45.000000:14.142136pt and 40.305087pt) -- +(135.000000:14.142136pt and 40.305087pt) -- +(225.000000:14.142136pt and 40.305087pt) -- cycle;
\clip (75.000000, 22.500000) +(-45.000000:14.142136pt and 40.305087pt) -- +(45.000000:14.142136pt and 40.305087pt) -- +(135.000000:14.142136pt and 40.305087pt) -- +(225.000000:14.142136pt and 40.305087pt) -- cycle;
\draw (75.000000, 22.500000) node {{$C_2$}};
\end{scope}
\end{tikzpicture}
\raisebox{12mm}{$\ \sim\ $}
\begin{tikzpicture}[scale=1.1500000,x=1pt,y=1pt]
\filldraw[color=white] (0.000000, -7.500000) rectangle (226.000000, 52.500000);
\draw[color=black] (0.000000,45.000000) -- (226.000000,45.000000);
\draw[color=black] (0.000000,45.000000) node[left] {$q_0$};
\draw[color=black] (0.000000,30.000000) -- (226.000000,30.000000);
\draw[color=black] (0.000000,30.000000) node[left] {$q_1$};
\draw[color=black] (0.000000,15.000000) -- (226.000000,15.000000);
\draw[color=black] (0.000000,15.000000) node[left] {$q_2$};
\draw[color=black] (0.000000,0.000000) -- (226.000000,0.000000);
\draw[color=black] (0.000000,0.000000) node[left] {$q_3$};
\draw (16.000000,45.000000) -- (16.000000,0.000000);
\begin{scope}
\draw[fill=white] (16.000000, 22.500000) +(-45.000000:14.142136pt and 40.305087pt) -- +(45.000000:14.142136pt and 40.305087pt) -- +(135.000000:14.142136pt and 40.305087pt) -- +(225.000000:14.142136pt and 40.305087pt) -- cycle;
\clip (16.000000, 22.500000) +(-45.000000:14.142136pt and 40.305087pt) -- +(45.000000:14.142136pt and 40.305087pt) -- +(135.000000:14.142136pt and 40.305087pt) -- +(225.000000:14.142136pt and 40.305087pt) -- cycle;
\draw (16.000000, 22.500000) node {{$C_1$}};
\end{scope}
\draw (41.000000,45.000000) -- (41.000000,30.000000);
\begin{scope}
\draw[fill=white] (41.000000, 45.000000) circle(3.000000pt);
\clip (41.000000, 45.000000) circle(3.000000pt);
\draw (38.000000, 45.000000) -- (44.000000, 45.000000);
\draw (41.000000, 42.000000) -- (41.000000, 48.000000);
\end{scope}
\filldraw (41.000000, 30.000000) circle(1.500000pt);
\draw (59.000000,45.000000) -- (59.000000,30.000000);
\begin{scope}
\draw[fill=white] (59.000000, 30.000000) circle(3.000000pt);
\clip (59.000000, 30.000000) circle(3.000000pt);
\draw (56.000000, 30.000000) -- (62.000000, 30.000000);
\draw (59.000000, 27.000000) -- (59.000000, 33.000000);
\end{scope}
\filldraw (59.000000, 45.000000) circle(1.500000pt);
\draw (77.000000,45.000000) -- (77.000000,30.000000);
\begin{scope}
\draw[fill=white] (77.000000, 45.000000) circle(3.000000pt);
\clip (77.000000, 45.000000) circle(3.000000pt);
\draw (74.000000, 45.000000) -- (80.000000, 45.000000);
\draw (77.000000, 42.000000) -- (77.000000, 48.000000);
\end{scope}
\filldraw (77.000000, 30.000000) circle(1.500000pt);
\draw (95.000000,30.000000) -- (95.000000,15.000000);
\begin{scope}
\draw[fill=white] (95.000000, 30.000000) circle(3.000000pt);
\clip (95.000000, 30.000000) circle(3.000000pt);
\draw (92.000000, 30.000000) -- (98.000000, 30.000000);
\draw (95.000000, 27.000000) -- (95.000000, 33.000000);
\end{scope}
\filldraw (95.000000, 15.000000) circle(1.500000pt);
\draw (113.000000,30.000000) -- (113.000000,15.000000);
\begin{scope}
\draw[fill=white] (113.000000, 15.000000) circle(3.000000pt);
\clip (113.000000, 15.000000) circle(3.000000pt);
\draw (110.000000, 15.000000) -- (116.000000, 15.000000);
\draw (113.000000, 12.000000) -- (113.000000, 18.000000);
\end{scope}
\filldraw (113.000000, 30.000000) circle(1.500000pt);
\draw (131.000000,30.000000) -- (131.000000,15.000000);
\begin{scope}
\draw[fill=white] (131.000000, 30.000000) circle(3.000000pt);
\clip (131.000000, 30.000000) circle(3.000000pt);
\draw (128.000000, 30.000000) -- (134.000000, 30.000000);
\draw (131.000000, 27.000000) -- (131.000000, 33.000000);
\end{scope}
\filldraw (131.000000, 15.000000) circle(1.500000pt);
\draw (149.000000,15.000000) -- (149.000000,0.000000);
\begin{scope}
\draw[fill=white] (149.000000, 15.000000) circle(3.000000pt);
\clip (149.000000, 15.000000) circle(3.000000pt);
\draw (146.000000, 15.000000) -- (152.000000, 15.000000);
\draw (149.000000, 12.000000) -- (149.000000, 18.000000);
\end{scope}
\filldraw (149.000000, 0.000000) circle(1.500000pt);
\draw (167.000000,15.000000) -- (167.000000,0.000000);
\begin{scope}
\draw[fill=white] (167.000000, 0.000000) circle(3.000000pt);
\clip (167.000000, 0.000000) circle(3.000000pt);
\draw (164.000000, 0.000000) -- (170.000000, 0.000000);
\draw (167.000000, -3.000000) -- (167.000000, 3.000000);
\end{scope}
\filldraw (167.000000, 15.000000) circle(1.500000pt);
\draw (185.000000,15.000000) -- (185.000000,0.000000);
\begin{scope}
\draw[fill=white] (185.000000, 15.000000) circle(3.000000pt);
\clip (185.000000, 15.000000) circle(3.000000pt);
\draw (182.000000, 15.000000) -- (188.000000, 15.000000);
\draw (185.000000, 12.000000) -- (185.000000, 18.000000);
\end{scope}
\filldraw (185.000000, 0.000000) circle(1.500000pt);
\draw (210.000000,45.000000) -- (210.000000,0.000000);
\begin{scope}
\draw[fill=white] (210.000000, 22.500000) +(-45.000000:14.142136pt and 40.305087pt) -- +(45.000000:14.142136pt and 40.305087pt) -- +(135.000000:14.142136pt and 40.305087pt) -- +(225.000000:14.142136pt and 40.305087pt) -- cycle;
\clip (210.000000, 22.500000) +(-45.000000:14.142136pt and 40.305087pt) -- +(45.000000:14.142136pt and 40.305087pt) -- +(135.000000:14.142136pt and 40.305087pt) -- +(225.000000:14.142136pt and 40.305087pt) -- cycle;
\draw (210.000000, 22.500000) node {{$C_2$}};
\end{scope}
\end{tikzpicture}
\medskip
    
    $M=P_{\sigma}=\begin{bmatrix}0&1&0&0\\
      0&0&1&0\\
      0&0&0&1\\
      1&0&0&0\end{bmatrix}\quad;\quad \sigma=(3210)=(23)(12)(01)$
{ \caption{ A cyclic permutation of the qubits between two circuits $C_1$ and $C_2$  \label{cyclic}}}
  \end{figure}

\subsection{Block diagonal matrices}
  In the second case, we consider that the matrix $M$ of the circuit in dimension $n$ is a block diagonal matrix or the conjugate of a diagonal block matrix by a permutation matrix.
  Let $(n_1,\dots, n_p)$ be a tuple of $p$
  positive integers such that $\sum_{i=1}^pn_i=n$ and consider the matrix
  $M_S=\begin{bmatrix}A_1&0&\dots&0\\
    0&A_2&\ddots&\vdots\\
    \vdots&\ddots&\ddots&0\\
    0&\dots&0&A_p
  \end{bmatrix}$ where $S=(A_1,\dots,A_p)$ is a tuple of matrices $A_i$ in $\GL[n_i]$. Here $M_S$ is a block diagonal matrix and we shall see right after the case of the conjugate of a block diagonal matrix by a permutation matrix. Clearly $\{M_S\mid S\in\GL[n_1]\times\dots\times\GL[n_p]\}$ is a group isomorphic to the direct group product $\GL[n_1]\times\dots\times\GL[n_p]$ since

  $M_S=\begin{bmatrix}A_{1}&0&\dots&0\\
    0&I_{n_2}&\ddots&\vdots\\
    \vdots&\ddots&\ddots&0\\
    0&\dots&0&I_{n_p}
  \end{bmatrix}
  \times\dots\times
  \begin{bmatrix}I_{n_1}&0&\dots&0\\
    0&\ddots&\ddots&\vdots\\
    \vdots&\ddots&I_{n_{p-1}}&0\\
    0&\dots&0&A_p
  \end{bmatrix}$ where $I_{\alpha}$ is the identity matrix in dimension $\alpha$. This equality can be written as $M_S=\prod_{i=1}^{p}M_{i}$ where $M_i=\begin{bmatrix}I_{N_{i-1}}&0&0\\0&A_i&0\\0&0&I_{n-N_i}\end{bmatrix}$ and $N_i=\sum_{k=1}^in_k$.
   We recall that the conjugation $M^{\sigma}=P_{\sigma}MP_{\sigma}^{-1}$ of a square matrix $M$ by a permutation matrix  $P_{\sigma}$ has the following effect on $M$ :  if we denote by $L_i$ (resp. $L_i'$) and $C_i$ (resp. $C_i'$) the lines and columns of matrix $M$
   (resp. $M^{\sigma}$) then $L_i'=L_{\sigma^{-1}(i)}$ and $C_i'=C_{\sigma^{-1}(i)}$, hence if $M=(m_{ij})$ and $M'=(m'_{ij})$ then $m'_{ij}=m_{\sigma^{-1}(i)\sigma^{-1}(j)}$.
   So, if we consider now a tuple of $p$ permutations $(\sigma_1,\dots,\sigma_p)$ verifying the conditions
  $\sigma_i(n_1+n_2+\dots+n_{i-1})=0,\sigma_i(n_1+n_2+\dots+n_{i-1}+1)=1,\dots,\sigma_i(n_1+n_2+\dots +n_{i-1}+n_i-1)=n_{i}-1$ then 
$M_S=\prod_{i=1}^{p}{\left(M_{i}^{\sigma_i}\right)}^{\sigma_i^{-1}}=\prod_{i=1}^{p}\left(M_i'\right)^{\sigma_i^{-1}}$
where $M_i'=\begin{bmatrix}A_i&0\\0&I_{n-n_i}\end{bmatrix}\simeq A_i$. Moreover, for Relation \eqref{actiontsym}, the number of transvections in an optimal decomposition of any matrix $M$ in $\GL$ is the same as the number of transvections in an optimal decomposition of $M^{\sigma}$ for any permutation $\sigma$. 
 As a consequence, the product of the conjugates by $\sigma_i^{-1}$ of any optimal decomposition of the matrix $A_i$  yields an optimal decomposition of the matrix $M_S$   (see Figure \ref{cart1} for an example).

\begin{figure}[h]
Let $M_{(A_1,A_2)}=\begin{bmatrix}
   1&0&1&1&0&0&0\\
   1&0&1&0&0&0&0\\
   0&1&1&1&0&0&0\\
   1&1&1&1&0&0&0\\
   0&0&0&0&0&0&1\\
   0&0&0&0&1&1&1\\
   0&0&0&0&0&1&1\\
 \end{bmatrix}$
 where $A_1=\begin{bmatrix}
   1&0&1&1\\
   1&0&1&0\\
   0&1&1&1\\
   1&1&1&1
 \end{bmatrix}$ and
 
 $A_2=\begin{bmatrix}
   0&0&1\\
   1&1&1\\
   0&1&1
 \end{bmatrix}$.
  Let $\sigma_1=I_7$ and
 $\sigma_2=\begin{pmatrix}0&1&2&3&4&5&6\\3&4&5&6&0&1&2\end{pmatrix}$,
 then :

 $M_{(A_1,A_2)}=\begin{bmatrix}
   1&0&1&1&0&0&0\\
   1&0&1&0&0&0&0\\
   0&1&1&1&0&0&0\\
   1&1&1&1&0&0&0\\
   0&0&0&0&1&0&0\\
   0&0&0&0&0&1&0\\
   0&0&0&0&0&0&1\\
 \end{bmatrix}
 \begin{bmatrix}
   0&0&1&0&0&0&0\\
   1&1&1&0&0&0&0\\
   0&1&1&0&0&0&0\\
   0&0&0&1&0&0&0\\
   0&0&0&0&1&0&0\\
   0&0&0&0&0&1&0\\
   0&0&0&0&0&0&1\\
 \end{bmatrix}^{\sigma_2^{-1}}={M_1'}^{\sigma_1^{-1}}{M_2'}^{\sigma_2^{-1}}$.

Using the computer optimization program, one finds 

 $A_1=[13][01][30][21][13][02][01]\ ,\ A_2=[01][12][10][21][01]$.

 Hence an optimal decomposition for $M$ is 

$\begin{array}{ll}
M_{(A_1,A_2)}&=[13][01][30][21][13][02][01]([01][12][10][21][01])^{\sigma_2^{-1}}\\
             &=[13][01][30][21][13][02][01][45][56][54][65][45].
\end{array}$
 
{ \caption{ Decomposition of a matrix using cartesian product of subgroups.\label{cart1}}}
  \end{figure}
  \medskip
  
  The generalization of the previous situation to the case of  a matrix $M=M_S^{\sigma}$ where $M_S$ is a block diagonal matrix and $\sigma$ a permutation matrix is straightforward. Indeed, if we have an optimal decomposition of $M_S$ in transvections, then we deduce an optimal decomposition for $M_S^{\sigma}$ by conjugating by $\sigma$ each transvection in the decomposition of $M_S$. This is due to the fact that the number of transvections in an optimal decomposition of any matrix $M$ in $\GL$ is the same as the number of transvections in an optimal decomposition of $M^{\sigma}$ (see Relation \eqref{actiontsym}). In fact the problem we adress here is rather the following : how can we efficiently recognize that the matrix $M$ is of type $M=M_S^{\sigma}$ ? To do that we interpret the matrix $M$ of the circuit in dimension $n$ as the adjacency matrix of the directed  graph $\mathcal{G}(M)$ whose set of vertices is $V=\{0,1,\dots,n-1\}$ and whose set of edges is $E=\{(i,j)\mid m_{ij}=1\}$. We call $\mathcal{G}(M)$ \emph{the graph of the circuit}. The graph $(\mathcal{G}(M))^{\sigma}$ defined by the set of vertices $\sigma(V)=V$ and by the set of edges $E_n^{\sigma}=\{(\sigma(i),\sigma(j))\mid m_{ij}=1\}$
  is isomorphic to $\mathcal{G}(M)$. Besides, $E_n^{\sigma}=\{(i,j)\mid m_{\sigma^{-1}(i)\sigma^{-1}(j)}=1\}$ and $M^{\sigma}=\left(m_{\sigma^{-1}(i)\sigma^{-1}(j)}\right)$,
  so $(\mathcal{G}(M))^{\sigma}=\mathcal{G}(M^{\sigma})$.
    The graph $\mathcal{G}(M_S)$ is a disconnected graph whose $p$ connected components are
    $V_1=\{0,\dots,n_1-1\},V_2=\{n_1,\dots,n_1+n_2-1\},\dots,V_p=\{n_1+\dots + n_{p-1},\dots,n_1+\dots +n_{p-1}+n_p-1\}$
    , so the graph $\mathcal{G}(M_S^{\sigma})=(\mathcal{G}(M_S))^{\sigma}$ is also
    a disconnected graph whose $p$ connected components are $\sigma(V_1),\dots,\sigma(V_p)$.

    By applying a Breadth-first search or a Deepth-first search to the graph $\mathcal{G}(M)$ of a given circuit, one can get the connected components of $\mathcal{G}(M)$ in linear time (see \cite{HT1973} for a detailled description of the algorithm). Suppose that we find more than one connected component, say $C_1,\dots C_p$ and let $n_i$ be the cardinal of $C_i$ for $1\leq i\leq p$. Let $\sigma$ be any permutation verifying $\sigma(C_1)=\{0,\dots,n_1-1\}, \sigma(C_2)=\{n_1,\dots,n_1+n_2-1\},\dots, \sigma(C_p)=\{n_1+\dots + n_{p-1},\dots,n_1+\dots +n_{p-1}+n_p-1\}$. Then the matrix $M^{\sigma}$ is a block diagonal matrix $M_S$ and we obtain an optimal decomposition of $M^{\sigma}$ as explained before. Finally, we deduce  an optimal decomposition of $M=(M^{\sigma})^{\sigma^{-1}}$ by conjugating by $\sigma^{-1}$ this decomposition (see Figure \ref{cart2} for an example). Since we can find an optimal decomposition of any matrix of $\GL$ for $n\leq5$ (using the computer optimization program), then we can optimize any $\cnot$ circuit whose graph has connected components of size up to 5.

    \begin{figure}
      $\mathtt{INPUT}: C\ =\ $\raisebox{-21mm}{
        \begin{tikzpicture}[scale=1.200000,x=1pt,y=1pt]
\filldraw[color=white] (0.000000, -7.500000) rectangle (162.000000, 97.500000);
\draw[color=black] (0.000000,90.000000) -- (162.000000,90.000000);
\draw[color=black] (0.000000,90.000000) node[left] {$q_0$};
\draw[color=black] (0.000000,75.000000) -- (162.000000,75.000000);
\draw[color=black] (0.000000,75.000000) node[left] {$q_1$};
\draw[color=black] (0.000000,60.000000) -- (162.000000,60.000000);
\draw[color=black] (0.000000,60.000000) node[left] {$q_2$};
\draw[color=black] (0.000000,45.000000) -- (162.000000,45.000000);
\draw[color=black] (0.000000,45.000000) node[left] {$q_3$};
\draw[color=black] (0.000000,30.000000) -- (162.000000,30.000000);
\draw[color=black] (0.000000,30.000000) node[left] {$q_4$};
\draw[color=black] (0.000000,15.000000) -- (162.000000,15.000000);
\draw[color=black] (0.000000,15.000000) node[left] {$q_5$};
\draw[color=black] (0.000000,0.000000) -- (162.000000,0.000000);
\draw[color=black] (0.000000,0.000000) node[left] {$q_6$};
\draw (9.000000,75.000000) -- (9.000000,30.000000);
\begin{scope}
\draw[fill=white] (9.000000, 30.000000) circle(3.000000pt);
\clip (9.000000, 30.000000) circle(3.000000pt);
\draw (6.000000, 30.000000) -- (12.000000, 30.000000);
\draw (9.000000, 27.000000) -- (9.000000, 33.000000);
\end{scope}
\filldraw (9.000000, 75.000000) circle(1.500000pt);
\draw (15.000000,45.000000) -- (15.000000,15.000000);
\begin{scope}
\draw[fill=white] (15.000000, 15.000000) circle(3.000000pt);
\clip (15.000000, 15.000000) circle(3.000000pt);
\draw (12.000000, 15.000000) -- (18.000000, 15.000000);
\draw (15.000000, 12.000000) -- (15.000000, 18.000000);
\end{scope}
\filldraw (15.000000, 45.000000) circle(1.500000pt);
\draw (33.000000,75.000000) -- (33.000000,30.000000);
\begin{scope}
\draw[fill=white] (33.000000, 75.000000) circle(3.000000pt);
\clip (33.000000, 75.000000) circle(3.000000pt);
\draw (30.000000, 75.000000) -- (36.000000, 75.000000);
\draw (33.000000, 72.000000) -- (33.000000, 78.000000);
\end{scope}
\filldraw (33.000000, 30.000000) circle(1.500000pt);
\draw (33.000000,15.000000) -- (33.000000,0.000000);
\begin{scope}
\draw[fill=white] (33.000000, 15.000000) circle(3.000000pt);
\clip (33.000000, 15.000000) circle(3.000000pt);
\draw (30.000000, 15.000000) -- (36.000000, 15.000000);
\draw (33.000000, 12.000000) -- (33.000000, 18.000000);
\end{scope}
\filldraw (33.000000, 0.000000) circle(1.500000pt);
\draw (51.000000,75.000000) -- (51.000000,60.000000);
\begin{scope}
\draw[fill=white] (51.000000, 60.000000) circle(3.000000pt);
\clip (51.000000, 60.000000) circle(3.000000pt);
\draw (48.000000, 60.000000) -- (54.000000, 60.000000);
\draw (51.000000, 57.000000) -- (51.000000, 63.000000);
\end{scope}
\filldraw (51.000000, 75.000000) circle(1.500000pt);
\draw (51.000000,45.000000) -- (51.000000,15.000000);
\begin{scope}
\draw[fill=white] (51.000000, 45.000000) circle(3.000000pt);
\clip (51.000000, 45.000000) circle(3.000000pt);
\draw (48.000000, 45.000000) -- (54.000000, 45.000000);
\draw (51.000000, 42.000000) -- (51.000000, 48.000000);
\end{scope}
\filldraw (51.000000, 15.000000) circle(1.500000pt);
\draw (69.000000,60.000000) -- (69.000000,30.000000);
\begin{scope}
\draw[fill=white] (69.000000, 60.000000) circle(3.000000pt);
\clip (69.000000, 60.000000) circle(3.000000pt);
\draw (66.000000, 60.000000) -- (72.000000, 60.000000);
\draw (69.000000, 57.000000) -- (69.000000, 63.000000);
\end{scope}
\filldraw (69.000000, 30.000000) circle(1.500000pt);
\draw (75.000000,90.000000) -- (75.000000,15.000000);
\begin{scope}
\draw[fill=white] (75.000000, 90.000000) circle(3.000000pt);
\clip (75.000000, 90.000000) circle(3.000000pt);
\draw (72.000000, 90.000000) -- (78.000000, 90.000000);
\draw (75.000000, 87.000000) -- (75.000000, 93.000000);
\end{scope}
\filldraw (75.000000, 15.000000) circle(1.500000pt);
\draw (93.000000,75.000000) -- (93.000000,60.000000);
\begin{scope}
\draw[fill=white] (93.000000, 75.000000) circle(3.000000pt);
\clip (93.000000, 75.000000) circle(3.000000pt);
\draw (90.000000, 75.000000) -- (96.000000, 75.000000);
\draw (93.000000, 72.000000) -- (93.000000, 78.000000);
\end{scope}
\filldraw (93.000000, 60.000000) circle(1.500000pt);
\draw (99.000000,90.000000) -- (99.000000,45.000000);
\begin{scope}
\draw[fill=white] (99.000000, 45.000000) circle(3.000000pt);
\clip (99.000000, 45.000000) circle(3.000000pt);
\draw (96.000000, 45.000000) -- (102.000000, 45.000000);
\draw (99.000000, 42.000000) -- (99.000000, 48.000000);
\end{scope}
\filldraw (99.000000, 90.000000) circle(1.500000pt);
\draw (117.000000,45.000000) -- (117.000000,0.000000);
\begin{scope}
\draw[fill=white] (117.000000, 0.000000) circle(3.000000pt);
\clip (117.000000, 0.000000) circle(3.000000pt);
\draw (114.000000, 0.000000) -- (120.000000, 0.000000);
\draw (117.000000, -3.000000) -- (117.000000, 3.000000);
\end{scope}
\filldraw (117.000000, 45.000000) circle(1.500000pt);
\draw (117.000000,75.000000) -- (117.000000,60.000000);
\begin{scope}
\draw[fill=white] (117.000000, 60.000000) circle(3.000000pt);
\clip (117.000000, 60.000000) circle(3.000000pt);
\draw (114.000000, 60.000000) -- (120.000000, 60.000000);
\draw (117.000000, 57.000000) -- (117.000000, 63.000000);
\end{scope}
\filldraw (117.000000, 75.000000) circle(1.500000pt);
\draw (135.000000,45.000000) -- (135.000000,15.000000);
\begin{scope}
\draw[fill=white] (135.000000, 15.000000) circle(3.000000pt);
\clip (135.000000, 15.000000) circle(3.000000pt);
\draw (132.000000, 15.000000) -- (138.000000, 15.000000);
\draw (135.000000, 12.000000) -- (135.000000, 18.000000);
\end{scope}
\filldraw (135.000000, 45.000000) circle(1.500000pt);
\draw (153.000000,90.000000) -- (153.000000,45.000000);
\begin{scope}
\draw[fill=white] (153.000000, 45.000000) circle(3.000000pt);
\clip (153.000000, 45.000000) circle(3.000000pt);
\draw (150.000000, 45.000000) -- (156.000000, 45.000000);
\draw (153.000000, 42.000000) -- (153.000000, 48.000000);
\end{scope}
\filldraw (153.000000, 90.000000) circle(1.500000pt);
\end{tikzpicture}
}
$\quad(14\ \cnot)$

\begin{minipage}[t]{.4\linewidth}
$M=\begin{bmatrix}
1&0&0&1&0&1&1\\
0&1&1&0&1&0&0\\
0&0&0&0&1&0&0\\
0&0&0&0&0&1&1\\
0&1&0&0&1&0&0\\
1&0&0&0&0&1&1\\
1&0&0&1&0&0&1
\end{bmatrix}$
\end{minipage}
\begin{minipage}[c]{.6\linewidth}
  $C_1=\{0,3,5,6\},\quad C_2=\{1,2,4\}$\medskip
  
  $ \sigma=\begin{pmatrix}0&1&2&3&4&5&6\\3&5&4&1&6&0&2\end{pmatrix}$\medskip

  $ \sigma^{-1}=\begin{pmatrix}0&1&2&3&4&5&6\\5&3&6&0&2&1&4\end{pmatrix}$
\end{minipage}
$M^{\sigma}=\begin{bmatrix}
   1&0&1&1&0&0&0\\
   1&0&1&0&0&0&0\\
   0&1&1&1&0&0&0\\
   1&1&1&1&0&0&0\\
   0&0&0&0&0&0&1\\
   0&0&0&0&1&1&1\\
   0&0&0&0&0&1&1\\
 \end{bmatrix}=M_{(A_1,A_2)}$

 From example \ref{cart1}, $M_{(A_1,A_2)}=[13][01][30][21][13][02][01][45][56][54][65][45]$.

 The corresponding circuit is :
 
\begin{tikzpicture}[scale=1.200000,x=1pt,y=1pt]
\filldraw[color=white] (0.000000, -7.500000) rectangle (102.000000, 97.500000);
\draw[color=black] (0.000000,90.000000) -- (102.000000,90.000000);
\draw[color=black] (0.000000,90.000000) node[left] {$q_0$};
\draw[color=black] (0.000000,75.000000) -- (102.000000,75.000000);
\draw[color=black] (0.000000,75.000000) node[left] {$q_1$};
\draw[color=black] (0.000000,60.000000) -- (102.000000,60.000000);
\draw[color=black] (0.000000,60.000000) node[left] {$q_2$};
\draw[color=black] (0.000000,45.000000) -- (102.000000,45.000000);
\draw[color=black] (0.000000,45.000000) node[left] {$q_3$};
\draw[color=black] (0.000000,30.000000) -- (102.000000,30.000000);
\draw[color=black] (0.000000,30.000000) node[left] {$q_4$};
\draw[color=black] (0.000000,15.000000) -- (102.000000,15.000000);
\draw[color=black] (0.000000,15.000000) node[left] {$q_5$};
\draw[color=black] (0.000000,0.000000) -- (102.000000,0.000000);
\draw[color=black] (0.000000,0.000000) node[left] {$q_6$};
\draw (9.000000,30.000000) -- (9.000000,15.000000);
\begin{scope}
\draw[fill=white] (9.000000, 30.000000) circle(3.000000pt);
\clip (9.000000, 30.000000) circle(3.000000pt);
\draw (6.000000, 30.000000) -- (12.000000, 30.000000);
\draw (9.000000, 27.000000) -- (9.000000, 33.000000);
\end{scope}
\filldraw (9.000000, 15.000000) circle(1.500000pt);
\draw (9.000000,90.000000) -- (9.000000,75.000000);
\begin{scope}
\draw[fill=white] (9.000000, 90.000000) circle(3.000000pt);
\clip (9.000000, 90.000000) circle(3.000000pt);
\draw (6.000000, 90.000000) -- (12.000000, 90.000000);
\draw (9.000000, 87.000000) -- (9.000000, 93.000000);
\end{scope}
\filldraw (9.000000, 75.000000) circle(1.500000pt);
\draw (30.000000,15.000000) -- (30.000000,0.000000);
\begin{scope}
\draw[fill=white] (30.000000, 0.000000) circle(3.000000pt);
\clip (30.000000, 0.000000) circle(3.000000pt);
\draw (27.000000, 0.000000) -- (33.000000, 0.000000);
\draw (30.000000, -3.000000) -- (30.000000, 3.000000);
\end{scope}
\filldraw (30.000000, 15.000000) circle(1.500000pt);
\draw (27.000000,90.000000) -- (27.000000,60.000000);
\begin{scope}
\draw[fill=white] (27.000000, 90.000000) circle(3.000000pt);
\clip (27.000000, 90.000000) circle(3.000000pt);
\draw (24.000000, 90.000000) -- (30.000000, 90.000000);
\draw (27.000000, 87.000000) -- (27.000000, 93.000000);
\end{scope}
\filldraw (27.000000, 60.000000) circle(1.500000pt);
\draw (33.000000,75.000000) -- (33.000000,45.000000);
\begin{scope}
\draw[fill=white] (33.000000, 75.000000) circle(3.000000pt);
\clip (33.000000, 75.000000) circle(3.000000pt);
\draw (30.000000, 75.000000) -- (36.000000, 75.000000);
\draw (33.000000, 72.000000) -- (33.000000, 78.000000);
\end{scope}
\filldraw (33.000000, 45.000000) circle(1.500000pt);
\draw (54.000000,30.000000) -- (54.000000,15.000000);
\begin{scope}
\draw[fill=white] (54.000000, 15.000000) circle(3.000000pt);
\clip (54.000000, 15.000000) circle(3.000000pt);
\draw (51.000000, 15.000000) -- (57.000000, 15.000000);
\draw (54.000000, 12.000000) -- (54.000000, 18.000000);
\end{scope}
\filldraw (54.000000, 30.000000) circle(1.500000pt);
\draw (51.000000,75.000000) -- (51.000000,60.000000);
\begin{scope}
\draw[fill=white] (51.000000, 60.000000) circle(3.000000pt);
\clip (51.000000, 60.000000) circle(3.000000pt);
\draw (48.000000, 60.000000) -- (54.000000, 60.000000);
\draw (51.000000, 57.000000) -- (51.000000, 63.000000);
\end{scope}
\filldraw (51.000000, 75.000000) circle(1.500000pt);
\draw (57.000000,90.000000) -- (57.000000,45.000000);
\begin{scope}
\draw[fill=white] (57.000000, 45.000000) circle(3.000000pt);
\clip (57.000000, 45.000000) circle(3.000000pt);
\draw (54.000000, 45.000000) -- (60.000000, 45.000000);
\draw (57.000000, 42.000000) -- (57.000000, 48.000000);
\end{scope}
\filldraw (57.000000, 90.000000) circle(1.500000pt);
\draw (75.000000,15.000000) -- (75.000000,0.000000);
\begin{scope}
\draw[fill=white] (75.000000, 15.000000) circle(3.000000pt);
\clip (75.000000, 15.000000) circle(3.000000pt);
\draw (72.000000, 15.000000) -- (78.000000, 15.000000);
\draw (75.000000, 12.000000) -- (75.000000, 18.000000);
\end{scope}
\filldraw (75.000000, 0.000000) circle(1.500000pt);
\draw (75.000000,90.000000) -- (75.000000,75.000000);
\begin{scope}
\draw[fill=white] (75.000000, 90.000000) circle(3.000000pt);
\clip (75.000000, 90.000000) circle(3.000000pt);
\draw (72.000000, 90.000000) -- (78.000000, 90.000000);
\draw (75.000000, 87.000000) -- (75.000000, 93.000000);
\end{scope}
\filldraw (75.000000, 75.000000) circle(1.500000pt);
\draw (93.000000,30.000000) -- (93.000000,15.000000);
\begin{scope}
\draw[fill=white] (93.000000, 30.000000) circle(3.000000pt);
\clip (93.000000, 30.000000) circle(3.000000pt);
\draw (90.000000, 30.000000) -- (96.000000, 30.000000);
\draw (93.000000, 27.000000) -- (93.000000, 33.000000);
\end{scope}
\filldraw (93.000000, 15.000000) circle(1.500000pt);
\draw (93.000000,75.000000) -- (93.000000,45.000000);
\begin{scope}
\draw[fill=white] (93.000000, 75.000000) circle(3.000000pt);
\clip (93.000000, 75.000000) circle(3.000000pt);
\draw (90.000000, 75.000000) -- (96.000000, 75.000000);
\draw (93.000000, 72.000000) -- (93.000000, 78.000000);
\end{scope}
\filldraw (93.000000, 45.000000) circle(1.500000pt);
\end{tikzpicture}

$M=M_{(A_1,A_2)}^{\sigma^{-1}}=[30][53][05][63][30][56][53][21][14][12][41][21]$

 $\mathtt{OUTPUT}: C'\ =\ $\raisebox{-21mm}{
 \begin{tikzpicture}[scale=1.200000,x=1pt,y=1pt]
\filldraw[color=white] (0.000000, -7.500000) rectangle (114.000000, 97.500000);
\draw[color=black] (0.000000,90.000000) -- (114.000000,90.000000);
\draw[color=black] (0.000000,90.000000) node[left] {$q_0$};
\draw[color=black] (0.000000,75.000000) -- (114.000000,75.000000);
\draw[color=black] (0.000000,75.000000) node[left] {$q_1$};
\draw[color=black] (0.000000,60.000000) -- (114.000000,60.000000);
\draw[color=black] (0.000000,60.000000) node[left] {$q_2$};
\draw[color=black] (0.000000,45.000000) -- (114.000000,45.000000);
\draw[color=black] (0.000000,45.000000) node[left] {$q_3$};
\draw[color=black] (0.000000,30.000000) -- (114.000000,30.000000);
\draw[color=black] (0.000000,30.000000) node[left] {$q_4$};
\draw[color=black] (0.000000,15.000000) -- (114.000000,15.000000);
\draw[color=black] (0.000000,15.000000) node[left] {$q_5$};
\draw[color=black] (0.000000,0.000000) -- (114.000000,0.000000);
\draw[color=black] (0.000000,0.000000) node[left] {$q_6$};
\draw (9.000000,75.000000) -- (9.000000,60.000000);
\begin{scope}
\draw[fill=white] (9.000000, 60.000000) circle(3.000000pt);
\clip (9.000000, 60.000000) circle(3.000000pt);
\draw (6.000000, 60.000000) -- (12.000000, 60.000000);
\draw (9.000000, 57.000000) -- (9.000000, 63.000000);
\end{scope}
\filldraw (9.000000, 75.000000) circle(1.500000pt);
\draw (9.000000,45.000000) -- (9.000000,15.000000);
\begin{scope}
\draw[fill=white] (9.000000, 15.000000) circle(3.000000pt);
\clip (9.000000, 15.000000) circle(3.000000pt);
\draw (6.000000, 15.000000) -- (12.000000, 15.000000);
\draw (9.000000, 12.000000) -- (9.000000, 18.000000);
\end{scope}
\filldraw (9.000000, 45.000000) circle(1.500000pt);
\draw (27.000000,75.000000) -- (27.000000,30.000000);
\begin{scope}
\draw[fill=white] (27.000000, 30.000000) circle(3.000000pt);
\clip (27.000000, 30.000000) circle(3.000000pt);
\draw (24.000000, 30.000000) -- (30.000000, 30.000000);
\draw (27.000000, 27.000000) -- (27.000000, 33.000000);
\end{scope}
\filldraw (27.000000, 75.000000) circle(1.500000pt);
\draw (30.000000,15.000000) -- (30.000000,0.000000);
\begin{scope}
\draw[fill=white] (30.000000, 15.000000) circle(3.000000pt);
\clip (30.000000, 15.000000) circle(3.000000pt);
\draw (27.000000, 15.000000) -- (33.000000, 15.000000);
\draw (30.000000, 12.000000) -- (30.000000, 18.000000);
\end{scope}
\filldraw (30.000000, 0.000000) circle(1.500000pt);
\draw (33.000000,90.000000) -- (33.000000,45.000000);
\begin{scope}
\draw[fill=white] (33.000000, 45.000000) circle(3.000000pt);
\clip (33.000000, 45.000000) circle(3.000000pt);
\draw (30.000000, 45.000000) -- (36.000000, 45.000000);
\draw (33.000000, 42.000000) -- (33.000000, 48.000000);
\end{scope}
\filldraw (33.000000, 90.000000) circle(1.500000pt);
\draw (51.000000,75.000000) -- (51.000000,60.000000);
\begin{scope}
\draw[fill=white] (51.000000, 75.000000) circle(3.000000pt);
\clip (51.000000, 75.000000) circle(3.000000pt);
\draw (48.000000, 75.000000) -- (54.000000, 75.000000);
\draw (51.000000, 72.000000) -- (51.000000, 78.000000);
\end{scope}
\filldraw (51.000000, 60.000000) circle(1.500000pt);
\draw (51.000000,45.000000) -- (51.000000,0.000000);
\begin{scope}
\draw[fill=white] (51.000000, 0.000000) circle(3.000000pt);
\clip (51.000000, 0.000000) circle(3.000000pt);
\draw (48.000000, 0.000000) -- (54.000000, 0.000000);
\draw (51.000000, -3.000000) -- (51.000000, 3.000000);
\end{scope}
\filldraw (51.000000, 45.000000) circle(1.500000pt);
\draw (57.000000,90.000000) -- (57.000000,15.000000);
\begin{scope}
\draw[fill=white] (57.000000, 90.000000) circle(3.000000pt);
\clip (57.000000, 90.000000) circle(3.000000pt);
\draw (54.000000, 90.000000) -- (60.000000, 90.000000);
\draw (57.000000, 87.000000) -- (57.000000, 93.000000);
\end{scope}
\filldraw (57.000000, 15.000000) circle(1.500000pt);
\draw (75.000000,75.000000) -- (75.000000,30.000000);
\begin{scope}
\draw[fill=white] (75.000000, 75.000000) circle(3.000000pt);
\clip (75.000000, 75.000000) circle(3.000000pt);
\draw (72.000000, 75.000000) -- (78.000000, 75.000000);
\draw (75.000000, 72.000000) -- (75.000000, 78.000000);
\end{scope}
\filldraw (75.000000, 30.000000) circle(1.500000pt);
\draw (81.000000,45.000000) -- (81.000000,15.000000);
\begin{scope}
\draw[fill=white] (81.000000, 15.000000) circle(3.000000pt);
\clip (81.000000, 15.000000) circle(3.000000pt);
\draw (78.000000, 15.000000) -- (84.000000, 15.000000);
\draw (81.000000, 12.000000) -- (81.000000, 18.000000);
\end{scope}
\filldraw (81.000000, 45.000000) circle(1.500000pt);
\draw (99.000000,75.000000) -- (99.000000,60.000000);
\begin{scope}
\draw[fill=white] (99.000000, 60.000000) circle(3.000000pt);
\clip (99.000000, 60.000000) circle(3.000000pt);
\draw (96.000000, 60.000000) -- (102.000000, 60.000000);
\draw (99.000000, 57.000000) -- (99.000000, 63.000000);
\end{scope}
\filldraw (99.000000, 75.000000) circle(1.500000pt);
\draw (105.000000,90.000000) -- (105.000000,45.000000);
\begin{scope}
\draw[fill=white] (105.000000, 45.000000) circle(3.000000pt);
\clip (105.000000, 45.000000) circle(3.000000pt);
\draw (102.000000, 45.000000) -- (108.000000, 45.000000);
\draw (105.000000, 42.000000) -- (105.000000, 48.000000);
\end{scope}
\filldraw (105.000000, 90.000000) circle(1.500000pt);
\end{tikzpicture}
}
$\quad(12\ \cnot)$
{ \caption{ Circuit optimization using a cartesian product of subgroups. \label{cart2}}}
\end{figure}
\subsection{Bit reverse of permutation matrices\label{bitrev}}
The third and last case we consider in this section is the case of a matrix $M$  of type $\overline{\sigma}$ where $\overline{\sigma}$  denotes a permutation matrix in even dimension in which all the bits are reversed. For instance $\overline{(02)(13)}=\begin{bmatrix}1&1&0&1\\1&1&1&0\\0&1&1&1\\1&0&1&1\end{bmatrix}$ and
$\overline{I_4}=\begin{bmatrix}0&1&1&1\\1&0&1&1\\1&1&0&1\\1&1&1&0\end{bmatrix}$.

We check that $\overline{\sigma}=\sigma\overline{I}$. For instance $\overline{(02)(13)}=\begin{bmatrix}0&0&1&0\\0&0&0&1\\1&0&0&0\\0&1&0&0\end{bmatrix}\begin{bmatrix}0&1&1&1\\1&0&1&1\\1&1&0&1\\1&1&1&0\end{bmatrix}$.

\begin{lem}\label{ibar1}Let $n\geq  2$ be an even integer.  
  \begin{align}
    &\overline{I_n}\in\GL \text{ and } \overline{I_n}^2=I_n.\label{ibarinv}\\
    &\forall\sigma\in\SYM,\ \sigma\overline{I_n}\sigma^{-1}=\overline{I_n}.\label{ibarconj}\\
    &\forall\sigma,\gamma\in\SYM,\ \overline{\sigma}\;\overline{\gamma}=\sigma\gamma.\label{ibarprod}
  \end{align}
\end{lem}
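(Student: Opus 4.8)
The plan is to work over $\F_2$ and to observe that $\overline{I_n}=I_n+J_n$, where $J_n$ denotes the $n\times n$ all-ones matrix; more generally the relation $\overline{\sigma}=\sigma\,\overline{I_n}$ recorded just above the statement identifies every bit-reversed permutation matrix as $\sigma$ (that is, $P_\sigma$) times this single matrix $\overline{I_n}$. This reduces everything to understanding $\overline{I_n}=I_n+J_n$ and its interaction with permutation matrices, so I would prove the three identities in the order \eqref{ibarinv}, \eqref{ibarconj}, \eqref{ibarprod}, each feeding into the next.

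For \eqref{ibarinv}, I would simply square $\overline{I_n}=I_n+J_n$. Over $\F_2$ the cross terms cancel (since $2J_n=0$), so $\overline{I_n}^{\,2}=I_n+J_n^2$, and because every entry of $J_n^2$ equals $n\bmod 2$, the evenness of $n$ forces $J_n^2=0$ and hence $\overline{I_n}^{\,2}=I_n$. This both shows that $\overline{I_n}$ is its own inverse and, since every invertible $\F_2$-matrix lies in $\GL$, that $\overline{I_n}\in\GL$. This is the one step where the hypothesis $n$ even is genuinely used, so I regard the parity computation $J_n^2=nJ_n\equiv 0$ as the crux of the lemma; everything else is formal.

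For \eqref{ibarconj}, I would argue that conjugation by a permutation matrix fixes $\overline{I_n}$. The cleanest route is to recall from the excerpt that conjugating $M=(m_{ij})$ by $P_\sigma$ yields the matrix with entries $m_{\sigma^{-1}(i)\sigma^{-1}(j)}$; since the $(i,j)$ entry of $\overline{I_n}$ equals $1$ exactly when $i\neq j$ and $\sigma^{-1}$ is a bijection, the pattern $i\neq j$ is preserved and the matrix is unchanged. Equivalently, $\sigma J_n\sigma^{-1}=J_n$ and $\sigma I_n\sigma^{-1}=I_n$, so $\sigma\overline{I_n}\sigma^{-1}=\overline{I_n}$. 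Note that this step does not require $n$ even.

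Finally, for \eqref{ibarprod}, I would combine the first two identities with $\overline{\sigma}=\sigma\,\overline{I_n}$. Reading \eqref{ibarconj} in its commuting form $\overline{I_n}\gamma=\gamma\,\overline{I_n}$, I obtain
\[
\overline{\sigma}\,\overline{\gamma}=\sigma\,\overline{I_n}\,\gamma\,\overline{I_n}=\sigma\gamma\,\overline{I_n}^{\,2}=\sigma\gamma,
\]
where the last equality uses \eqref{ibarinv}. Here $\sigma\gamma$ on the right is understood as $P_\sigma P_\gamma=P_{\sigma\gamma}$, consistent with the matrix convention of the paper, which closes the argument.
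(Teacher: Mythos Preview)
Your proof is correct and is essentially the same as the paper's: both write $\overline{I_n}=I_n+J_n$ (the paper uses $1_n$ for your $J_n$), use $J_n^2=0$ over $\F_2$ for even $n$ to get the involution, observe that permutation conjugation fixes $J_n$, and then combine $\overline{\sigma}=\sigma\,\overline{I_n}$ with the first two facts to obtain \eqref{ibarprod}. The only cosmetic difference is that the paper inserts $\gamma\gamma^{-1}$ before applying \eqref{ibarconj}, whereas you use the equivalent commuting form $\overline{I_n}\gamma=\gamma\,\overline{I_n}$ directly.
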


\begin{proof}Let denote by $1_n$ the matrix of dimension $n$ whose elements are all equal to 1. One has
  $\overline{I_n}^2=(I_n\oplus 1_n)^2=I_n^2\oplus 1_n^2=I_n\oplus 1_n^2$. If $n$ is even then $1_n^2=0$, hence
  $\overline{I_n}^2=I_n.$
  Besides $\sigma\overline{I_n}\sigma^{-1}=\sigma(I_n\oplus 1_n)\sigma^{-1}=\sigma I_n\sigma^{-1}\oplus \sigma 1_n \sigma^{-1}=I_n\oplus 1_n=\overline{I_n}$.
  Finally $\overline{\sigma}\;\overline{\gamma}=\sigma\overline{I_n}\gamma\overline{I_n}=\sigma\gamma\gamma^{-1}\overline{I_n}\gamma\overline{I_n}=\sigma\gamma\overline{I_n}^2=\sigma\gamma$.
\end{proof}

From Lemma \ref{ibar1} we directly deduce the structure of the group $<\sigma,\overline{\sigma}>$ :

\begin{prop} The group generated by the permutation matrices and the matrix $\overline{I_n}$ is isomorphic to the cartesian
  product $\SYM\times (\F_2,\oplus)$.
\end{prop}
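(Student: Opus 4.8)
The plan is to exhibit an explicit isomorphism rather than argue abstractly. Writing $\mathcal P=\{P_\sigma\mid\sigma\in\SYM\}$ for the group of permutation matrices (which is isomorphic to $\SYM$, as established at the end of Section~\ref{background} through $\Phi$), I would define the map
\[
\Psi:\SYM\times(\F_2,\oplus)\longrightarrow\GL,\qquad \Psi(\sigma,\epsilon)=P_\sigma\,\overline{I_n}^{\,\epsilon},
\]
and show that it is an injective group homomorphism whose image is exactly $\langle P_\sigma,\overline{I_n}\rangle$. Since the image is generated by $\Psi(\sigma,0)=P_\sigma$ and $\Psi(\mathrm{id},1)=\overline{I_n}$, surjectivity onto the group in question is immediate once $\Psi$ is known to be a homomorphism.

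For the homomorphism property the essential input is that $\overline{I_n}$ is \emph{central}: Relation~\eqref{ibarconj} gives $\sigma\overline{I_n}\sigma^{-1}=\overline{I_n}$, so $\overline{I_n}$ commutes with every $P_\sigma$ and hence with every element of $\langle P_\sigma,\overline{I_n}\rangle$. Combined with the involution property $\overline{I_n}^{\,2}=I_n$ from \eqref{ibarinv}, I can then compute
\[
\Psi(\sigma,\epsilon)\,\Psi(\gamma,\delta)=P_\sigma\overline{I_n}^{\,\epsilon}P_\gamma\overline{I_n}^{\,\delta}=P_\sigma P_\gamma\,\overline{I_n}^{\,\epsilon+\delta}=P_{\sigma\gamma}\,\overline{I_n}^{\,\epsilon\oplus\delta}=\Psi(\sigma\gamma,\epsilon\oplus\delta),
\]
where the passage from $\epsilon+\delta$ to $\epsilon\oplus\delta$ uses precisely that $\overline{I_n}^{\,2}=I_n$. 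Relation~\eqref{ibarprod} is nothing but the special case $\epsilon=\delta=1$ of this computation, so it confirms the displayed chain rather than being needed independently.

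The one genuinely delicate step is \textbf{injectivity}, and this is where the hypothesis on $n$ really enters. Suppose $\Psi(\sigma,\epsilon)=I_n$. If $\epsilon=0$ then $P_\sigma=I_n$ forces $\sigma=\mathrm{id}$, so the kernel is trivial in that case. If $\epsilon=1$ then $P_\sigma\overline{I_n}=I_n$, whence $\overline{I_n}=P_\sigma^{-1}=P_{\sigma^{-1}}$ would be a permutation matrix; but each row of $\overline{I_n}=I_n\oplus 1_n$ contains $n-1$ ones, which exceeds the single $1$ of a permutation-matrix row as soon as $n\geq 3$. For even $n\geq 4$ this is a contradiction, so $\epsilon=1$ cannot occur and $\Psi$ is injective. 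I would flag explicitly that the degenerate case $n=2$ is genuinely excluded here, since there $\overline{I_2}=P_{(01)}$ is itself a permutation matrix and the stated decomposition collapses. Granting $n\geq 4$, $\Psi$ is a bijective homomorphism, proving that $\langle P_\sigma,\overline{I_n}\rangle\cong\SYM\times(\F_2,\oplus)$.
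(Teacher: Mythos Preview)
Your argument is correct and is essentially the same approach the paper intends: the paper simply writes ``From Lemma~\ref{ibar1} we directly deduce the structure'', and your explicit map $\Psi(\sigma,\epsilon)=P_\sigma\overline{I_n}^{\,\epsilon}$ together with the centrality~\eqref{ibarconj} and involution~\eqref{ibarinv} is exactly the unpacking of that deduction. Your explicit treatment of injectivity, and in particular your observation that the case $n=2$ genuinely fails (since $\overline{I_2}=P_{(01)}$), is a worthwhile addition that the paper glosses over.
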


The following lemma  gives a simple decomposition for the product of $[ij][ki][jk]$ by a transposition matrix $(ij)$, $(ki)$ or $(jk)$. We use it in the proof of
Proposition \ref{ibarsigmaprop}.


  \begin{lem} \label{ijk} Let $0\leq i,j,k\leq n-1$ be distinct integers : 
    \begin{align}
      &\text{R1 : }[ij][ki][jk]\mathbf{(jk)} = [ki][ij][jk]\label{R1}\\
      &\text{L1 : } \mathbf{(ij)}[ij][ki][jk] = [ij][jk][ki]\label{L1}\\
      &\text{R2 : }[ij][ki][jk]\mathbf{(ki)} = [jk][ij][ki][jk]\label{R2}\\
      &\text{L2 : }\mathbf{(ki)}[ij][ki][jk] = [ij][ki][jk][ij]\label{L2}\\
      &\text{R3 : }[ij][ki][jk]\mathbf{(ij)} = [ji][ik][kj]\label{R3}\\
      &\text{L3 : }\mathbf{(jk)}[ij][ki][jk] = [ji][ik][kj]\label{L3}
    \end{align}
\end{lem}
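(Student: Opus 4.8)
The plan is to verify each of the six identities by computing the action of both sides on a general computational basis vector $\ket{u}$, $u\in\F_2^n$, exactly in the style of the proof of Proposition~\ref{idx}. Every matrix occurring in \eqref{R1}--\eqref{L3} is either a transvection $[ab]$ or a transposition matrix $(ab)$ with $a,b\in\{i,j,k\}$. By \eqref{tiju} the transvection $[ab]=T_{ab}$ replaces the coordinate $u_a$ by $u_a\oplus u_b$ and fixes all other coordinates, while the transposition matrix $(ab)=P_{(ab)}$ merely exchanges the coordinates $u_a$ and $u_b$. In particular all of them fix $u_\ell$ for $\ell\notin\{i,j,k\}$, so both sides of each identity act trivially outside these three coordinates and it suffices to compare their effect on the triple $(u_i,u_j,u_k)$.

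First I would compute once and for all the common core $P:=[ij][ki][jk]$. Reading the product from right to left and applying \eqref{tiju} three times gives
\[
  P:(u_i,u_j,u_k)\longmapsto(u_i\oplus u_j\oplus u_k,\ u_j\oplus u_k,\ u_i\oplus u_k).
\]
Each right identity R1, R2, R3 is then $P$ preceded by a coordinate swap (the rightmost transposition acts first on $u$), and each left identity L1, L2, L3 is $P$ followed by a coordinate swap. The right-hand sides are evaluated the same way. For example, for \eqref{R1} the left-hand side first swaps $u_j,u_k$ and then applies $P$, yielding $(u_i\oplus u_j\oplus u_k,\ u_j\oplus u_k,\ u_i\oplus u_j)$, and a direct evaluation of $[ki][ij][jk]$ produces the identical triple; the remaining five relations are dispatched by the same bookkeeping.

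Since everything is $\F_2$-linear in three variables, there is no genuine obstacle beyond keeping the convention straight: $[ab]$ has target $a$ and control $b$, so it adds $u_b$ into $u_a$, whereas $(ab)$ exchanges $u_a$ and $u_b$. A useful consistency check is that swapping the inputs $u_i,u_j$ before applying $P$ gives the very same map as applying $P$ and then swapping the outputs $u_j,u_k$; both equal the transformation $(u_i,u_j,u_k)\mapsto(u_i\oplus u_j\oplus u_k,\ u_i\oplus u_k,\ u_j\oplus u_k)$, which explains why \eqref{R3} and \eqref{L3} share the right-hand side $[ji][ik][kj]$.

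One could instead attempt a purely algebraic derivation, pushing each transposition through the product with the commutation rule \eqref{permcom} and then eliminating it via the braid relation \eqref{braid} and the reductions of Proposition~\ref{reduction}. I expect this route to be the main source of difficulty: it requires a different ad hoc sequence of moves for each of the six cases and is error-prone, whereas the coordinate computation above is uniform and self-checking. For this reason I would present the action-on-$\ket{u}$ argument as the proof and relegate the algebraic manipulation, if desired, to a remark.
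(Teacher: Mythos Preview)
Your proof is correct but follows a different route from the paper. The paper proves only \eqref{L3} explicitly and states that the other five cases are similar; its argument is purely algebraic: the transposition $(jk)$ is pushed to the right through $[ij][ki]$ using the conjugation rule \eqref{permcom}/\eqref{actiontsym}, the remaining factor $(jk)[jk]$ is expanded via the braid relation as $[jk][kj]$, and the resulting four-term product $[ik][ji][jk][kj]$ is collapsed to $[ji][ik][kj]$ by rule \eqref{redcom}. Your approach instead verifies each identity by computing the linear map on the three affected coordinates $(u_i,u_j,u_k)$, in the spirit of the proof of Proposition~\ref{idx}. This is more elementary and, as you note, uniform and self-checking; the paper's algebraic derivation, on the other hand, is closer to how these identities are actually \emph{used} downstream (in the inductive proof of Proposition~\ref{ibarsigmaprop}, where one repeatedly absorbs transpositions into $[xyz]$ blocks via exactly these moves), so it serves a didactic purpose there. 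Either argument is complete; your remark that the algebraic route is error-prone and requires different ad hoc sequences is fair, and the paper implicitly concedes as much by detailing only one of the six.
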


\begin{proof}We prove only \eqref{L3}, the other proofs being similar :
  \begin{align*}
    \mathbf{(jk)}[ij][ki][jk]&=[ij]^{(jk)}[ki]^{(jk)}\mathbf{(jk)}[jk] \\
                             &=[ik][ji]\mathbf{(jk)}[jk] \text{ (using \eqref{actiontsym})}\\
                             &=[ik][ji][jk][kj] \\
                             &= [ji][ik][kj] \text{ (using \eqref{redcom})}
  \end{align*}
  \end{proof}

  One obtains similar rules for matrices of type $[ij][jk][ki]$ by taking the inverse of each member of the identities of Lemma \ref{ijk}.
  For convenience we denote the product $[ij][ki][jk]$ by $[ijk]$. With this notation one has $[ij][jk][ki]=[kij]^{-1}$.

\begin{prop}\label{ibardecprop}
  Let $n\geq  2$ be even and let $n=2q$.

  Let  $B_q=\prod\limits_{i=0}^{q-2}[2i+1\quad 2i+2\quad 2i+3]$ if $q>1$ and $B_1=I_2$. Then
  \begin{equation}\overline{I_{2q}}=B_q^{-1}(01)B_q.\label{ibardec}\end{equation}
  \end{prop}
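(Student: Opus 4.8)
The plan is to recognize both sides of \eqref{ibardec} as \emph{transvections} of $\F_2^{2q}$ and to match their data. First I would record, exactly as in the proof of Lemma \ref{ibar1}, that $\overline{I_{2q}}=I_{2q}\oplus 1_{2q}$ acts on a column vector $u=(u_0,\dots,u_{2q-1})^{\mathrm T}$ by $(\overline{I_{2q}}\,u)_i=\bigoplus_{k\neq i}u_k=u_i\oplus\psi(u)$, where $\psi(u)=\bigoplus_{k=0}^{2q-1}u_k$. Thus $\overline{I_{2q}}$ is the transvection $u\mapsto u+\psi(u)\mathbf 1$ with direction $\mathbf 1=\sum_k e_k$ and form $\psi$; this is an involution precisely because $\psi(\mathbf 1)=2q\equiv 0$, which is where the evenness hypothesis enters (cf. \eqref{ibarinv}). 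In the same way the transposition matrix $(01)=P_{(01)}$ is the transvection $u\mapsto u+\phi(u)(e_0+e_1)$ with direction $e_0+e_1$ and form $\phi(u)=u_0\oplus u_1$.

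Next I would use the elementary conjugation formula: for $g\in\GL$ and any transvection, $g^{-1}\bigl(u\mapsto u+\phi(u)v\bigr)g$ is the map $u\mapsto u+(\phi\circ g)(u)\,(g^{-1}v)$. Taking $g=B_q$ and conjugating $(01)$ reduces \eqref{ibardec} to two identities: (a) $B_q^{-1}(e_0+e_1)=\mathbf 1$, equivalently $B_q\mathbf 1=e_0+e_1$; and (b) $\phi\circ B_q=\psi$, i.e. $(B_qu)_0\oplus(B_qu)_1=\bigoplus_{k}u_k$ for every $u$. Granting (a) and (b), the formula gives $B_q^{-1}(01)B_q:u\mapsto u+\psi(u)\mathbf 1=\overline{I_{2q}}\,u$, which is the claim.

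To prove (a) and (b) I would first read off from \eqref{tiju} the action of one block $[ijk]=[ij][ki][jk]$ (the three transvections applied right to left): $v_i\mapsto v_i\oplus v_j\oplus v_k$, $v_j\mapsto v_j\oplus v_k$, $v_k\mapsto v_k\oplus v_i$. For (a) I would apply the blocks of $B_q$ to $\mathbf 1$ from the right: each block $[2\ell+1\;2\ell+2\;2\ell+3]$ meets all-ones values and sends them to $(1,0,0)$ on $(2\ell+1,2\ell+2,2\ell+3)$, so successively all coordinates $\geq 2$ are cleared while the lowest index of each triple survives, and the final block $[123]$ leaves exactly $e_0+e_1$. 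For (b) I would note that no block of $B_q$ ever targets coordinate $0$, so $(B_qu)_0=u_0$ and (b) collapses to $(B_qu)_1=\bigoplus_{k\geq1}u_k$; writing $B_q=[123]\,B_q'$ with $B_q'=[345]\cdots[2q-3\;2q-2\;2q-1]$, the operator $B_q'$ is $B_{q-1}$ relabelled by the shift $k\mapsto k+2$, fixing coordinates $0,1,2$, so by induction $(B_q'u)_3=\bigoplus_{k\geq3}u_k$; feeding this into the $[123]$ action gives $(B_qu)_1=u_1\oplus u_2\oplus(B_q'u)_3=\bigoplus_{k\geq1}u_k$. The base case $q=1$ is immediate since $B_1=I_2$ and $\overline{I_2}=(01)$.

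The conceptual reduction is painless once one sees that both matrices are transvections; the only real work, and the only place demanding care, is the coordinate bookkeeping in (a) and (b). The awkwardness comes entirely from the index $3$ shared between $[123]$ and $B_q'$, and the trick that keeps the induction clean is to peel the \emph{leftmost} block $[123]$ (rather than the rightmost) so that this shared coordinate telescopes through the shifted hypothesis.
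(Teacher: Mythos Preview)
Your argument is correct and takes a genuinely different route from the paper. The paper proves \eqref{ibardec} by a direct matrix induction on $q$: it embeds $\overline{I_{2q}}$ into $\GL[2q+2]$ via $\phi(M)=\begin{bmatrix}M&0\\0&I_2\end{bmatrix}$ and then checks, using Proposition~\ref{GJmult} on row and column operations, that $[2q{-}1\ 2q\ 2q{+}1]^{-1}\,\phi(\overline{I_{2q}})\,[2q{-}1\ 2q\ 2q{+}1]=\overline{I_{2q+2}}$, which together with $B_{q+1}=B_q\,[2q{-}1\ 2q\ 2q{+}1]$ closes the induction.

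You instead recognize that both $\overline{I_{2q}}$ and $(01)$ are rank-one perturbations of the identity, $u\mapsto u+f(u)v$ with $f(v)=0$, and use the functorial conjugation formula $g^{-1}T_v^{\,f}g=T_{g^{-1}v}^{\,f\circ g}$ to reduce the identity to the two concrete facts $B_q\mathbf 1=e_0+e_1$ and $(B_qu)_0\oplus(B_qu)_1=\bigoplus_k u_k$. Your verifications of (a) and (b) are sound: the block action $v_i\mapsto v_i\oplus v_j\oplus v_k$, $v_j\mapsto v_j\oplus v_k$, $v_k\mapsto v_k\oplus v_i$ is correct for $[ijk]=[ij][ki][jk]$, and the telescoping through the shared index $2\ell+3$ works exactly as you describe. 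Your approach is more conceptual---it explains \emph{why} conjugation by $B_q$ carries the ``local'' transvection $(01)$ to the ``global'' one $\overline{I_{2q}}$, by showing $B_q$ sends the global direction/form pair $(\mathbf 1,\psi)$ to the local pair $(e_0+e_1,\phi)$---whereas the paper's approach stays closer to the row/column calculus already set up and requires no extra notion beyond Proposition~\ref{GJmult}.
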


\begin{proof}
  We proove the result by induction on $q\geq  1$.
  
    The initial  case follows from $\overline{I_2}=(01)$.

  Induction step : suppose that $q\geq  1$ and $\overline{I_{2q}}=B_q^{-1}(01)B_q$.

  We use the morphism $\phi$ already mentionned in the final remark of Section \ref{algebra} : $\phi$ is the injective morphism from $\GL[2q]$ into $\GL[2(q+1)]$
  defined by
$\phi(M)=\begin{bmatrix}M&0\\0&I_2\end{bmatrix}$ and we make no distinction between matrices $M$ and $\phi(M)$.

So, one has :  
  $\phi(\overline{I_{2q}})=\begin{bmatrix}\overline{I_{2q}}&0\\0&I_2\end{bmatrix}=\begin{bmatrix}
    0&1&\dots&1&0&0\\
    1&0&\ddots&\vdots&\vdots&\vdots\\
    \vdots&\ddots&\ddots&1&\vdots&\vdots\\
    1&\dots&1&0&0&\vdots\\
    0&\dots&\dots&0&1&0\\
    0&\dots&\dots&\dots&0&1\\
  \end{bmatrix}$ and
  $\phi(\overline{I_{2q}})=B_q^{-1}(01)B_q$.

  Due to the effect on lines and columns of a mutliplication by a transvection matrix (Proposition \ref{GJmult}), we check that :

  $[2q\quad 2q+1][2q+1\quad 2q-1][2q-1\quad 2q]\phi(\overline{I_{2q}})[2q-1\quad 2q][2q+1\quad 2q-1][2q\quad 2q+1]=\overline{I_{2q+2}}$.

  Since $[2q-1\quad 2q][2q+1\quad 2q-1][2q\quad 2q+1]=[2q-1\quad 2q\quad 2q+1]$ one has :

  $\overline{I_{2q+2}}=[2q-1\quad 2q\quad 2q+1]^{-1}\phi(\overline{I_{2q}})[2q-1\quad 2q\quad 2q+1]$.

  $\overline{I_{2q+2}}=[2q-1\quad 2q\quad 2q+1]^{-1}B_q^{-1}(01)B_q[2q-1\quad 2q\quad 2q+1]$.

  Hence : $\overline{I_{2q+2}}=B_{q+1}^{-1}(01)B_{q+1}$.
\end{proof}

Formula \eqref{ibardec} gives a decomposition of $\overline{I_{2q}}$ in $3(q-1)+3+3(q-1)=3(n-1)$ transvections (where $n=2q$). We remark that this decompostion is not optimal if $q>1$.
Indeed, it can be reduced as follows.

Firstly 
$B_q=\prod\limits_{i=0}^{q-2}[2i+1\quad 2i+2\quad 2i+3]=[12][31][23]\prod\limits_{i=1}^{q-2}[2i+1\quad 2i+2\quad 2i+3]$.

Thus $B_q=[12][31][23]B'_q$, where $B_q'=\prod\limits_{i=1}^{q-2}[2i+1\quad 2i+2\quad 2i+3]$. Hence
\begin{equation}
\overline{I_{2q}}=B_q^{-1}(01)B_q=(B_q')^{-1}[23][31][12](01)[12][31][23]B'_q.\label{redBq}
\end{equation}

Then, one has :

$\begin{array}{ll}
[23][31][12](01)[12][31][23]&=[23][31][12][10][01][10][12][31][23]\\

                            &\stackrel{\ref{comT}}{=}[23][31][10][12][01][12][10][31][23]\\

                            &\stackrel{\ref{conjT}}{=}[23][31][10][01][02][10][31][23]. \text{ (8 transvections)}
 \end{array}$

 So, from Equation \eqref{redBq}, we now have a decomposition of $\overline{I_{2q}}$ in $3(q-2)+8+3(q-2)=3(n-1)-1$ transvections. We conjecture that this decomposition of $\overline{I_n}$ in $3(n-1)-1$ transvections is optimal and we checked this conjecture for $n=4$ using our computer optimization program.\medskip 

Actually Proposition \ref{ibardecprop} is a special case of the following proposition.

\begin{prop}\label{ibarsigmaprop} Let $n\geq  2$ be even and let $n=2q$. For any permutation matrix $\sigma$ of $\GL[n]$ different from $I_n$ there exists $q-1$ matrices $M_1,\dots,M_{q-1}$ of type $[xyz]$ , $q-1$ integers $\epsilon_i\in\{1,-1\}$, $q-1$ matrices $M_1',\dots,M_{q-1}'$ of type $[xyz]$ and $q-1$ integers $\epsilon_i'\in\{1,-1\}$ such that :  
  \begin{equation}\label{ibarsigma}
    \overline{ I_n}=\prod\limits_{i=1}^{q-1}M_i^{\varepsilon_i}\ \sigma\ \prod\limits_{i=1}^{q-1}M_i'^{\varepsilon_i'}
\end{equation}
    \end{prop}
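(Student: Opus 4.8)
The plan is to prove the statement by first reducing to one permutation per conjugacy class and then building the decomposition inductively on $q$, starting from Proposition~\ref{ibardecprop} and using Lemma~\ref{ijk} as the engine that inserts transpositions into the central permutation. The reduction rests on two facts already available: by \eqref{ibarconj} the matrix $\overline{I_n}$ commutes with every permutation matrix, and by \eqref{actiontsym} conjugation by a permutation matrix sends a block $[xyz]$ to the block $[\sigma(x)\sigma(y)\sigma(z)]$. Hence if $\overline{I_n}=\prod_i M_i^{\varepsilon_i}\,\sigma\,\prod_i M_i'^{\varepsilon_i'}$ holds for one $\sigma$, then conjugating the whole identity by a permutation matrix $P_\tau$ and using $P_\tau\overline{I_n}P_\tau^{-1}=\overline{I_n}$ yields an identity of the same shape for $\tau\sigma\tau^{-1}$, with the same number of blocks on each side. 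So it suffices to exhibit a decomposition for a single representative of each non-trivial cycle type of $\SYM$.

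The induction on $q$ starts at $q=1$, where $n=2$, the only non-identity permutation is $(01)=\overline{I_2}$, and no blocks are required. For the step I would use the lift already appearing in the proof of Proposition~\ref{ibardecprop}, namely $\overline{I_{2q+2}}=[\,2q\!-\!1\;2q\;2q\!+\!1\,]^{-1}\,\overline{I_{2q}}\,[\,2q\!-\!1\;2q\;2q\!+\!1\,]$, into which I substitute the decomposition of $\overline{I_{2q}}$ supplied by the induction hypothesis. If the target cycle type has at least two fixed points I place those at $2q$ and $2q+1$; then $\sigma$ is the trivial extension of a non-identity permutation $\sigma_0$ of $\{0,\dots,2q-1\}$, the substitution produces a central $\sigma_0$ flanked by $q$ blocks on each side, and the conjugation argument above settles this whole family of cycle types at once.

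The heart of the argument, and the step I expect to be the main obstacle, is the family of cycle types with at most one fixed point, where $2q$ and $2q+1$ must be merged into non-trivial cycles. Here I would exploit the mechanism provided by Lemma~\ref{ijk}: inserting a trivial factor $(jk)(jk)$ immediately next to a block $[ij][ki][jk]$ and absorbing one copy through \eqref{L3}, which rewrites $(jk)[ij][ki][jk]$ as the single block $[ji][ik][kj]$ (or symmetrically through the right-hand rules), multiplies the central permutation by the transposition $(jk)$ while turning one block into another single block, so the block count is unchanged. Each absorption thus trades one of the $2(q-1)=n-2$ available block slots for one extra transposition in the centre, which together with the central transposition allows up to $n-1$ transpositions; this matches exactly the maximal length of a transposition factorisation of an $n$-cycle, so the accounting is tight but sufficient. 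The difficulty is twofold: a block can only absorb a transposition supported on two of its three indices, and the indices $2q,2q+1$ occur only in the outer blocks $[\,2q\!-\!1\;2q\;2q\!+\!1\,]^{\pm1}$ rather than adjacent to the centre. I would therefore first transport transpositions to index-compatible positions using the commutation rules \eqref{R2} and \eqref{L2}, which slide a transposition across a block while converting it into another transposition, and then choose a transposition factorisation of the chosen representative $\sigma$ compatible with the successive block indices. Proving that such a compatible ordering always exists, that it consumes precisely $q-1$ blocks on each side, and that the resulting central element is exactly $\sigma$, is the delicate combinatorial core of the proof; once it is secured for the representatives, the conjugation reduction of the first paragraph extends it to every remaining cycle type and closes the induction.
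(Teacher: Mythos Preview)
Your overall architecture is exactly the paper's: reduce to one representative per cycle type via \eqref{ibarconj} and \eqref{actiontsym}, induct on $q$ using the lift $\overline{I_{2q+2}}=[\,2q\!-\!1\;2q\;2q\!+\!1\,]^{-1}\overline{I_{2q}}[\,2q\!-\!1\;2q\;2q\!+\!1\,]$ from Proposition~\ref{ibardecprop}, and absorb the extra transpositions via Lemma~\ref{ijk}. Two points deserve correction.

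First, \eqref{R2} and \eqref{L2} do \emph{not} slide a transposition across a block: they turn a block plus a transposition into \emph{four} transvections, e.g.\ $(ki)[ijk]=[ijk][ij]$, which increases the transvection count and is never used in the paper. What actually moves a transposition through the interior blocks is plain conjugation \eqref{permcom}: $\tau\,M=M^{\tau}\,\tau$, which replaces $M=[xyz]^{\pm1}$ by $[\tau(x)\tau(y)\tau(z)]^{\pm1}$ and leaves the transposition intact. The transposition then reaches the outer block $[\,n\!-\!1\;n\;n\!+\!1\,]$ and is absorbed there by \eqref{L1}, \eqref{R1} or \eqref{L3}, \eqref{R3}; the rules \eqref{R2}/\eqref{L2} play no role.

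Second, the ``delicate combinatorial core'' you leave open is in fact resolved cleanly by a four-case split on the shape of the target partition $\lambda$. The paper defines a partition $\lambda_{-2}$ of $n-2$ so that the representative $\alpha_{\lambda_{-2}}$ (extended by two fixed points) differs from $\alpha_{\lambda}$ by at most two adjacent transpositions supported on $\{n-1,n,n+1\}$: none if $\lambda$ ends in $(\dots,1,1)$, one transposition $(n-1\ n)$ if it ends in $(\dots,m,1)$ with $m>1$, one transposition $(n\ n+1)$ if it ends in $(\dots,2)$, and $(n\ n+1)(n-1\ n)$ in the generic case $n_p>2$. In the first three cases the transposition is absorbed directly by \eqref{L1} or \eqref{L3} into the right outer block. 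In the generic case the paper first conjugates the whole identity by $(n\ n+1)$ (harmless by \eqref{ibarconj}), which, via \eqref{L1}/\eqref{R1}, rewrites both outer blocks as $[\,n\;n\!+\!1\;n\!-\!1\,]^{\pm1}$; the two remaining transpositions then match the indices of the new right outer block and are absorbed successively by \eqref{L1} and \eqref{L3}. No search for a ``compatible ordering'' is needed: the index compatibility is forced by choosing the representatives $\alpha_\lambda$ so that the last cycle sits on the top indices.
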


    \begin{proof} Let $n\geq  2$ be an even integer. We denote by $\mathcal{P}_n^{*}$ the set of decreasing partitions $\lambda$ of $n$  such that
      $\lambda\neq (1,\dots,1)$. Any $\lambda\in\mathcal{P}_n^{*}$ is the cycle type of a permutation different from the identity (see Section \ref{background}).
      If $\lambda\in\mathcal{P}_n^{*}$, we denote by  $\alpha_{\lambda}$ the permutation of $\SYM$ (or equivalently the permutation matrix of $\GL$)
      defined by :      $\alpha_{\lambda}=\underbrace{(0\dots n_1-1)}_{\text{cycle of length } n_1}
    \underbrace{(n_1\dots n_1+n_2-1)}_{\text{cycle of length } n_2}\dots
    \underbrace{(\sum\limits_{i=1}^{p-1} n_i\dots\sum\limits_{i=1}^{p} n_i-1)}_{\text{cycle of length } n_p}$.
    
    Notice that $\alpha_{\lambda}$ has cycle type $\lambda$ and is therefore different from the identity.
    
    If $n\geq  4$ and $\lambda\in\mathcal{P}_n^{*}$, we denote by $\lambda_{-2}$ the element of $\mathcal{P}_{n-2}^{*}$ obtained from $\lambda$ as follows. To describe
    the operation we distinguish the general case ($n_p>2$) and three specific cases and we write each time the relation between
    $\alpha_{\lambda}$ and $\alpha_{\lambda_{-2}}$.
    Notice that $\alpha_{\lambda_{-2}}$ is a permutation in $\SYM[n-2]$ or a permutation matrix in $\GL[n-2]$ since $\lambda\in\mathcal{P}_{n-2}^{*}$ but we can consider
    $\alpha_{\lambda_{-2}}$ as a permutation in $\SYM[n]$ (by setting $\alpha_{\lambda_{-2}}(n-2)=n-2$ and $\alpha_{\lambda_{-2}}(n-1)=n-1$) or as a permutation matrix of $\GL[n]$    (using the injective morphism $\phi$ from $\GL[n-2]$ into $\GL$).

    In the general case, $\lambda_{-2}:=(n_1,\dots,n_{p}-2)$, so $\alpha_{\lambda_{-2}}=\alpha_{\lambda}(n-2\quad n-1)(n-3\quad n-2)$.
    If $n_p=n_{p-1}=1$ (first specific case), $\lambda_{-2}:=(n_1,\dots,n_{p-2})$, hence $\alpha_{\lambda_{-2}}=\alpha_{\lambda}$.
    If $n_p=1$ and $n_{p-1}>1$ (second specific case), $\lambda_{-2}:=(n_1,\dots,n_{p-2},n_{p-1}-1)$, hence $\alpha_{\lambda_{-2}}=\alpha_{\lambda}(n-3\quad n-2)$.
      If $n_p=2$ (third specific case), $\lambda_{-2}:=(n_1,\dots,n_{p-1})$,  hence  $\alpha_{\lambda_{-2}}=\alpha_{\lambda}(n-2\quad n-1)$.

     Without loss of generality we can prove Proposition \ref{ibarsigmaprop} in the case of a permutation $\sigma=\alpha_{\lambda}$
    where $\lambda\in\mathcal{P}_n^{*}$. Indeed, if $\sigma$ is any permutation (different from the identity) that has cycle type $\lambda$, then $\sigma$ is in the same conjugacy
    class as $\alpha_{\lambda}$ since they have the same cycle type and we build a permutation $\gamma$ such that $\sigma=\gamma\alpha_{\lambda}\gamma^{-1}$. Since 
$\overline{ I_n}=\prod\limits_{i=1}^{q-1}M_i^{\varepsilon_i}\ \alpha_{\lambda}\ \prod\limits_{i=1}^{q-1}M_i'^{\varepsilon_i'}$ and
$\overline{ I_n}\stackrel{\ref{ibarconj}}{=}\left(\overline{ I_n}\right)^{\gamma}$, it follows that
\begin{equation}\label{ibarproof1}
\overline{ I_n}=\prod\limits_{i=1}^{q-1}\left(M_i^{\varepsilon_i}\right)^{\gamma}\ \sigma\
\prod\limits_{i=1}^{q-1}\left(M_i'^{\varepsilon_i'}\right)^{\gamma}.
\end{equation}
To conclude it is sufficient to notice that $\left([ijk]^{\epsilon}\right)^{\gamma}=[\gamma(i) \gamma(j) \gamma(k)]^{\epsilon}$ for any permutation $\gamma$ and for $\epsilon\in\{-1,1\}$.
So Equation \eqref{ibarproof1} can easily be
writen under the form of Equation \eqref{ibarsigma} and this proves Proposition \ref{ibarsigmaprop} for the permutation $\sigma$.

We prove now  Proposition \ref{ibarsigmaprop} for a permutation $\alpha_{\lambda}$, by induction on $n\geq 2$ even.

The intial case is clear since there is only
one partition of $2$ different from $(1,1)$ namely $\lambda=(2)$. In this case $\alpha_{\lambda}=(01)=\overline{I_2}$.

      Induction step : let $n\geq  2$ be an even integer,  let $\lambda$ in $\mathcal{P}_{n+2}^{*}$ and $\alpha_{\lambda}$ in $\SYM[n+2]$.
      From Proposition \ref{ibardecprop}, one has $\overline{I_{n+2}}=[n-1\quad n\quad n+1]^{-1}\overline{I_n}[n-1\quad n\quad n+1]$.
      We use the induction hypothesis on $\alpha_{\lambda_{-2}}\in\SYM$ :
      $\overline{I_n}=\prod\limits_{i=1}^{q-1}M_i^{\varepsilon_i}\ \alpha_{\lambda_{-2}}\ \prod\limits_{i=1}^{q-1}M_i'^{\varepsilon_i'}$, hence
      \begin{equation}\label{ibarproof2}
        \overline{I_{n+2}}=[n-1\quad n\quad n+1]^{-1}\prod\limits_{i=1}^{q-1}M_i^{\varepsilon_i}\ \alpha_{\lambda_{-2}}\ \prod\limits_{i=1}^{q-1}M_i'^{\varepsilon_i'}[n-1\quad n\quad n+1].
      \end{equation}
      We consider now the different possible relations between $\alpha_{\lambda_{-2}}$ and $\alpha_{\lambda}$, starting by the three specific cases
      and ending by the general case which is more technical.
      
      In the first case $\alpha_{\lambda_{-2}}=\alpha_{\lambda}$, so Equation \eqref{ibarproof2} becomes
      
      $\overline{I_{n+2}}=[n-1\quad n\quad n+1]^{-1}\prod\limits_{i=1}^{q-1}M_i^{\varepsilon_i}\ \alpha_{\lambda}\ \prod\limits_{i=1}^{q-1}M_i'^{\varepsilon_i'}[n-1\quad n\quad n+1]$

      and we are done with the induction step.

      In the second case  $\alpha_{\lambda_{-2}}=\alpha_{\lambda}(n-1\quad n)$, so Equation \eqref{ibarproof2} becomes
      \begin{equation}\label{ibarproof3}
      \overline{I_{n+2}}=[n-1\quad n\quad n+1]^{-1}\prod\limits_{i=1}^{q-1}M_i^{\varepsilon_i}\ \alpha_{\lambda}(n-1\quad n)\ \prod\limits_{i=1}^{q-1}M_i'^{\varepsilon_i'}[n-1\quad n\quad n+1].
      \end{equation}
      Let $M=(n-1\quad n)\ \prod\limits_{i=1}^{q-1}M_i'^{\varepsilon_i'}[n-1\quad n\quad n+1]$. One has :

      $M=\prod\limits_{i=1}^{q-1}\left(M_i'^{\varepsilon_i'}\right)^{(n-1\ n)}(n-1\quad n)[n-1\quad n][n+1\quad n-1][n\quad n+1]$
                        
      $\phantom{M}\stackrel{\ref{L1}}{=}\prod\limits_{i=1}^{q-1}\left(M_i'^{\varepsilon_i'}\right)^{(n-1\ n)}[n-1\quad n][n\quad n+1][n+1\quad n-1]$

      $\phantom{M}=\prod\limits_{i=1}^{q-1}\left(M_i'^{\varepsilon_i'}\right)^{(n-1\  n)}[n+1\quad n-1\quad n]^{-1}$.

      Hence Equation \eqref{ibarproof3} becomes
\begin{equation*}
    \overline{I_{n+2}}=[n-1\quad n\quad n+1]^{-1}\prod\limits_{i=1}^{q-1}M_i^{\varepsilon_i}\ \alpha_{\lambda}\prod\limits_{i=1}^{q-1}\left(M_i'^{\varepsilon_i'}\right)^{(n-1\  n)}[n+1\quad n-1\quad n]^{-1}
\end{equation*}
    and we are done with the induction step since $([xyz]^{\epsilon})^{\sigma}=[\sigma(x)\sigma(y)\sigma(z)]^{\epsilon}$ where $\sigma=(n-1\quad n)$.

       In the third case  $\alpha_{\lambda_{-2}}=\alpha_{\lambda}(n\quad n+1)$, so Equation \eqref{ibarproof2} becomes
       \begin{equation}\label{ibarproof4}
        \overline{I_{n+2}}=[n-1\quad n\quad n+1]^{-1}\prod\limits_{i=1}^{q-1}M_i^{\varepsilon_i}\ \alpha_{\lambda}(n\quad n+1)\
        \prod\limits_{i=1}^{q-1}M_i'^{\varepsilon_i'}[n-1\quad n\quad n+1].
      \end{equation}
      Let $M=(n\quad n+1)\prod\limits_{i=1}^{q-1}M_i'^{\varepsilon_i'}[n-1\quad n\quad n+1]$. One has :
      
        $M=\prod\limits_{i=1}^{q-1}\left(M_i'^{\varepsilon_i'}\right)^{(n\ n+1)}(n\quad n+1)[n-1\quad n][n+1\quad n-1][n\quad n+1]$

        $\phantom{M}\stackrel{\ref{L3}}{=}\prod\limits_{i=1}^{q-1}\left(M_i'^{\varepsilon_i'}\right)^{(n\ n+1)}[n\quad n-1][n-1\quad n+1][n+1\quad n]$

        $\phantom{M}=\prod\limits_{i=1}^{q-1}\left(M_i'^{\varepsilon_i'}\right)^{(n\ n+1)}[n+1\quad n\quad n-1]^{-1}$.

        Hence Equation \eqref{ibarproof4} becomes
        \begin{equation*}
    \overline{I_{n+2}}=[n-1\quad n\quad n+1]^{-1}\prod\limits_{i=1}^{q-1}M_i^{\varepsilon_i}\ \alpha_{\lambda}\prod\limits_{i=1}^{q-1}\left(M_i'^{\varepsilon_i'}\right)^{(n\ n+1)}[n+1\quad n\quad n-1]^{-1}
  \end{equation*}
    and we are done with the induction step since $([xyz]^{\epsilon})^{\sigma}=[\sigma(x)\sigma(y)\sigma(z)]^{\epsilon}$ where $\sigma=(n\quad n+1)$.

    In the general case, we start by conjugating each member of Equation \eqref{ibarproof2} by $(n\quad n+1)$. Using the invariance of $\overline{I_n}$ by conjugation (Identity \eqref{ibarconj}), we get :

        \begin{equation}\label{ibarproof5}
\overline{I_{n+2}}=(n\quad n+1)[n-1\quad n\quad n+1]^{-1}\prod\limits_{i=1}^{q-1}M_i^{\varepsilon_i}\ \alpha_{\lambda_{-2}}
    \prod\limits_{i=1}^{q-1}M_i'^{\varepsilon_i'}[n-1\quad n\quad n+1](n\quad n+1).
    \end{equation}
    On one hand, we have :
    
    $(n\quad n+1)[n-1\quad n\quad n+1]^{-1}=(n\quad n+1)[n\quad n+1][n+1\quad n-1][[n-1\quad n]$
    
    $\phantom{(n\quad n+1)[n-1\quad n\quad n+1]^{-1}}\stackrel{\ref{L1}}{=}[n\quad n+1][[n-1\quad n][n+1\quad n-1]$

    $\phantom{(n\quad n+1)[n-1\quad n\quad n+1]^{-1}}=[n\quad n+1\quad n-1]$

    On the other hand we have :
    
    $[n-1\quad n\quad n+1](n\quad n+1)=[n-1\quad n][n+1\quad n-1][n\quad n+1](n\quad n+1)$.

    $\phantom{[n-1\quad n\quad n+1](n\quad n+1)}\stackrel{\ref{R1}}{=}[n+1\quad n-1][n-1\quad n][n\quad n+1]$.

    $\phantom{[n-1\quad n\quad n+1](n\quad n+1)}=[n\quad n+1\quad n-1]^{-1}$

    So Equation \eqref{ibarproof5} becomes :

$\overline{I_{n+2}}=[n\quad n+1\quad n-1]\prod\limits_{i=1}^{q-1}M_i^{\varepsilon_i}\ \alpha_{\lambda_{-2}}
    \prod\limits_{i=1}^{q-1}M_i'^{\varepsilon_i'}[n\quad n+1\quad n-1]^{-1}$

    In the general case, one has :

    $\alpha_{\lambda_{-2}}=\alpha_{\lambda}(n\quad n+1)(n-1\quad n)=\alpha_{\lambda}(n-1\quad n)(n+1\quad n-1)$, hence
\begin{equation}\label{ibarproof6}
\overline{I_{n+2}}=[n\quad n+1\quad n-1]\prod\limits_{i=1}^{q-1}M_i^{\varepsilon_i}\alpha_{\lambda}(n-1\quad n)(n+1\quad n-1)
\prod\limits_{i=1}^{q-1}M_i'^{\varepsilon_i'}[n\quad n+1\quad n-1]^{-1}.
\end{equation}
Let $M=(n-1\quad n)(n+1\quad n-1)
\prod\limits_{i=1}^{q-1}M_i'^{\varepsilon_i'}[n\quad n+1\quad n-1]^{-1}$. One has :

$M=\prod\limits_{i=1}^{q-1}\left(M_i'^{\varepsilon_i'}\right)^{(n-1\ n)(n+1\ n-1)}(n-1\quad n)(n+1\quad n-1)[n\quad n+1\quad n-1]^{-1}$

        $\phantom{M}\stackrel{\ref{L1}}{=}\prod\limits_{i=1}^{q-1}\left(M_i'^{\varepsilon_i'}\right)^{(n-1\ n)(n+1\ n-1)}(n-1\quad n)[n+1\quad n-1][n\quad n+1][n-1\quad n]$

        $\phantom{M}\stackrel{\ref{L3}}{=}\prod\limits_{i=1}^{q-1}\left(M_i'^{\varepsilon_i'}\right)^{(n-1\ n)(n+1\ n-1)}[n-1\quad n+1][n+1\quad n][n\quad n-1]$

        $\phantom{M}=\prod\limits_{i=1}^{q-1}\left(M_i'^{\varepsilon_i'}\right)^{(n-1\  n)(n+1\ n-1)}[n\quad n-1\quad n+1]^{-1}$.

        Hence Equation \eqref{ibarproof6} becomes :
        \begin{equation*}
        \overline{I_{n+2}}=[n\quad n+1\quad n-1]\prod\limits_{i=1}^{q-1}M_i^{\varepsilon_i}\alpha_{\lambda}
        \prod\limits_{i=1}^{q-1}\left(M_i'^{\varepsilon_i'}\right)^{(n-1\  n)(n+1\ n-1)}[n\quad n-1\quad n+1]^{-1}
      \end{equation*}
      and we are done with the induction step since $([xyz]^{\epsilon})^{\sigma}=[\sigma(x)\sigma(y)\sigma(z)]^{\epsilon}$ where $\sigma=(n-1\quad n)(n+1\quad n-1)$.

\end{proof}

Actually the main interest of Proposition \ref{ibarsigmaprop} is to give a way to compute easily a decomposition of $\overline{\sigma}$ in transvections,
as described in
the following proposition.

\begin{prop}\label{sigmabarprop}
Let $n\geq  2$ be an even integer. For any permutation matrix $\sigma$ of $\GL[n]$ different from $I_n$ there exists $n-2$ matrices $M_1,\dots,M_{n-2}$ of type $[xyz]$ and $n-2$ integers $\epsilon_i\in\{1,-1\}$ such that :
  \begin{equation}\label{sigmabar}
    \overline{\sigma}=\prod\limits_{i=1}^{n-2}M_i^{\varepsilon_i}
\end{equation}
\end{prop}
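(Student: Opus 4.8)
The plan is to obtain Proposition~\ref{sigmabarprop} as an almost immediate corollary of Proposition~\ref{ibarsigmaprop}, using the elementary factorisation $\overline{\sigma}=\sigma\,\overline{I_n}$ recorded at the start of Section~\ref{bitrev}. The one genuine idea is to apply Proposition~\ref{ibarsigmaprop} not to $\sigma$ itself but to $\sigma^{-1}$; this is legitimate because $\sigma^{-1}\neq I_n$ exactly when $\sigma\neq I_n$, so the hypothesis of Proposition~\ref{sigmabarprop} is respected. Doing so produces $q-1$ matrices $M_1,\dots,M_{q-1}$ of type $[xyz]$, signs $\varepsilon_i\in\{1,-1\}$, and likewise $M_1',\dots,M_{q-1}'$ with signs $\varepsilon_i'$, such that
\[
\overline{I_n}=\prod_{i=1}^{q-1}M_i^{\varepsilon_i}\ \sigma^{-1}\ \prod_{i=1}^{q-1}M_i'^{\varepsilon_i'}.
\]

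Next I would substitute this expression into $\overline{\sigma}=\sigma\,\overline{I_n}$. The reason for feeding in $\sigma^{-1}$ is precisely that the leading $\sigma$ coming from $\overline{\sigma}=\sigma\,\overline{I_n}$ and the central $\sigma^{-1}$ coming from the decomposition then flank the prefix product and conjugate it, with no permutation factor left over:
\[
\overline{\sigma}=\sigma\left(\prod_{i=1}^{q-1}M_i^{\varepsilon_i}\right)\sigma^{-1}\prod_{i=1}^{q-1}M_i'^{\varepsilon_i'}
=\left(\prod_{i=1}^{q-1}M_i^{\varepsilon_i}\right)^{\!\sigma}\prod_{i=1}^{q-1}M_i'^{\varepsilon_i'}.
\]
The final step invokes the conjugation rule $([xyz]^{\varepsilon})^{\sigma}=[\sigma(x)\sigma(y)\sigma(z)]^{\varepsilon}$ used repeatedly in this section: it guarantees that each conjugate $(M_i^{\varepsilon_i})^{\sigma}$ is again a matrix of type $[xyz]$ raised to the power $\varepsilon_i$. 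Consequently $\overline{\sigma}$ is exhibited as a product of $(q-1)+(q-1)=n-2$ matrices of type $[xyz]$ carrying signs in $\{1,-1\}$, which is exactly the content of Proposition~\ref{sigmabarprop}.

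I do not expect a real obstacle once the arrangement is chosen; the only delicate point is the decision to apply Proposition~\ref{ibarsigmaprop} to $\sigma^{-1}$ rather than to $\sigma$. Had one instead written $\overline{I_n}=A\sigma B$ and substituted, the result would be $\overline{\sigma}=\sigma A\sigma B$, whose central factor $\sigma A\sigma$ is not a conjugation and cannot be absorbed into triples. I would also note, using the commutativity $\sigma\,\overline{I_n}=\overline{I_n}\,\sigma$ furnished by Identity~\eqref{ibarconj}, that a perfectly symmetric variant is available: starting from $\overline{\sigma}=\overline{I_n}\,\sigma$ with the same $\overline{I_n}=A\sigma^{-1}B$ yields $\overline{\sigma}=A\,B^{\sigma^{-1}}$, in which the \emph{suffix} product is conjugated instead, giving the identical count of $n-2$ triples.
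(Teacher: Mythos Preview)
Your proof is correct and follows precisely the paper's own argument: apply Proposition~\ref{ibarsigmaprop} to $\sigma^{-1}$, substitute into $\overline{\sigma}=\sigma\,\overline{I_n}$, and observe that the resulting conjugation $(M_i^{\varepsilon_i})^{\sigma}$ preserves the $[xyz]$ type. Your additional remarks explaining why $\sigma^{-1}$ is the right choice and noting the symmetric variant via $\overline{\sigma}=\overline{I_n}\,\sigma$ are sound and go a bit beyond the paper's terse justification.
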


\begin{proof} Let $\sigma$ be a  permutation matrix of $\GL[n]$ different from $I_n$. 
  Applying Proposition \ref{ibarsigmaprop} to $\sigma^{-1}$ one has 
  $\overline{ I_n}=\prod\limits_{i=1}^{q-1}M_i^{\varepsilon_i}\ \sigma^{-1}\ \prod\limits_{i=1}^{q-1}M_i'^{\varepsilon_i'}$. Hence 
  $\overline{\sigma}=\sigma\overline{I_n}=\prod\limits_{i=1}^{q-1}\left(M_i^{\varepsilon_i}\right)^{\sigma}\prod\limits_{i=1}^{q-1}M_i'^{\varepsilon_i'}$.
  \end{proof}

  Since the proof of Proposition $\ref{ibarsigmaprop}$ is constructive, the method used in the proof of Proposition \ref{sigmabarprop} gives an algorithm to decompose any matrix $\overline{\sigma}$ different from $\overline{I_n}$ in $3(n-2)$ transvections (see Figure \ref{sigmabarex} for an example). We conjecture that this decomposition is optimal and we checked it for $n=4$.

  \begin{figure}
 Let $n=6$ and
 $\overline{\sigma}=\overline{(503)(142)}
 =\begin{bmatrix}
      1&1&1&1&1&0\\
      1&1&0&1&1&1\\
      1&1&1&1&0&1\\      
      0&1&1&1&1&1\\
      1&0&1&1&1&1\\
      1&1&1&0&1&1\\
    \end{bmatrix}$.
    
    $\sigma^{-1}=(305)(241)$. Let $\lambda_6=(3,3)$ be the cycle type of $\sigma^{-1}$, then $\alpha_{\lambda_6}=(012)(345)$.
    
    Let $\lambda_4=(3,1)$, then $\alpha_{\lambda_4}=(012)(3)=\alpha_{\lambda_6}(34)(53)$.

    Let $\lambda_2=(2)$, then $\alpha_{\lambda_2}=(01)=\alpha_{\lambda_4}(12)$.

    $\overline{I_2}=\alpha_{\lambda_2}$

    $\overline{I_4}=[123]^{-1}\overline{I_2}[123]=[123]^{-1}\alpha_{\lambda_2}[123]=[123]^{-1}\alpha_{\lambda_4}(12)[123]$

    $\overline{I_4}=[123]^{-1}\alpha_{\lambda_4}(12)[12][31][23]=[123]^{-1}\alpha_{\lambda_4}[12][23][31]=[123]^{-1}\alpha_{\lambda_4}[312]^{-1}$

    $\overline{I_6}=[345]^{-1}\overline{I_4}[345]=[345]^{-1}[123]^{-1}\alpha_{\lambda_4}[312]^{-1}[345]$

    $\overline{I_6}=[45][53][34][123]^{-1}\alpha_{\lambda_6}(34)(53)[312]^{-1}[34][53][45]$
    
    $\overline{I_6}=(45)[45][53][34][123]^{-1}\alpha_{\lambda_6}(34)(53)[312]^{-1}[34][53][45](45)$

    $\overline{I_6}=[45][34][53][123]^{-1}\alpha_{\lambda_6}(34)(53)[312]^{-1}[53][34][45]$

    $\overline{I_6}=[453][123]^{-1}\alpha_{\lambda_6}\left([312]^{-1}\right)^{(34)(53)}(34)(53)[53][34][45]$

    $\overline{I_6}=[453][123]^{-1}\alpha_{\lambda_6}[512]^{-1}(34)[53][45][34]$

    $\overline{I_6}=[453][123]^{-1}\alpha_{\lambda_6}[512]^{-1}[35][54][43]$

    $\overline{I_6}=[453][123]^{-1}\alpha_{\lambda_6}[512]^{-1}[435]^{-1}$

    $\sigma^{-1}=\alpha_{\lambda_6}^{\gamma}$
    where $\gamma=\begin{pmatrix}
      0&1&2&3&4&5\\
      3&0&5&2&4&1\end{pmatrix}$, hence :

    $\overline{I_6}=(\overline{I_6})^{\gamma}=([453][123]^{-1})^{\gamma}\sigma^{-1}([512]^{-1}[435]^{-1})^{\gamma}=
    [412][052]^{-1}\sigma^{-1}[105]^{-1}[421]^{-1}$

    $\overline{\sigma}=\sigma\overline{I_6}=([412][052]^{-1})^{\sigma}[105]^{-1}[421]^{-1}=[241][301]^{-1}[105]^{-1}[421]^{-1}$

     $\overline{\sigma}=[24][12][41][01][13][30][05][51][10][21][14][42]$
    
    { \caption{ Decomposition in transvections of a matrix of type $\overline{\sigma}$.\label{sigmabarex}}}
    \end{figure}
    \clearpage

    \section{Entanglement in $\cnot$ gates circuits}\label{entanglement}
    The notion of entanglement is usually defined through the group of Stochastic Local Operations assisted
    by Classical Communication denoted by $\SLOCC$ which is assimilated to the cartesian group product $\mathrm{SL}_2(\C)^{\times n}$ \cite{2004Miyake,2000DVC}. 
    Mathematically two states $\ket{\psi}$ and $\ket{\phi}$ are $\SLOCC$-equivalent if there exist $n$ operators $ A_{1}, \dots, A_{n}$ in $\mathrm{SL}_2(\C)$
    such that
    $A_{1}\otimes\cdots\otimes A_{n}\ket{\psi}=\lambda \ket{\phi}$ for some complex number $\lambda$. In other words, $\ket{\psi}$ and $\ket{\phi}$ are $\SLOCC$-equivalent if they are in the same orbit of $\mathrm{GL}_2(\C)^{\times n}$ acting on the Hilbert space $\C^{2}\otimes\cdots\otimes\C^{2}$. The $\SLOCC$-equivalence of two states $\ket{\phi}$ and $\ket{\psi}$ has a physical interpretation as explained in \cite{2000DVC} : $\ket{\phi}$ and $\ket{\psi}$ can be interconverted into each other with non zero probability by $n$ parties being able to coordinate their action by classical communication,  each party acting separately  on one of the qubits.
    The $\SLOCC$-equivalence of two states $\ket{\phi}$ and $\ket{\psi}$ implies that both states can achiveve the same tasks (for instance in a communication protocol) but with a probability of a success that may differ. In this sense the $\SLOCC$-equivalence can be considered as a qualitative
    way of separating non equivalent quantum states.    
\medskip    

    Some states have a particular interest like the Greenberger-Horne-Zeilinger state \cite{1990GHZ}
\begin{equation}
  \ket{\GHZ_{n}}=\frac1{\sqrt2}\left(\ket{\overbrace{0\cdots0}^{\times n}}+\ \ket{\overbrace{1\cdots1}^{\times n}}\right)=
  \frac1{\sqrt2}\left(\ket{0}^{\otimes n}+\ket{1}^{\otimes n}\right)\qquad(n\geq  3)
\end{equation}
and the $\W$-state \cite{2000DVC}
\begin{equation}
	\ket{\W_{n}}=\frac1{\sqrt n}\left(\ket{10\cdots0}+\ket{010\cdots0}+\cdots+\ket{0\cdots01}\right) \qquad(n\geq  3). 
\end{equation}
These two states represent two non-equivalent kind of entanglements : they belong to distinct $\SLOCC$ orbits and thus cannot be interconverted into each other by local operations
(even probalistically).
These states are particularly usefull because they are required as a physical ressource to realize many specific tasks.
For instance, in  an anonymous network where the processors
share the $\W$-state, the leader election problem can be solved by a simple protocol 
whereas the $\GHZ$-state is the only shared state
that allows solution of distributed consensus \cite{2004DP}.
The $\GHZ$-state is also used in many protocols in quantum cryptography, \emph{e.g.} in secret sharing \cite{1999Hillery}.
\medskip

We denote by $H_i$ the Hadamard gate (see Figure \ref{univers}) applied on qubit $i$. For instance in a 3-qubit system, $H_1=\mathtt{I}\otimes \mathtt{H}\otimes\mathtt{I}$ where $\mathtt{I}=\begin{bmatrix}1&0\\0&1\end{bmatrix}$ and $\mathtt{H}=\frac{1}{\sqrt{2}}\begin{bmatrix}1&1\\1&-1\end{bmatrix}$.
In the following we study the emergence of entanglement when a $\cnot$ circuit acts on a fully factorized state.

\subsection{Creating a $\GHZ$-state}
The construction of the $\GHZ$-state for $n$ qubit is straightforward :
\begin{equation}
\ket{\GHZ_n}=X_{n-1\ n-2}\dots X_{21}X_{10}H_0\ket{0}^{\otimes n}.\label{GHZ}
\end{equation}

A simple computation proves Equation \eqref{GHZ}. Indeed one has 

$H_0\ket{0\dots 0}=\frac{1}{\sqrt2}(\ket{0\dots 0}+\ket{10\dots 0})$ and
$X_{n-1\ n-2}\dots X_{21}X_{10}\ket{10\dots 0}=\ket{1\dots 1}$.\medskip

If $n=4$, Equation \eqref{GHZ} corresponds to the circuit in Figure \ref{GHZ4}.

\begin{figure}[h]
\begin{center}
\begin{tikzpicture}[scale=1.200000,x=1pt,y=1pt]
\filldraw[color=white] (0.000000, -7.500000) rectangle (78.000000, 52.500000);
\draw[color=black] (0.000000,45.000000) -- (78.000000,45.000000);
\draw[color=black] (0.000000,45.000000) node[left] {$\ket{0}$};
\draw[color=black] (0.000000,30.000000) -- (78.000000,30.000000);
\draw[color=black] (0.000000,30.000000) node[left] {$\ket{0}$};
\draw[color=black] (0.000000,15.000000) -- (78.000000,15.000000);
\draw[color=black] (0.000000,15.000000) node[left] {$\ket{0}$};
\draw[color=black] (0.000000,0.000000) -- (78.000000,0.000000);
\draw[color=black] (0.000000,0.000000) node[left] {$\ket{0}$};
\begin{scope}
\draw[fill=white] (12.000000, 45.000000) +(-45.000000:8.485281pt and 8.485281pt) -- +(45.000000:8.485281pt and 8.485281pt) -- +(135.000000:8.485281pt and 8.485281pt) -- +(225.000000:8.485281pt and 8.485281pt) -- cycle;
\clip (12.000000, 45.000000) +(-45.000000:8.485281pt and 8.485281pt) -- +(45.000000:8.485281pt and 8.485281pt) -- +(135.000000:8.485281pt and 8.485281pt) -- +(225.000000:8.485281pt and 8.485281pt) -- cycle;
\draw (12.000000, 45.000000) node {$H$};
\end{scope}
\draw (33.000000,45.000000) -- (33.000000,30.000000);
\begin{scope}
\draw[fill=white] (33.000000, 30.000000) circle(3.000000pt);
\clip (33.000000, 30.000000) circle(3.000000pt);
\draw (30.000000, 30.000000) -- (36.000000, 30.000000);
\draw (33.000000, 27.000000) -- (33.000000, 33.000000);
\end{scope}
\filldraw (33.000000, 45.000000) circle(1.500000pt);
\draw (51.000000,30.000000) -- (51.000000,15.000000);
\begin{scope}
\draw[fill=white] (51.000000, 15.000000) circle(3.000000pt);
\clip (51.000000, 15.000000) circle(3.000000pt);
\draw (48.000000, 15.000000) -- (54.000000, 15.000000);
\draw (51.000000, 12.000000) -- (51.000000, 18.000000);
\end{scope}
\filldraw (51.000000, 30.000000) circle(1.500000pt);
\draw (69.000000,15.000000) -- (69.000000,0.000000);
\begin{scope}
\draw[fill=white] (69.000000, 0.000000) circle(3.000000pt);
\clip (69.000000, 0.000000) circle(3.000000pt);
\draw (66.000000, 0.000000) -- (72.000000, 0.000000);
\draw (69.000000, -3.000000) -- (69.000000, 3.000000);
\end{scope}
\filldraw (69.000000, 15.000000) circle(1.500000pt);
\end{tikzpicture}
\raisebox{12mm}{\Large\ $\ket{\GHZ_4}$}
\end{center}
{ \caption{Using a circuit of $\XG[4]$ to obtain $\ket{\GHZ_4}$.\label{GHZ4}}}
\end{figure}
\subsection{The group $\XG[3]$ and entanglement of a 3-qubit system}
We prove that the  group $\XG[3]$  is powerful enough to generate any entanglement type from a completely factorized state. We use the method pioneered by Klyachko  in \cite{2002Klyachko}  wherein he promoted the use of Algebraic Theory of Invariant. The states of a $\SLOCC$-orbit are characterized by their values on covariant polynomials.
Let $\ket{\psi}=\sum\limits_{i,j,k\in\{0,1\}}\alpha_{ijk}\ket{ijk}$ be a 3-qubit state.
The simplest covariant associated to $\ket{\psi}$ is the trilinear form :
\begin{equation}
  A=\sum\limits_{i,j,k\in\{0,1\}}\alpha_{ijk}x_{i}y_{j}z_{k}.
\end{equation}
From $A$ one computes three quadratic forms :
\begin{equation}
	B_x(x_0,x_1)=\left|\begin{array}{cc}
  \frac{\partial^2A}{\partial y_0\partial z_0}&\frac{\partial^2A}{\partial y_0\partial z_1}\\
  \frac{\partial^2A}{\partial y_1\partial z_0}&\frac{\partial^2A}{\partial y_1\partial z_1}\\
\end{array}\right|,
\end{equation}
\begin{equation}
	B_y(y_0,y_1)=\left|\begin{array}{cc}
  \frac{\partial^2A}{\partial x_0\partial z_0}&\frac{\partial^2A}{\partial x_0\partial z_1}\\
  \frac{\partial^2A}{\partial x_1\partial z_0}&\frac{\partial^2A}{\partial x_1\partial z_1}\\
\end{array}\right|,
\end{equation}
\begin{equation}
	B_z(z_0,z_1)=\left|\begin{array}{cc}
  \frac{\partial^2A}{\partial x_0\partial y_0}&\frac{\partial^2A}{\partial x_0\partial y_1}\\
  \frac{\partial^2A}{\partial x_1\partial y_0}&\frac{\partial^2A}{\partial x_1\partial y_1}\\
\end{array}\right|.
\end{equation}
The catalectican is a trilinear form obtained by computing any of the three Jacobians
of A with one of the quadratic forms, which turns out to be the same, 
\begin{equation}
	C(x_0,x_1,y_0,y_1,z_0,z_1)=\left|\begin{array}{cc}
  \frac{\partial A}{\partial x_0}&\frac{\partial A}{\partial x_1}\\
  \frac{\partial B_x}{\partial x_0}&\frac{\partial B_x}{\partial x_1}\\
\end{array}\right|.
\label{cata}\end{equation}
The three quadratic forms $B_x$, $B_y$ and $B_z$ have the same discriminant $\Delta$ which is the last covariant polynomial we need
to characterize the $\SLOCC$-orbits :
\begin{equation}\begin{array}{rcl}
	\Delta(\ket{\psi})&=& \left(\alpha_{000} \alpha_{111} - \alpha_{001} \alpha_{110} - \alpha_{010} \alpha_{101} + \alpha_{011} \alpha_{100}\right)^{2}\\&&-4\left( \alpha_{000} \alpha_{011} -  \alpha_{001} \alpha_{010}\right) \left(\alpha_{100} \alpha_{111} - \alpha_{101} \alpha_{110}\right).\end{array}
    \label{delta3}\end{equation}

  The polynomial $\Delta$ is the generator of the algebra of invariant polynomials under the action of $\SLOCC$ (\textit{i.e} $\Delta(\ket{\psi})=\Delta(M\ket{\psi})$
  for any $M\in\mathrm{SL}_2(\C)^{\times 3}$). It is also the Cayley hyperdeterminant of the trilinear binary form $A$ \cite{1846Cayley}.
  
     Let $V:=[B_x, B_y,B_z,C,\Delta]$ be a vector of covariants. We associate the binary vector
      $V[\ket{\psi}] := [[B_x(\ket{\psi}], [B_y(\ket{\psi})], [B_z(\ket{\psi})], [C(\ket{\psi})], [\Delta(\ket{\psi})]]$ to any state $\ket{\psi}$,
    where $[P(\ket{\psi})] = 0$ if $P(\ket{\psi})= 0$ and $[P(\ket{\psi})] = 1$ if $P(\ket{\psi}) \neq 0$. The value of $V[\ket{\psi}]$ is sufficient to distinguish between the different
orbits (see \cite{2012HLT}). Results are summarized in Table \ref{orbits3q}.

\begin{table}[h]
  \begin{center}
    $\begin{array}{|c|c|c|}\hline
      \text{Orbit Symbols}&\text{Representatives } \ket{\psi}&V[\ket{\psi}]\\\hline
      \mathcal{O}_{VI}&\ket{\GHZ_3}&[1,1,1,1,1]\\
      \mathcal{O}_{V}&\ket{\W_3}&[1,1,1,1,0]\\
      \mathcal{O}_{IV}&\frac{1}{\sqrt2}(\ket{000}+\ket{110})&[0,0,1,0,0]\\
      \mathcal{O}_{III}&\frac{1}{\sqrt2}(\ket{000}+\ket{101})&[0,1,0,0,0]\\
      \mathcal{O}_{II}&\frac{1}{\sqrt2}(\ket{000}+\ket{011})&[1,0,0,0,0]\\
      \mathcal{O}_{I}&\ket{000}&[0,0,0,0,0]\\\hline
      \end{array}$
    \end{center}
    { \caption{The $\SLOCC$ orbits in a 3-qubit system.\label{orbits3q}}}
  \end{table}

  For Equation \eqref{GHZ} one has $\ket{\GHZ_3}=X_{21}X_{10}H_0\ket{000}$ (orbit $\mathcal{O}_{V}$) and it is easy to check that
  $\frac{1}{\sqrt2}(\ket{000}+\ket{011})=X_{21}H_1\ket{000}$ (orbit $\mathcal{O}_{II}$),
  $\frac{1}{\sqrt2}(\ket{000}+\ket{101})=X_{20}H_0\ket{000}$ (orbit $\mathcal{O}_{III}$), and
  $\frac{1}{\sqrt2}(\ket{000}+\ket{110})=X_{10}H_0\ket{000}$ (orbit $\mathcal{O}_{IV}$).
  
  In order to obtain a $\SLOCC$-equivalent to $\ket{\W_3}$ we see from Table \ref{orbits3q} that one has to construct a state $\psi$
  such that $\Delta(\ket{\psi})=0$ and $C(x_0,x_1,y_0,y_1,z_0,z_1)\neq 0$. This can be done using only gates of the standard set of universal gates (Figure \ref{univers})
as explained in Proposition \ref{W3} (see Figure \ref{W3circ} for an example of circuit).
  Note that the construction involves $\cnot$ circuits that have matrices of type $[ijk]$ (described in Subsection \ref{bitrev}). 

  \begin{prop}\label{W3}
    Let $X_{[ijk]}=X_{ij}X_{ki}X_{jk}$ where $i,j,k$ are distinct integers in $\{0,1,2\}$ and $k\neq 2$. The state
    \begin{equation}
X_{[ijk]}(\mathtt{T}\otimes\mathtt{T}\otimes\mathtt{S})\mathtt{H}^{\otimes 3}\ket{000}
      \end{equation}
is $\SLOCC$-equivalent to $\ket{\W_3}$.
  \end{prop}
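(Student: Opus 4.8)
The plan is to invoke the orbit classification recorded in Table \ref{orbits3q}: it suffices to compute the covariant vector $V[\ket{\psi}]$ of the state
$\ket{\psi}:=X_{[ijk]}(\mathtt{T}\otimes\mathtt{T}\otimes\mathtt{S})\mathtt{H}^{\otimes 3}\ket{000}$ and to check that it equals $[1,1,1,1,0]$, which by the table is exactly the signature of the orbit $\mathcal{O}_{V}$ of $\ket{\W_3}$. So no explicit $\SLOCC$ operators need be exhibited; everything reduces to evaluating the covariant polynomials on a fully explicit state.

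First I would make $\ket{\psi}$ explicit. The layer $\mathtt{H}^{\otimes 3}$ turns $\ket{000}$ into the uniform superposition $\tfrac{1}{2\sqrt2}\sum_{u\in\F_2^3}\ket{u}$, and the local phase layer $\mathtt{T}\otimes\mathtt{T}\otimes\mathtt{S}$, with $\mathtt{T}=\mathrm{diag}(1,\zeta)$ and $\mathtt{S}=\mathrm{diag}(1,\zeta^2)$ where $\zeta=\ee^{\ii\pi/4}$, multiplies the amplitude of $\ket{b_0b_1b_2}$ by $\zeta^{b_0+b_1+2b_2}$. This intermediate state is fully factorized (orbit $\mathcal{O}_{I}$). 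Finally $X_{[ijk]}$ acts by Relation \eqref{xtrans} as the $\F_2$-linear permutation $\ket{u}\mapsto\ket{[ijk]u}$, which merely redistributes the eight amplitudes; using Proposition \ref{GJmult} I would read off the matrix $[ijk]=[ij][ki][jk]$ and tabulate the resulting amplitudes $\alpha_{b_0b_1b_2}$, each of the form $\tfrac{1}{2\sqrt2}\zeta^{m}$.

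With the amplitudes in hand, the proof splits into two verifications. Substituting them into the hyperdeterminant \eqref{delta3}, the two contributions (the squared bracket and the $4$-times product) collect into expressions that cancel thanks to $\zeta^4=-1$, giving $\Delta(\ket{\psi})=0$; this already excludes the $\GHZ$ orbit $\mathcal{O}_{VI}$. Then I would evaluate the three quadratic forms $B_x,B_y,B_z$: each is a $2\times2$ determinant, and it suffices to display for each one a nonzero coefficient (for instance the coefficient of $z_0^2$ in $B_z$ is $\alpha_{000}\alpha_{110}-\alpha_{010}\alpha_{100}$, which evaluates to $\zeta+\zeta^3=\sqrt2\,\ii\neq0$). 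This gives $[B_x]=[B_y]=[B_z]=1$, and since the catalecticant $C$ of \eqref{cata} is a Jacobian built from $A$ and $B_x$, the nonvanishing of $B_x$ together with genuine tripartite entanglement forces $[C]=1$ as well. Comparing $V[\ket{\psi}]=[1,1,1,1,0]$ with Table \ref{orbits3q} identifies $\ket{\psi}$ with the $\W_3$-orbit.

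The main difficulty is bookkeeping rather than conceptual: one must track the $\F_2$-linear action of $X_{[ijk]}$ and the eighth roots of unity that appear as amplitudes. There are four admissible triples (those with $k\neq 2$, since $\{i,j,k\}=\{0,1,2\}$ forces $k\in\{0,1\}$), and I would carry out the computation for one representative, say $(i,j,k)=(2,1,0)$, the remaining cases being entirely analogous after relabelling the tensor factors. I expect the delicate point to be the cancellation yielding $\Delta=0$, and it is worth stressing that the hypothesis $k\neq 2$ is precisely what produces it: repeating the computation with $k=2$ instead gives $\Delta\neq 0$, so that one would land in the $\GHZ$ orbit $\mathcal{O}_{VI}$ rather than in $\mathcal{O}_{V}$.
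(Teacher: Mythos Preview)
Your plan is essentially the paper's own: compute the state explicitly, evaluate enough covariants to locate it in Table~\ref{orbits3q}. The paper evaluates $\Delta$ and $C$ directly (tabulating $C$ for all four admissible triples and checking $\Delta=0$), whereas you propose to evaluate $\Delta$ and the three quadratic forms $B_x,B_y,B_z$. That variant works too, and in fact makes your hand-wavy step about $C$ unnecessary: by inspection of Table~\ref{orbits3q}, the only orbits with $[B_x]=[B_y]=[B_z]=1$ are $\mathcal{O}_V$ and $\mathcal{O}_{VI}$, so once $\Delta=0$ you are done without ever touching $C$. Your stated reason ``nonvanishing of $B_x$ together with genuine tripartite entanglement forces $[C]=1$'' is not a proof on its own (a Jacobian of nonzero forms can vanish), so either compute $C$ as the paper does or drop it and argue purely from the $B$'s and $\Delta$.

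One genuine gap: you reduce to a single representative triple and claim the remaining three are ``entirely analogous after relabelling the tensor factors''. That shortcut does not go through cleanly, because the phase layer $\mathtt{T}\otimes\mathtt{T}\otimes\mathtt{S}$ is \emph{not} symmetric in the three qubits (qubit~$2$ carries $\mathtt{S}$, not $\mathtt{T}$), so permuting tensor factors does not send one admissible case to another while preserving the input state. The paper avoids this by simply running the computation for all four triples $[021],[120],[201],[210]$ and listing the resulting $C$ (and noting $\Delta=0$ each time). You should do the same, or at least explain carefully which symmetry you are invoking and why it respects the asymmetric phase layer.
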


  \begin{proof}
    Let $q=\ee^{\frac{\ii\pi}{4}}$ and let
    $\ket{\psi_{(k_0,k_1,\dots,k_7)}}=\frac{1}{\sqrt8}\left(q^{k_0}\ket{000}+q^{k_1}\ket{001}+\dots+q^{k_7}\ket{111}\right)$
    where $k_i$ is an integer. A simple calculation shows that $(\mathtt{T}\otimes\mathtt{T}\otimes\mathtt{S})\mathtt{H}^{\otimes 3}\ket{000}=\ket{\psi_{(0,2,1,3,1,3,2,4)}}$.
    Then we compute the values of polynomials $\Delta$ (Formula \eqref{delta3}) and $C$ (Formula \eqref{cata}) on the state $\ket{\psi'}=X_{[ijk]}\ket{\psi_{(0,2,1,3,1,3,2,4)}}$
    for $i,j,k\in\{0,1,2\}$. When $k\neq 2$ we find that $\Delta=0$ and $C\neq 0$ (see results in the table below). \medskip
    
  {\footnotesize
  $\begin{array}{|c|c|c|c|}\hline
     [ijk]&X_{[xyz]}\ket{\psi_{(0,2,1,3,1,3,2,4)}}&C(x_0,x_1,y_0,y_1,z_0,z_1)\\\hline
     [021]&\ket{\psi_{(0,2,3,3,4,2,1,1)}}&\frac18\left((1+\ii)x_0y_0z_0+(1+\ii) x_0y_0z_1+(1-\ii) x_1y_0z_0+(1-\ii) x_1y_0z_1\right)\\\hline
     [120]&\ket{\psi_{(0,2,1,3,1,3,2,4)}}&\frac18\left((1+\ii)x_0y_0z_0+(1+\ii) x_0y_0z_1+(1-\ii) x_0y_1z_0+(1-\ii) x_0y_1z_1\right)\\\hline
     [201]&\ket{\psi_{(0,4,2,2,3,1,3,1)}}&\frac18\left((1+\ii)x_0y_0z_0+(1-\ii) x_0y_0z_1+(1+\ii) x_0y_1z_0+(1-\ii) x_0y_1z_1\right)\\\hline
     [210]&\ket{\psi_{(0,4,3,1,2,2,3,1)}}&\frac18\left((1+\ii)x_0y_0z_0+(1-\ii) x_0y_0z_1+(1+\ii) x_1y_0z_0+(1-\ii) x_1y_0z_1\right)\\\hline
   \end{array}$}
 
\end{proof}

Computing $\Delta$ and $C$ for the state $X_{[ijk]}(\mathtt{T}\otimes\mathtt{T}\otimes\mathtt{S})\mathtt{H}^{\otimes 3}\ket{000}$ when $k=2$, one finds $\Delta\neq 0$ and $C\neq 0$, thus the resulting state is $\SLOCC$-equivalent to $\ket{\GHZ_3}$.\medskip

As the dimension of the Hilbert space $\HH^{\otimes3}$ is small, namely $2^3$, one can compute three 2-dimensional matrices $A,B,C$ of determinant 1  and a complex number $k$ such that $A\otimes B\otimes C\ket{\psi_{(0,2,3,3,4,2,1,1)}}=k\ket{\W_3}$. This can be done by solving an eight algebraic equations system. We check that the values
$A=\begin{bmatrix}3\ii&1\\\frac{1}{2}&-\frac{1}{2}\ii\end{bmatrix}$, $B=2^{\frac14}\begin{bmatrix}-\ii\sqrt{2}&2\\0&\frac{1}{2}\ii\end{bmatrix}$,
$C=\begin{bmatrix}\ii&\ii\\\frac{1}{2}\ii&-\frac{1}{2}\ii\end{bmatrix}$ and $k=2^{\frac14}\frac{\sqrt3}{\sqrt2}\ee^{\frac{\ii\pi}{4}}$ are solutions.

\begin{figure}[h]
  \begin{center}
\begin{tikzpicture}[scale=1.200000,x=1pt,y=1pt]
\filldraw[color=white] (0.000000, -7.500000) rectangle (102.000000, 37.500000);
\draw[color=black] (0.000000,30.000000) -- (102.000000,30.000000);
\draw[color=black] (0.000000,30.000000) node[left] {$\ket{0}$};
\draw[color=black] (0.000000,15.000000) -- (102.000000,15.000000);
\draw[color=black] (0.000000,15.000000) node[left] {$\ket{0}$};
\draw[color=black] (0.000000,0.000000) -- (102.000000,0.000000);
\draw[color=black] (0.000000,0.000000) node[left] {$\ket{0}$};
\begin{scope}
\draw[fill=white] (12.000000, 30.000000) +(-45.000000:8.485281pt and 8.485281pt) -- +(45.000000:8.485281pt and 8.485281pt) -- +(135.000000:8.485281pt and 8.485281pt) -- +(225.000000:8.485281pt and 8.485281pt) -- cycle;
\clip (12.000000, 30.000000) +(-45.000000:8.485281pt and 8.485281pt) -- +(45.000000:8.485281pt and 8.485281pt) -- +(135.000000:8.485281pt and 8.485281pt) -- +(225.000000:8.485281pt and 8.485281pt) -- cycle;
\draw (12.000000, 30.000000) node {$H$};
\end{scope}
\begin{scope}
\draw[fill=white] (12.000000, 15.000000) +(-45.000000:8.485281pt and 8.485281pt) -- +(45.000000:8.485281pt and 8.485281pt) -- +(135.000000:8.485281pt and 8.485281pt) -- +(225.000000:8.485281pt and 8.485281pt) -- cycle;
\clip (12.000000, 15.000000) +(-45.000000:8.485281pt and 8.485281pt) -- +(45.000000:8.485281pt and 8.485281pt) -- +(135.000000:8.485281pt and 8.485281pt) -- +(225.000000:8.485281pt and 8.485281pt) -- cycle;
\draw (12.000000, 15.000000) node {$H$};
\end{scope}
\begin{scope}
\draw[fill=white] (12.000000, -0.000000) +(-45.000000:8.485281pt and 8.485281pt) -- +(45.000000:8.485281pt and 8.485281pt) -- +(135.000000:8.485281pt and 8.485281pt) -- +(225.000000:8.485281pt and 8.485281pt) -- cycle;
\clip (12.000000, -0.000000) +(-45.000000:8.485281pt and 8.485281pt) -- +(45.000000:8.485281pt and 8.485281pt) -- +(135.000000:8.485281pt and 8.485281pt) -- +(225.000000:8.485281pt and 8.485281pt) -- cycle;
\draw (12.000000, -0.000000) node {$H$};
\end{scope}
\begin{scope}
\draw[fill=white] (36.000000, 30.000000) +(-45.000000:8.485281pt and 8.485281pt) -- +(45.000000:8.485281pt and 8.485281pt) -- +(135.000000:8.485281pt and 8.485281pt) -- +(225.000000:8.485281pt and 8.485281pt) -- cycle;
\clip (36.000000, 30.000000) +(-45.000000:8.485281pt and 8.485281pt) -- +(45.000000:8.485281pt and 8.485281pt) -- +(135.000000:8.485281pt and 8.485281pt) -- +(225.000000:8.485281pt and 8.485281pt) -- cycle;
\draw (36.000000, 30.000000) node {$T$};
\end{scope}
\begin{scope}
\draw[fill=white] (36.000000, 15.000000) +(-45.000000:8.485281pt and 8.485281pt) -- +(45.000000:8.485281pt and 8.485281pt) -- +(135.000000:8.485281pt and 8.485281pt) -- +(225.000000:8.485281pt and 8.485281pt) -- cycle;
\clip (36.000000, 15.000000) +(-45.000000:8.485281pt and 8.485281pt) -- +(45.000000:8.485281pt and 8.485281pt) -- +(135.000000:8.485281pt and 8.485281pt) -- +(225.000000:8.485281pt and 8.485281pt) -- cycle;
\draw (36.000000, 15.000000) node {$T$};
\end{scope}
\begin{scope}
\draw[fill=white] (36.000000, -0.000000) +(-45.000000:8.485281pt and 8.485281pt) -- +(45.000000:8.485281pt and 8.485281pt) -- +(135.000000:8.485281pt and 8.485281pt) -- +(225.000000:8.485281pt and 8.485281pt) -- cycle;
\clip (36.000000, -0.000000) +(-45.000000:8.485281pt and 8.485281pt) -- +(45.000000:8.485281pt and 8.485281pt) -- +(135.000000:8.485281pt and 8.485281pt) -- +(225.000000:8.485281pt and 8.485281pt) -- cycle;
\draw (36.000000, -0.000000) node {$S$};
\end{scope}
\draw (57.000000,15.000000) -- (57.000000,0.000000);
\begin{scope}
\draw[fill=white] (57.000000, 0.000000) circle(3.000000pt);
\clip (57.000000, 0.000000) circle(3.000000pt);
\draw (54.000000, 0.000000) -- (60.000000, 0.000000);
\draw (57.000000, -3.000000) -- (57.000000, 3.000000);
\end{scope}
\filldraw (57.000000, 15.000000) circle(1.500000pt);
\draw (75.000000,30.000000) -- (75.000000,15.000000);
\begin{scope}
\draw[fill=white] (75.000000, 15.000000) circle(3.000000pt);
\clip (75.000000, 15.000000) circle(3.000000pt);
\draw (72.000000, 15.000000) -- (78.000000, 15.000000);
\draw (75.000000, 12.000000) -- (75.000000, 18.000000);
\end{scope}
\filldraw (75.000000, 30.000000) circle(1.500000pt);
\draw (93.000000,30.000000) -- (93.000000,0.000000);
\begin{scope}
\draw[fill=white] (93.000000, 30.000000) circle(3.000000pt);
\clip (93.000000, 30.000000) circle(3.000000pt);
\draw (90.000000, 30.000000) -- (96.000000, 30.000000);
\draw (93.000000, 27.000000) -- (93.000000, 33.000000);
\end{scope}
\filldraw (93.000000, 0.000000) circle(1.500000pt);
\end{tikzpicture} \raisebox{8mm}{\Large$\sim\ket{\W_3}$}
\end{center}

Output state is :

$\ket{\psi_{(0,2,3,3,4,2,1,1)}}=X_{[021]}\ket{\psi_{(0,2,1,3,1,3,2,4)}}=X_{02}X_{10}X_{21}(\mathtt{T}\otimes\mathtt{T}\otimes\mathtt{S})\mathtt{H}^{\otimes 3}\ket{000}$.\medskip

\caption{A circuit of $\XG[3]$ producing a $\SLOCC$-equivalent to $\ket{\W_3}$.\label{W3circ}}
  \end{figure}

  \subsection{The group $\XG[4]$ and entanglement of a 4-qubit system\label{fourqubits}}
The situation of 4-qubits systems is more complex than for 3-qubits systems.
The corresponding Hilbert space $\HH^{\otimes 4}$ has infinitely many orbits under the action
of $\SLOCC=\mathrm{SL}_2(\C)^{\times 4}$. These orbits have been classified by Verstraete \textit{et al.} \cite{2002VDDV} into 9 families (6 families are described with
parameters). Among these 9 families only one is generic : any state in the more general situation belongs to the family

\begin{equation}\begin{array}{rcl}
	G_{abcd}&=&{a+d\over 2}\left(|0000\rangle + |1111\rangle\right)+{a-d\over 2}\left(|0011\rangle+|1100\rangle\right){}
	\\&&+{b+c\over 2}\left(|0101\rangle+|1010\rangle\right)+{b-c\over 2}\left(|0110\rangle+|1001\rangle\right),\end{array}
    \end{equation}

for  independent parameters $a,b, c,$ and $d$ \cite{2002VDDV,2014HLT}.

More precisely, a generic state of 4 qubits is $\SLOCC$-equivalent, up to permutations of the qubits,  to 192 Verstraete states of the  $G_{abcd}$ family \cite{2017HLT}.

To determine the Verstraete family to which a given state belongs, one can use an algorithm described in a previous paper \cite{2017HLT}.
As is the case of 3 qubits, this algorithm is based on the evaluation of some covariants polynomials (see Appendix \ref{cov4}). We do not recall the algorithm since it is not usefull to understand the present paper.\medskip

Let $\ket{\psi}=\sum\limits_{i,j,k,\ell\in\{0,1\}}\alpha_{ijk\ell}\ket{ijk\ell}$ be a 4-qubit state. The algebra of $\SLOCC$-invariant polynomials is freely generated by the four following polynomials \cite{2003LT}:

\begin{itemize}
	\item The smallest degree invariant
	\begin{equation}{}
	B:=\sum_{0\leq i_{1},i_{2},i_{3}\leq 1}(-1)^{i_{1}+i_{2}+i_{3}}\alpha_{0i_{1}i_{2}i_{3}}\alpha_{1(1-i_{1})(1-i_{2})(1-i_{3})},
	\end{equation}
	\item Two polynomials of degree $4$
	\begin{equation}
		L:=\left|\begin{array}{cccc}\alpha_{0000}&\alpha_{0010}&\alpha_{0001}&\alpha_{0011}\\
		\alpha_{1000}&\alpha_{1010}&\alpha_{1001}&\alpha_{1011}\\
		\alpha_{0100}&\alpha_{0110}&\alpha_{0101}&\alpha_{0111}\\
		\alpha_{1100}&\alpha_{1110}&\alpha_{1101}&\alpha_{1111}\end{array}
		\right|\end{equation}  and 
		\begin{equation} 
		M:=\left|\begin{array}{cccc} \alpha_{0000}&\alpha_{0001}&\alpha_{0100}&\alpha_{0101}\\
		 \alpha_{1000}&\alpha_{1001}&\alpha_{1100}&\alpha_{1101}\\
		  \alpha_{0010}&\alpha_{0011}&\alpha_{0110}&\alpha_{0111}\\
		   \alpha_{1010}&\alpha_{1011}&\alpha_{1110}&\alpha_{1111}\end{array}\right|.
	\end{equation}
	\item and a polynomial of degree $6$ defined by $D_{xy}=-\det(B_{xy})$ where $B_{xy}$ is the $3\times 3$ matrix satisfying
	\begin{equation}
		\left[x_{0}^{2},x_{0}x_{1},x_{1}^{2}\right]B_{xy}\left[\begin{array}{c} y_{0}^{2}\\y_{0}y_{1}\\y_{1}^{2}\end{array}\right]=\det\left({\partial^{2}\over\partial z_{i}\partial t_{j}}A\right)
	\end{equation}
\noindent with $A=\sum\limits_{i,j,k,\ell\in\{0,1\}}\alpha_{ijk\ell}x_{i}y_{j}z_{k}t_{\ell}$ being the quadrilinear binary form associated to the state $\ket{\psi}$.
\end{itemize}

  Using the generators $B,L,M$ and $D_{xy}$, one can build $\Delta$, an invariant polynomial of degree 24 that plays an important role in the quantitative and qualitative study of entanglement \cite{2002MW,2003Miyake}.
  As described in \cite{2014HLT} and \cite{2003LT}, $\Delta$ is the discriminant of any of the quartics
\begin{equation}
	Q_{1}=x^{4}-2Bx^{3}y+(B^{2}+2L+4M)x^{2}y^{2}+4(D_{xy}-B(M+\frac12L))xy^{3}+L^{2}y^{4},
\end{equation}\vspace{-2mm}
\begin{equation}
	Q_{2}=x^{4}-2Bx^{3}y+(B^{2}-4L-2M)x^{2}y^{2}+(4D_{xy}-2MB)xy^{3}+M^{2}y^{4},
\end{equation}
and
\begin{equation}
	Q_{3}=x^{4}-2Bx^{3y}+(B^{2}+2L-2M)x^{2}y^{2}-(2(L+M)B-4D_{xy})xy^{3}+N^{2}y^{4}.
\end{equation}
We recall that, for a quartic $Q=\alpha x^{4}-4\beta x^{3}y+6\gamma x^{2}y^{2}-4\delta xy^{3}+\omega y^{4}$, the discriminant can be computed as
\begin{equation}
  \Delta=I_{2}^{3}-27I_{3}^{2}\label{delta4}
\end{equation}
where  $I_{2}=\alpha\omega-4\beta\delta+3\gamma^{2}$ and 
$I_{3}=\alpha\gamma\omega-\alpha\delta^{2}-\omega\beta^{2}-\gamma^{3}+2\beta\gamma\delta$.

  It appears that $\Delta$ is the Cayley hyperdeterminant (in the sense of Gelfand \textit{et al.} \cite{1992GKL}) of $A$ \cite{2003LT}. 
  In \cite{2003Miyake}, Miyake showed that the more generic entanglement holds only for the states $\ket{\psi}$ such that $\Delta(\ket{\psi})\neq 0$. Moreover any generically entangled state is equivalent to a state of the  Verstraete $G_{abcd}$ family \cite[Appendix~A]{2003LT}. So one can consider $\Delta$ as a qualitative measure of entanglement :  an entangled state (\textit{i.e.} not factorized state) is generically entangled if $\Delta\neq 0$.

  In the case of 3 qubits, $\Delta(\ket{\GHZ_3})\neq 0$ (see Table \ref{orbits3q}). Using \eqref{delta3} one checks easily that $\Delta(\ket{\GHZ_3})= \frac14$. So the state $\ket{\GHZ_3}$ is generically entangled. In the 4 qubits case one computes $\Delta(\ket{\GHZ_4})=0$ using \eqref{delta4}. Hence, suprisingly, $\ket{\GHZ_4}$ is not generically entangled and this result can be generalized : in \cite{MBJGL2019} Appendix C we proved that,
  for any $k>3$, $\Delta(\ket{\GHZ_k})=0$.
  
In this context we ask ourselves if it is possible to find a 4-qubit $\cnot$  circuit that takes as input a completely factorized state and output a generically entangled state. The following statement answers the question.

\begin{theo}\label{generic4} The state
  $\ket{BL}:=X_{01}X_{20}X_{03}X_{10}(\mathtt{T}\otimes\mathtt{S}\otimes\mathtt{S}\otimes\mathtt{S})\mathtt{H}^{\otimes 4}\ket{0000}$ is generically entangled. It is produced by the circuit :
\begin{center}
  \begin{tikzpicture}[scale=1.200000,x=1pt,y=1pt]
\filldraw[color=white] (0.000000, -7.500000) rectangle (120.000000, 52.500000);
\draw[color=black] (0.000000,45.000000) -- (120.000000,45.000000);
\draw[color=black] (0.000000,45.000000) node[left] {$\ket{0}$};
\draw[color=black] (0.000000,30.000000) -- (120.000000,30.000000);
\draw[color=black] (0.000000,30.000000) node[left] {$\ket{0}$};
\draw[color=black] (0.000000,15.000000) -- (120.000000,15.000000);
\draw[color=black] (0.000000,15.000000) node[left] {$\ket{0}$};
\draw[color=black] (0.000000,0.000000) -- (120.000000,0.000000);
\draw[color=black] (0.000000,0.000000) node[left] {$\ket{0}$};
\begin{scope}
\draw[fill=white] (12.000000, 45.000000) +(-45.000000:8.485281pt and 8.485281pt) -- +(45.000000:8.485281pt and 8.485281pt) -- +(135.000000:8.485281pt and 8.485281pt) -- +(225.000000:8.485281pt and 8.485281pt) -- cycle;
\clip (12.000000, 45.000000) +(-45.000000:8.485281pt and 8.485281pt) -- +(45.000000:8.485281pt and 8.485281pt) -- +(135.000000:8.485281pt and 8.485281pt) -- +(225.000000:8.485281pt and 8.485281pt) -- cycle;
\draw (12.000000, 45.000000) node {$H$};
\end{scope}
\begin{scope}
\draw[fill=white] (12.000000, 30.000000) +(-45.000000:8.485281pt and 8.485281pt) -- +(45.000000:8.485281pt and 8.485281pt) -- +(135.000000:8.485281pt and 8.485281pt) -- +(225.000000:8.485281pt and 8.485281pt) -- cycle;
\clip (12.000000, 30.000000) +(-45.000000:8.485281pt and 8.485281pt) -- +(45.000000:8.485281pt and 8.485281pt) -- +(135.000000:8.485281pt and 8.485281pt) -- +(225.000000:8.485281pt and 8.485281pt) -- cycle;
\draw (12.000000, 30.000000) node {$H$};
\end{scope}
\begin{scope}
\draw[fill=white] (12.000000, 15.000000) +(-45.000000:8.485281pt and 8.485281pt) -- +(45.000000:8.485281pt and 8.485281pt) -- +(135.000000:8.485281pt and 8.485281pt) -- +(225.000000:8.485281pt and 8.485281pt) -- cycle;
\clip (12.000000, 15.000000) +(-45.000000:8.485281pt and 8.485281pt) -- +(45.000000:8.485281pt and 8.485281pt) -- +(135.000000:8.485281pt and 8.485281pt) -- +(225.000000:8.485281pt and 8.485281pt) -- cycle;
\draw (12.000000, 15.000000) node {$H$};
\end{scope}
\begin{scope}
\draw[fill=white] (12.000000, -0.000000) +(-45.000000:8.485281pt and 8.485281pt) -- +(45.000000:8.485281pt and 8.485281pt) -- +(135.000000:8.485281pt and 8.485281pt) -- +(225.000000:8.485281pt and 8.485281pt) -- cycle;
\clip (12.000000, -0.000000) +(-45.000000:8.485281pt and 8.485281pt) -- +(45.000000:8.485281pt and 8.485281pt) -- +(135.000000:8.485281pt and 8.485281pt) -- +(225.000000:8.485281pt and 8.485281pt) -- cycle;
\draw (12.000000, -0.000000) node {$H$};
\end{scope}
\begin{scope}
\draw[fill=white] (36.000000, 45.000000) +(-45.000000:8.485281pt and 8.485281pt) -- +(45.000000:8.485281pt and 8.485281pt) -- +(135.000000:8.485281pt and 8.485281pt) -- +(225.000000:8.485281pt and 8.485281pt) -- cycle;
\clip (36.000000, 45.000000) +(-45.000000:8.485281pt and 8.485281pt) -- +(45.000000:8.485281pt and 8.485281pt) -- +(135.000000:8.485281pt and 8.485281pt) -- +(225.000000:8.485281pt and 8.485281pt) -- cycle;
\draw (36.000000, 45.000000) node {$T$};
\end{scope}
\begin{scope}
\draw[fill=white] (36.000000, 30.000000) +(-45.000000:8.485281pt and 8.485281pt) -- +(45.000000:8.485281pt and 8.485281pt) -- +(135.000000:8.485281pt and 8.485281pt) -- +(225.000000:8.485281pt and 8.485281pt) -- cycle;
\clip (36.000000, 30.000000) +(-45.000000:8.485281pt and 8.485281pt) -- +(45.000000:8.485281pt and 8.485281pt) -- +(135.000000:8.485281pt and 8.485281pt) -- +(225.000000:8.485281pt and 8.485281pt) -- cycle;
\draw (36.000000, 30.000000) node {$S$};
\end{scope}
\begin{scope}
\draw[fill=white] (36.000000, 15.000000) +(-45.000000:8.485281pt and 8.485281pt) -- +(45.000000:8.485281pt and 8.485281pt) -- +(135.000000:8.485281pt and 8.485281pt) -- +(225.000000:8.485281pt and 8.485281pt) -- cycle;
\clip (36.000000, 15.000000) +(-45.000000:8.485281pt and 8.485281pt) -- +(45.000000:8.485281pt and 8.485281pt) -- +(135.000000:8.485281pt and 8.485281pt) -- +(225.000000:8.485281pt and 8.485281pt) -- cycle;
\draw (36.000000, 15.000000) node {$S$};
\end{scope}
\begin{scope}
\draw[fill=white] (36.000000, -0.000000) +(-45.000000:8.485281pt and 8.485281pt) -- +(45.000000:8.485281pt and 8.485281pt) -- +(135.000000:8.485281pt and 8.485281pt) -- +(225.000000:8.485281pt and 8.485281pt) -- cycle;
\clip (36.000000, -0.000000) +(-45.000000:8.485281pt and 8.485281pt) -- +(45.000000:8.485281pt and 8.485281pt) -- +(135.000000:8.485281pt and 8.485281pt) -- +(225.000000:8.485281pt and 8.485281pt) -- cycle;
\draw (36.000000, -0.000000) node {$S$};
\end{scope}
\draw (57.000000,45.000000) -- (57.000000,30.000000);
\begin{scope}
\draw[fill=white] (57.000000, 30.000000) circle(3.000000pt);
\clip (57.000000, 30.000000) circle(3.000000pt);
\draw (54.000000, 30.000000) -- (60.000000, 30.000000);
\draw (57.000000, 27.000000) -- (57.000000, 33.000000);
\end{scope}
\filldraw (57.000000, 45.000000) circle(1.500000pt);
\draw (75.000000,45.000000) -- (75.000000,0.000000);
\begin{scope}
\draw[fill=white] (75.000000, 45.000000) circle(3.000000pt);
\clip (75.000000, 45.000000) circle(3.000000pt);
\draw (72.000000, 45.000000) -- (78.000000, 45.000000);
\draw (75.000000, 42.000000) -- (75.000000, 48.000000);
\end{scope}
\filldraw (75.000000, 0.000000) circle(1.500000pt);
\draw (93.000000,45.000000) -- (93.000000,15.000000);
\begin{scope}
\draw[fill=white] (93.000000, 15.000000) circle(3.000000pt);
\clip (93.000000, 15.000000) circle(3.000000pt);
\draw (90.000000, 15.000000) -- (96.000000, 15.000000);
\draw (93.000000, 12.000000) -- (93.000000, 18.000000);
\end{scope}
\filldraw (93.000000, 45.000000) circle(1.500000pt);
\draw (111.000000,45.000000) -- (111.000000,30.000000);
\begin{scope}
\draw[fill=white] (111.000000, 45.000000) circle(3.000000pt);
\clip (111.000000, 45.000000) circle(3.000000pt);
\draw (108.000000, 45.000000) -- (114.000000, 45.000000);
\draw (111.000000, 42.000000) -- (111.000000, 48.000000);
\end{scope}
\filldraw (111.000000, 30.000000) circle(1.500000pt);
\end{tikzpicture}
\end{center}
  \end{theo}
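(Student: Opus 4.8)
The plan is to verify the single scalar condition $\Delta(\ket{BL})\neq 0$, where $\Delta$ is the degree-$24$ invariant built from the generators $B,L,M,D_{xy}$ through the quartics $Q_1,Q_2,Q_3$ and formula \eqref{delta4}; by the discussion recalled above (following \cite{2003Miyake,2003LT}), an entangled state is generically entangled precisely when this invariant does not vanish. Since $\Delta$ evaluated on a fixed state is a single complex number rather than a polynomial in auxiliary variables, the whole statement reduces to one explicit computation of the $16$ amplitudes $\alpha_{ijk\ell}$ of $\ket{BL}$ followed by an evaluation of $\Delta$, exactly as was done for $3$ qubits in Proposition \ref{W3}.

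First I would compute the intermediate state just before the $\cnot$ layer. One has $\mathtt H^{\otimes 4}\ket{0000}=\frac14\sum_{i,j,k,\ell}\ket{ijk\ell}$, and applying $\mathtt T\otimes\mathtt S\otimes\mathtt S\otimes\mathtt S=\mathrm{diag}(1,q)\otimes\mathrm{diag}(1,q^2)^{\otimes 3}$, with $q=\ee^{\ii\pi/4}$ and $q^2=\ii$, multiplies the amplitude of $\ket{ijk\ell}$ by $q^{\,i+2j+2k+2\ell}$. Thus the state entering the $\cnot$ gates is $\frac14\sum q^{\,i+2j+2k+2\ell}\ket{ijk\ell}$, a uniform superposition with eighth-root-of-unity phases.

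Next I would use Theorem \ref{iso} and Relation \eqref{xtrans} to turn the whole $\cnot$ layer into a single linear substitution on labels: the operator $U=X_{01}X_{20}X_{03}X_{10}$ satisfies $U\ket u=\ket{Nu}$ with $N=[01][20][03][10]\in\GL[4]$ (I use $N$ to avoid clash with the invariant $M$). Computing the $4\times4$ matrix $N$ over $\F_2$ and its inverse yields the output amplitudes $\alpha_v=\frac14\,q^{\,e(N^{-1}v)}$ for $v\in\F_2^4$, where $e(u)=u_0+2u_1+2u_2+2u_3$ is read off the bits of $u=N^{-1}v$. This produces the explicit list of the $16$ coefficients $\alpha_{ijk\ell}$, each a power of $q$ up to the global factor $\tfrac14$.

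Finally I would substitute these coefficients into $B$, $L$, $M$ and $D_{xy}$, form one of the quartics (say $Q_1$), and compute $\Delta=I_2^3-27I_3^2$ via \eqref{delta4} with $I_2=\alpha\omega-4\beta\delta+3\gamma^2$ and $I_3=\alpha\gamma\omega-\alpha\delta^2-\omega\beta^2-\gamma^3+2\beta\gamma\delta$. The only genuine obstacle is the sheer bulk of this last step: $D_{xy}$ is a degree-$6$ invariant requiring a $3\times3$ determinant of a Hessian, and $\Delta$ has degree $24$, so a naive symbolic expansion is unwieldy. To keep everything exact I would carry the arithmetic in the ring $\Z[q]/(q^4+1)$, since $q$ is a primitive eighth root of unity, so that each quantity is an exact cyclotomic integer and the nonvanishing of $\Delta(\ket{BL})$ is certified with no floating-point error. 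Once $\Delta(\ket{BL})\neq 0$ is established, the state is generically entangled and the theorem follows.
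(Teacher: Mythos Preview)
Your proposal is correct and follows essentially the same approach as the paper: compute the amplitudes of $\ket{BL}$ and then evaluate the degree-$24$ hyperdeterminant $\Delta$ via formula \eqref{delta4}, concluding genericity from $\Delta\neq 0$. The paper's proof is even terser than yours, simply reporting the outcome of this computation as $\Delta(\ket{BL})=-\dfrac{1}{2^{24}}$ without spelling out the intermediate steps you describe.
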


  \begin{proof}
Using formula \eqref{delta4} one computes $\Delta(\ket{BL})=-\dfrac{1}{2^{24}}\simeq -5,96\times 10^{-8}$.
    \end{proof}
    According to Miyake \cite{2002MW}, the value of $|\Delta|$ can be considered as a measure of entanglement. So a state with the most amount
of generic entanglement can be defined as a state that maximize $|\Delta|$. 
    In the 4-qubits case the maximal value of $|\Delta|$ is $\frac{1}{2^83^9}\simeq 1,98\times 10^{-7}$ \cite{2013CD} and the mean is around $1.32\times10^{-9}$ (value based on $10000$ random states \cite{2017Alsina}).
    So the amount of generic entanglement in the state $\ket{BL}$ is not maximal, although much higher than the mean.
    
    A few states are known to maximize $|\Delta|$ : $\ket{L}$
    \cite{2000HS,2012GW,2013CD}, $\ket{HD}$ \cite{2017Alsina} and   $\ket{M_{2222}}$ (Jaffali, phd thesis).
    Figure \ref{delta_max} gives the definition of these states. It seems interesting to know if it is possible, starting from a fully factorized state, to reach one of these 3 states by a 4-qubit $\cnot$ circuit. To answer this question we proceed as follows. We start from a fully factorized state
\begin{equation}
  \ket{F(u)}:=(a_0\ket{0}+a_1\ket{1})\otimes(b_0\ket{0}+b_1\ket{1})\otimes(c_0\ket{0}+c_1\ket{1})\otimes(d_0\ket{0}+d_1\ket{1})
\end{equation}
where $u:=[a_0,a_1,b_0,b_1,c_0,c_1,d_0,d_1]$ is a vector of  complex numbers. Then, for each circuit $C$ in $\XG[4]$, we compute $C\ket{F(u)}$ and we check that the equation
$C\ket{F(u)}=\ket{\psi}$ where $\ket{\psi}\in\{ \ket{L},\ket{HD},\ket{M_{2222}}\}$ has no solution. This can be done in a few seconds using Maple 2020 (X86 64 LINUX). The corresponding script can be downloaded at
\href{https://github.com/marcbataille/cnot-circuits/blob/master/entanglement/search_state.mpl}{\texttt{https://github.com/marcbataille/cnot-circuits}}. Hence the answer to our question is negative.

   Note that the question whether a state different from $\ket{L}$, $\ket{HD}$, or $\ket{M_{2222}}$ that has maximal generic entanglement can be produced under the same constraints
    (\textit{i.e.} factorized state plus 4-qubit $\cnot$ circuit) remains open, however. 

    \begin{figure}[h]
     \begin{equation}
        \ket{L}=\frac{1}{\sqrt3}(\ket{u_0}+\omega\ket{u_1}+\omega^2\ket{u_2}
      \end{equation}
      with: $\ket{u_0}=\frac{1}{2}(\ket{0000} + \ket{0011} + \ket{1100} + \ket{1111})$
      
      $\phantom{with: }\ket{u_1}=\frac{1}{2}(\ket{0000} - \ket{0011} - \ket{1100} + \ket{1111})$ 
      
      $\phantom{with: }\ket{u_2}=\frac{1}{2}(\ket{0101} + \ket{0110} + \ket{1001} + \ket{1010})$

      $\phantom{with: }\omega=\ee^{\frac{2\ii\pi}{3}}$
      \begin{equation}
        \ket{HD} =\frac{1}{\sqrt6}(\ket{0001} + \ket{0010} + \ket{0100} + \ket{1000} + \sqrt{2}\ket{1111}). 
      \end{equation}
      \begin{equation}
\ket{M_{2222}}=\frac{1}{\sqrt6}\ket{v_1}+\frac{\sqrt6}{4}\ket{v_1}+\frac{1}{\sqrt2}\ket{v_3}
      \end{equation}
      with: $\ket{v_1}=\frac{1}{\sqrt{6}}(\ket{0000} + \ket{0101} - \ket{0110} - \ket{1001} + \ket{1010} + \ket{1111})$
      
      $\phantom{with: }\ket{v_2}=\frac{1}{\sqrt{2}}(\ket{0011} + \ket{1100})$
      
      $\phantom{with: }\ket{v_3}=\frac{1}{\sqrt{2}}(-\ket{0001} + \ket{0010} - \ket{0100} + \ket{0111} + \ket{1000} - \ket{1011} + \ket{1101} - \ket{1110})$
      \caption{4-qubits states for which $|\Delta|$ is maximal. \label{delta_max}}
    \end{figure}
    \medskip

    We examine now whether it is possible to obtain a $\SLOCC$-equivalent to $\ket{\W_4}=\frac12(\ket{0001}+\ket{0010}+\ket{0100}+\ket{1000})$ when a $\cnot$ circuit acts on a fully factorized state. 
    The $\SLOCC$-orbit of $\ket{\W_4}$ belongs to the null cone, which is the algebraic variety defined by the vanishing of all invariants (\textit{i.e.} $B(\ket{\psi})=L(\ket{\psi})=M(\ket{\psi})=D_{xy}(\ket{\psi})=0$).
    The null cone contains 31 $\SLOCC$-orbits and the orbit of $\ket{\W_4}$ is charaterized, inside the null cone, by the evaluation of a vector of 8 polynomial covariants 
    $A, P_B ,P_C^1 ,P_C^2 ,P_D^1,P_D^2,P_F,P_L$ whose definitions have been relegated to Appendix \ref{cov4} (see \cite[Section~III]{2014HLT} for more details).
    More precisely, one has the following criterion :

    \begin{prop}\label{W4criterion}
Let $V_1:=[B,L,M,D_{xy}]$ and $V_2:=[A, P_B ,P_C^1 ,P_C^2 ,P_D^1,P_D^2,P_F,P_L]$ then $\ket{\psi}$ is in the $\SLOCC$-orbit of $\ket{\W_4}$ if and only if  $V1[\ket{\psi}]=[0,0,0,0]$
    and $V2[\ket{\psi}]=[1,1,1,1,0,0,0,0]$. 
      \end{prop}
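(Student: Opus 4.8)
The plan is to exploit the fact that the two binary signatures $V_1[\ket{\psi}]$ and $V_2[\ket{\psi}]$ are constant along $\SLOCC$-orbits, and then to reduce the statement to a finite check against the known classification of the null cone. First I would recall that $B$, $L$, $M$ and $D_{xy}$ are $\SLOCC$-invariants while $A$, $P_B$, $P_C^1$, $P_C^2$, $P_D^1$, $P_D^2$, $P_F$, $P_L$ are $\SLOCC$-covariants (this is their defining property; see Appendix \ref{cov4} and \cite{2014HLT}). Consequently, if $\ket{\phi}=(A_1\otimes A_2\otimes A_3\otimes A_4)\ket{\psi}$ with each $A_i\in\mathrm{SL}_2(\C)$, then each of these polynomials evaluated at $\ket{\phi}$ differs from its value at $\ket{\psi}$ only by a nonzero multiplicative factor, so its vanishing is preserved. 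Therefore the pair of binary vectors $\left(V_1[\ket{\psi}],V_2[\ket{\psi}]\right)$ depends only on the $\SLOCC$-orbit of $\ket{\psi}$.

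The only-if direction then follows by a direct computation on the representative. I would evaluate the four invariants and the eight covariants on $\ket{\W_4}=\frac12(\ket{0001}+\ket{0010}+\ket{0100}+\ket{1000})$ using the explicit formulas of Appendix \ref{cov4}. One checks that $B(\ket{\W_4})=L(\ket{\W_4})=M(\ket{\W_4})=D_{xy}(\ket{\W_4})=0$, so that $V_1[\ket{\W_4}]=[0,0,0,0]$ --- equivalently, $\ket{\W_4}$ lies in the null cone --- and that $A$, $P_B$, $P_C^1$, $P_C^2$ are nonzero while $P_D^1$, $P_D^2$, $P_F$, $P_L$ vanish, giving $V_2[\ket{\W_4}]=[1,1,1,1,0,0,0,0]$. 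By the orbit-invariance established above, every state $\SLOCC$-equivalent to $\ket{\W_4}$ carries exactly these two signatures.

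For the converse I would invoke the classification of the null cone. The condition $V_1[\ket{\psi}]=[0,0,0,0]$ asserts that all four generators of the invariant algebra vanish, which is precisely the definition of the null cone; thus $\ket{\psi}$ lies in one of the $31$ $\SLOCC$-orbits into which the null cone decomposes \cite{2014HLT}. Since $V_2$ is orbit-constant, it suffices to verify that among these $31$ orbits the value $[1,1,1,1,0,0,0,0]$ is attained by one and only one orbit, namely that of $\ket{\W_4}$; this reduces to reading off the table of covariant evaluations for the null-cone orbits in \cite[Section~III]{2014HLT}. The main obstacle is exactly this last step: one must be sure that the chosen eight covariants \emph{separate} the $\W_4$ orbit from the remaining thirty, i.e. that no other null-cone orbit shares the signature $[1,1,1,1,0,0,0,0]$. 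Everything else is either a definitional remark or a mechanical polynomial evaluation.
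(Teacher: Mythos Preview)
Your proposal is correct and is essentially the argument the paper has in mind: the paper does not give an explicit proof but states the proposition as a direct consequence of the null-cone classification in \cite[Section~III]{2014HLT}, and your three steps (orbit-invariance of the binary signatures, direct evaluation on $\ket{\W_4}$, and the finite check against the $31$ null-cone orbits) are exactly how that reference establishes such criteria. You are also right that the only non-mechanical point is the separation step, which is precisely what the cited table in \cite{2014HLT} provides.
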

The use of this criterion makes it possible to answer our initial question :

    \begin{theo} The $\SLOCC$-orbit of $\ket{\W_4}$ cannot be reached when $\XG[4]$ acts on a fully factorized state $\ket{F(u)}$.
    \end{theo}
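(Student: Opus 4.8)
The plan is to exploit the fact that every element of $\XG[4]$ acts on the computational basis as a linear permutation. By Theorem \ref{iso} and Relation \eqref{xtrans}, a circuit $C\in\XG[4]$ corresponds to the matrix $M=\Phi(C)\in\GL[4]$ and satisfies $C\ket{w}=\ket{Mw}$ for every $w\in\F_2^4$. Writing $\ket{F(u)}=\sum_{w\in\F_2^4}a_{w_0}b_{w_1}c_{w_2}d_{w_3}\ket{w}$, it follows that $C\ket{F(u)}=\sum_{w}a_{w_0}b_{w_1}c_{w_2}d_{w_3}\ket{Mw}$; that is, $C$ merely permutes the sixteen product amplitudes. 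The goal is to prove that, for every $C$ and every parameter vector $u$, the state $\ket{\psi}=C\ket{F(u)}$ fails the criterion of Proposition \ref{W4criterion}, i.e. one never has simultaneously $V_1[\ket{\psi}]=[0,0,0,0]$ and $V_2[\ket{\psi}]=[1,1,1,1,0,0,0,0]$.

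The first step is a symmetry reduction that makes the verification finite and small. Since $\ket{\W_4}$ is invariant under every permutation of its four qubits, each permutation matrix $P_\sigma$ (the image of a $\swap$ circuit $S_\sigma\in\SG$) maps the $\SLOCC$-orbit of $\ket{\W_4}$ onto itself. On the other hand $P_\tau\ket{F(u)}=\ket{F(u')}$ is again a fully factorized state, and as $u$ ranges over all parameters so does $u'$. Consequently the property ``there exists $u$ with $C\ket{F(u)}$ in the orbit of $\ket{\W_4}$'' depends only on the double coset $\SG\, C\, \SG$: replacing $C$ by $P_\sigma C$ preserves the orbit on the output side, while replacing $C$ by $CP_\tau$ merely relabels the factorized input. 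I would therefore fix one representative from each class of $\SG\backslash\XG[4]/\SG$, a short finite list since $\SG\simeq\SYM[4]$ has order $24$ while $|\XG[4]|=\np{20160}$.

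For each representative $C$ I would then compute the four generating invariants $B,L,M,D_{xy}$ and the eight covariants $A,P_B,P_C^1,P_C^2,P_D^1,P_D^2,P_F,P_L$ of $\ket{\psi}=C\ket{F(u)}$ as polynomials in the eight complex parameters, exactly as in Appendix \ref{cov4} and Proposition \ref{W4criterion}. The claim to establish, representative by representative, is that the vanishing of all four invariants ($V_1[\ket{\psi}]=[0,0,0,0]$, placing $\ket{\psi}$ in the null cone) is incompatible with the $\W_4$-pattern of the eight covariants: whenever $B=L=M=D_{xy}=0$ on $C\ket{F(u)}$, at least one of $P_D^1,P_D^2,P_F,P_L$ is forced to remain nonzero, or one of $A,P_B,P_C^1,P_C^2$ is forced to vanish. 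This infeasibility check can be carried out symbolically with the same computer-algebra procedure already used to rule out $\ket{L}$, $\ket{HD}$ and $\ket{M_{2222}}$, now applied to the covariant vector instead of to an exact state equality.

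The main obstacle is exactly this last step: unlike the equality test of the previous theorem, here one must solve, for each representative, a system of four polynomial equations together with a prescribed pattern of vanishing and non-vanishing covariants. The delicate point is to show that the null-cone conditions never leave enough freedom to realise the precise covariant signature of the $\W_4$-orbit, and in particular to separate $\ket{\W_4}$ from the states that are genuinely four-partite entangled yet, like $\ket{\GHZ_4}$, lie in the null cone, since those are the only serious competitors among the reachable states. Organising the representatives by the support pattern of $M$ (equivalently by the rank pattern of the single-qubit marginals of $\ket{\psi}$) should keep the case analysis manageable and isolate the few classes for which the full covariant computation is actually needed.
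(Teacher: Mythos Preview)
Your approach is essentially the same as the paper's: use Proposition~\ref{W4criterion} to characterize the $\W_4$-orbit and then verify, circuit by circuit via computer algebra, that the null-cone constraints $B=L=M=D_{xy}=0$ are incompatible with the covariant signature $V_2=[1,1,1,1,0,0,0,0]$. The paper does exactly this, looping over all $20160$ elements of $\XG[4]$: it first solves $V_1=0$, then among those solutions imposes $P_D^1=P_D^2=P_F=P_L=0$, and finally checks that $P_C^2$ always vanishes on what remains; this is implemented as a Maple script requiring a few hours.

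Your double-coset reduction $\SG\backslash\XG[4]/\SG$ is a sound and useful refinement that the paper does not make: the symmetry of $\ket{\W_4}$ under qubit permutations and the stability of factorized states under $\SG$ on the right are exactly what is needed to pass to double cosets, and this would cut the number of representatives from $20160$ to a few dozen. That is a genuine, if modest, improvement over the paper's brute-force loop.

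The only point to flag is that your text is explicitly a \emph{plan}: you describe the symbolic infeasibility check but do not report having carried it out, whereas the paper's proof consists precisely in reporting the outcome of that computation. So as written your proposal is correct in strategy and slightly sharper in its reduction, but it is not yet a proof until the representative-by-representative verification is actually executed (or until you supply a uniform algebraic argument replacing it, which neither you nor the paper provide).
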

    \begin{proof}
      We prove that, for any circuit $C\in\XG[4]$, the state $\ket{\psi(u)}=C\ket{F(u)}$ cannot be in the $\SLOCC$-orbit of $\ket{\W_4}$. The algorithm is the following.
      For each $C$ in $\XG[4]$ we solve the system $B(\ket{\psi(u)})=L(\ket{\psi(u)})=M(\ket{\psi(u)})=D_{xy}(\ket{\psi(u)})=0$. The solutions are parametrized vectors $\ket{\psi_1(u)},\dots,\ket{\psi_n(u)}$. Then for each solution $\ket{\psi_i(u)}$ we solve the system $P_D^1(\ket{\psi_i(u)})=P_D^2(\ket{\psi_i(u)}=P_F(\ket{\psi_i(u)})=P_L(\ket{\psi_i(u)}=0$ and
      for each solution $\ket{\psi_{ij}(u)}$ we compute $P_C^2$ : we check that the polynomial $P_C^2(\ket{\psi_{ij}(u)})$ is null for any $u,C,i,j$.
      We deduce from Proposition \ref{W4criterion} that $\ket{\psi(u)}$ is not $\SLOCC$-equivalent to $\ket{\W_4}$.
      The Maple script that implements this algorithm can be downloaded at
      \href{https://github.com/marcbataille/cnot-circuits/blob/master/entanglement/findW4.mpl}{\texttt{https://github.com/marcbataille/cnot-circuits}} and needs a few hours to be executed.
      \end{proof}

\section{Conclusion and perspectives}
The omnipresence and great significance of $\cnot$ gates in Quantum Computation was our main motivation to better understand the quantum circuits build with these gates. First we described the link between $\cnot$ circuits of $n$ qubits and a classical group,  namely $\GL=\SL$. From there we deduced some simplification rules and applied our results to optimization and reduction problems. In Section \ref{general} we proposed some polynomial heuristics to reduce circuits in the general case  and in Section \ref{subgroups} we described a few algorithms to optimize circuits in some special cases. Finally we studied some issues about entanglement and proposed simple constructions to produce some usefull entangled states. Optimization and entanglement are indeed two central topics in QIT since they are related to some important issues on the way to a reliable and functional quantum machine : optimization of circuits for scalable quantum computing and production of entanglement as a physical resource in quantum communication protocols.
We hope the results contained in this paper will contribute to a better understanding of $\cnot$ circuits. We believe that the subject is rich and deserves certainly further investigations. In what follows we try to sketch some directions that future works on
this subject could take.\medskip 

Regarding to the optimization problem  of $\cnot$ circuits, it seems to us that the diversity of situations and methods described in Section \ref{subgroups} tends to show that a polynomial optimization algorithm for the  general case, if it ever exists, will be hard to find out. In our opinion, a more realistic and feasible approach could be a mix of heuristics for the general case (as the Gauss-Jordan algorithm described in Section \ref{general} or the algorithm by Patel \textit{et.al.} described in \cite{2004PMH}) combined with a rich atlas of various methods to optimize or to reduce circuits in special cases (as the atlas we started to built in Section \ref{subgroups}). We will continue to investigate technics of reduction in future works. \medskip

The study of the emergence of entanglement in $\cnot$ circuits done in Section \ref{entanglement} highlights
the utility of these circuits as a practical tool to create entangled states, especially in the case of 3 or 4 qubits systems. When the number of qubits is greater than 4, it would be interesting to know whether it is possible to create generic entanglement as in the 4 qubits case (Theorem \ref{generic4}).
Unfortunately, from 5 qubits the hyperdeterminant is too huge to be computed in a suitable form. However its nullity can be tested thanks to its interpretation in terms of solution of a system of equations \cite[p.~445]{1992GKL} : if $A=\displaystyle\sum_{0\leq i,j,k,l,n\leq 1}\alpha_{ijkln}x_{i}y_{j}z_{k}t_{l}s_{n}$ is the ground form associated to the five qubits state $|\phi\rangle=\displaystyle\sum_{0\leq i,j,k,l,n\leq 1}\alpha_{ijkln}|ijkln\rangle$, the condition $\Delta(|\phi\rangle)=0$ means that the system
 \begin{equation}
	 S_{\phi}:=\{A={d\over dx_{0}}A={d\over dx_{1}}A={d\over dy_{0}}A={d\over dy_{1}}A=\cdots={d\over ds_{0}}A={d\over ds_{1}}A=0\} \end{equation}
 has a solution  $\hat x_{0},\hat x_{1},\hat y_{0},\hat y_{1},\dots,\hat s_{0},\hat s_{1}$ in the variables $x_{0},x_{1},y_{0},y_{1},\dots,s_{0},s_{1}$ such that $(\hat x_{0},\hat x_{1}),(\hat y_{0},\hat y_{1}),\dots,(\hat s_{0},\hat s_{1})\neq (0,0)$. Such a solution is called non trivial.
 Hence a possible approach to show that a state $\ket{\psi}$ is generically entangled when $n>4$ is to prove that the corresponding system has no solution apart from the trivial solutions.

 \section*{Acknowledgment}

 We would like to thank Bruno Schmitt (Ecole Polytechnique F\'ed\'erale de Lausanne) for pointing out that the conjecture $\mathrm{MaxT}(n)=3(n-1)$ for any $n$ was wrong and for mentioning the paper by Patel \textit{et.al.} \cite{2004PMH}.

\bibliographystyle{plain}
\bibliography{cNOT_v3}

\appendix

\section{Some covariant polynomials associated to $4$ qubit systems\label{cov4}}
In this section, we shall explain how to compute the polynomials which are used to determine the $\SLOCC$-orbit of $\ket{\W_4}$ inside the null cone (Section \ref{fourqubits}). We shall first recall the definition of the transvection of two multi-binary forms on the binary variables $x^{(1)}=(x^{(1)}_0,x^{(1)}_1), \dots, x^{(p)}=(x^{(p)}_0,x^{(p)}_1)$
\begin{equation}
(f,g)_{i_1,\dots,i_p}={\mathrm tr}\;\Omega^{i_1}_{x^{(1)}}\dots \Omega_{x^{(p)}}^{i_p}f(x'^{(1)},\dots,x'^{(p)})g(x''^{(1)},\dots,x''^{(p)}),
\end{equation}
where  $\Omega$ is the Cayley operator
\[
\Omega_x=\left|\begin{array}{cc}\partial\over \partial x'_0& \partial\over \partial x''_0
\\ \partial\over \partial x'_1& \partial\over \partial x''_1\end{array}\right|
\]
and $\rm tr$ sends each variables $x', x''$ on $x$ (erases $'$ and $''$). In \cite{2012HLT}, we give a list of generators of the algebra of covariant polynomials for $4$ qubits systems which are obtained by transvection from the ground form
\[{}
A=\sum_{i,j,k,\ell}\alpha_{i,j,k,l}x_{i}y_{j}z_{k}t_{\ell}.
\]
Here we give formulas for some of the polynomials which are used in the paper.
$\begin{array}{cc}&\\
\begin{array}{|c|c|}
\hline \mbox{Symbol}&\mbox{Transvectant}\\\hline
 B_{2200}&\frac12(A,A)^{0011}\\
B_{2020}&\frac12(A,A)^{0101}\\
B_{2002}&\frac12(A,A)^{0110}\\
B_{0220}&\frac12(A,A)^{1001}\\
B_{0202}&\frac12(A,A)^{1010}\\
B_{0022}&\frac12(A,A)^{1100}\\\hline
\end{array}\nonumber&\begin{array}{|c|c|}
\hline \mbox{Symbol}&\mbox{Transvectant}\\
\hline C^1_{1111}&(A,B_{2200})^{1100}+(A,B_{0022})^{0011}
\\\hline
 C_{3111}&\frac13\left((A,B_{2200})^{0100}+(A,B_{2020})^{0010}+(A,B_{2002})^{0001}\right)\\
 C_{1311}&\frac13\left((A,B_{2200})^{1000}+(A,B_{0220})^{0010}+(A,B_{0202})^{0001}\right)\\
 C_{1131}&\frac13\left((A,B_{2020})^{1000}+(A,B_{0220})^{0100}+(A,B_{0022})^{0001}\right)\\
 C_{1113}&\frac13\left((A,B_{2002})^{1000}+(A,B_{0202})^{0100}+(A,B_{0022})^{0010}\right)\\
 \hline
\end{array}\nonumber \end{array}
$

$\begin{array}{cc}\begin{array}{c}\begin{array}{|c|c|}\hline
\mbox{Symbol}&\mbox{Transvectant}\\\hline
D_{2200}&(A,C_{1111}^1)^{0011}\\
D_{2020}&(A,C^1_{1111})^{0101}\\
D_{2002}&(A,C^1_{1111})^{0110}\\
D_{0220}&(A,C^1_{1111})^{1001}\\
D_{0202}&(A,C_{1111}^1)^{1010}\\
D_{0022}&(A,C_{1111})^{1100}\\\hline
D_{4000}&(A,C_{3111})^{0111}\\
D_{0400}&(A,C_{1311})^{1011}\\
D_{0040}&(A,C_{1131})^{1101}\\
D_{0004}&(A,C_{1113})^{1110}\\\hline\end{array}\\ \\
\begin{array}{|c|c|}\hline
\mbox{Symbol}&\mbox{Transvectant}\\\hline
E^1_{3111}&(A,D_{2200})^{0100}+(A,D_{2020})^{0010}+(A,D_{2002})^{0001}\\
E^1_{1311}&(A,D_{2200})^{1000}+(A,D_{0220})^{0010}+(A,D_{0202})^{0001}\\
E^1_{1131}&(A,D_{2020})^{1000}+(A,D_{0220})^{0100}+(A,D_{0022})^{0001}\\
E^1_{1113}&(A,D_{2002})^{1000}+(A,D_{0202})^{0100}+(A,D_{0022})^{0010}\\\hline
\end{array}\end{array}&
	\begin{array}{|c|c|}\hline
\mbox{Symbol}&\mbox{Transvectant}\\\hline
F_{4200}&(A,E^1_{3111})^{0011}\\
F_{4020}&(A,E^1_{3111})^{0101}\\
F_{4002}&(A,E^1_{3111})^{0110}\\
F_{0420}&(A,E^1_{1311})^{1001}\\
F_{0402}&(A,E^1_{1311})^{1010}\\
F_{0042}&(A,E^1_{1131})^{1100}\\
F_{2400}&(A,E^1_{1311})^{0011}\\
F_{2040}&(A,E^1_{1131})^{0101}\\
F_{2004}&(A,E^1_{1113})^{0110}\\
F_{0240}&(A,E^1_{1131})^{1001}\\
F_{0204}&(A,E^1_{1113})^{1010}\\
F_{0024}&(A,E^1_{1113})^{1100}\\\hline
\end{array}\end{array}\nonumber
$\medskip

$\begin{array}{cc}
\begin{array}{|c|c|}\hline
\mbox{Symbol}&\mbox{Transvectant}\\\hline
G_{5111}&(A,F_{4002})^{0001}+(A,F_{4020})^{0010}+(A,F_{4200})^{0100}\\
G_{1511}&(A,F_{0402})^{0001}+(A,F_{0420})^{0010}+(A,F_{2400})^{1000}\\
G_{1151}&(A,F_{0042})^{0001}+(A,F_{0240})^{0100}+(A,F_{2040})^{1000}\\
G_{1115}&(A,F_{0204})^{0100}+(A,F_{0024})^{0010}+(A,F_{2004})^{1000}\\\hline
\end{array}\end{array}\nonumber\nonumber
$\medskip

$\begin{array}{|c|c|}\hline
\mbox{Symbol}&\mbox{Transvectant}\\\hline
H_{4200}&(A,G_{5111})^{1011}\\
H_{4020}&(A,G_{5111})^{1101}\\
H_{4002}&(A,G_{5111})^{1110}\\
H_{0420}&(A,G_{1511})^{1101}\\
H_{0402}&(A,G_{1511})^{1110}\\
H_{0042}&(A,G_{1151})^{1110}\\
H_{2400}&(A,G_{1511}^1)^{0111}\\
H_{2040}&(A,G_{1151})^{0111}\\
H_{2004}&(A,G_{1115}^1)^{0111}\\
H_{0240}&(A,G_{1151})^{1011}\\
H_{0204}&(A,G_{1115})^{1011}\\
H_{0024}&(A,G_{1115}^1)^{1101}\\\hline
\end{array}
\hspace{5mm}	\begin{array}{|c|c|}\hline
\mbox{Symbol}&\mbox{Transvectant}\\\hline
I_{5111}^1&(A,H_{4020})^{0010}+(A,H_{4200})^{0100}+(A,H_{4002})^{0001}\\
I_{1511}^1&(A,H_{0420})^{0010}+(A,H_{2400})^{1000}+(A,H_{4002})^{0001}\\
I_{1151}^1&(A,H_{0240})^{0100}+(A,H_{2040})^{1000}+(A,H_{0042})^{0001}\\
I_{1115}^1&(A,H_{0204})^{0100}+(A,H_{2004})^{1000}+(A,H_{0024})^{0010}
\\\hline
\end{array}
$
$\begin{array}{cc}&\\
\begin{array}{|c|c|}\hline
\mbox{Symbol}&\mbox{Transvectant}\\\hline
J_{4200}&(A,I_{5111}^{1})^{1011}\\
J_{4020}&(A,I_{5111}^{1})^{1101}\\
J_{4002}&(A,I_{5111}^{1})^{1110}\\
J_{0420}&(A,I_{1511}^{1})^{1101}\\
J_{0402}&(A,I_{1511}^{1})^{1110}\\
J_{0042}&(A,I_{1151}^{1})^{1110}\\
J_{2400}&(A,I_{1511}^{1})^{0111}\\
J_{2040}&(A,I_{1151}^{1})^{0111}\\
J_{2004}&(A,I_{1115}^{1})^{0111}\\
J_{0240}&(A,I_{1151}^{1})^{1011}\\
J_{0204}&(A,I_{1115}^{1})^{1011}\\
J_{0024}&(A,I_{1115}^{1})^{1101}\\\hline
\end{array}&\begin{array}{c}
\begin{array}{|c|c|}\hline
\mbox{Symbol}&\mbox{Transvectant}\\\hline
K_{5111}&=(A,J_{4200})^{0100}-(A,J_{4020})^{0010}+(A,J_{4002})^{0001}\\
K_{1511}&=(A,J_{2400})^{1000}-(A,J_{0420})^{0010}+(A,J_{0402})^{0001}\\
K_{1151}&=(A,J_{2040})^{1000}-(A,J_{0240})^{0100}+(A,J_{0042})^{0001}\\
K_{1115}&=(A,J_{2004})^{1000}-(A,J_{0204})^{0110}+(A,J_{0024})^{0010}\\\hline
\end{array}
\\ \vspace{2mm}
\begin{array}{|c|c|}\hline
\mbox{Symbol}&\mbox{Transvectant}\\\hline
L_{6000}&=(A,K_{5111})^{0111}\\
L_{0600}&=(A,K_{1511})^{1011}\\
L_{0060}&=(A,K_{1151})^{1101}\\
L_{0006}&=(A,K_{1115})^{1110}\\\hline\end{array}\end{array}\end{array}\nonumber$
\medskip

We use the following polynomials in order to characterize the $\ket{\W_4}$ $\SLOCC$-orbit:
\[
P_B:=B_{2200} + B_{2020} + B_{2002} + B_{0220} + B_{0202} + B_{0022},
\]
\[
 P_C^1:=C_{3111} + C_{1311} + C_{1131} + C_{1113},
\]
\[
P_C^2:=C_{3111}C_{1311}C_{1131}C_{1113},
\]
\[
P_D^1:=D_{4000} + D_{0400} + D_{0040} + D_{0004},
\]
\[
P_D^2:=D_{2200} + D_{2020} + D_{2002} + D_{0220} + D_{0202} + D_{0022},
\]
\[
P_F:=F_{2220}^1 + F_{2202}^1 + F_{2022}^1 + F_{0222}^1,
\]
\[
P_L:=L_{6000} + L_{0600} + L_{0060} + L_{0006}.
\]
\end{document}